\def\maketag@@@#1{\hbox{\m@th\normalfont\normalsize#1}}
  \pgfplotsset{compat=newest}
\newtheorem{theorem}{Theorem}
\newtheorem{assumption}{Assumption}
\newtheorem{lemma}{Lemma}
\newcommand{\tr}[1]{\operatorname{tr}\left \lbrack #1 \right \rbrack}
\newcommand{\res}[2]{\operatorname{Res}\left\lbrack #1,\text{ } #2 \right \rbrack}
\newcommand{\rea}[1]{\operatorname{Re}\left \lbrack #1 \right \rbrack}
\newcommand{\ima}[1]{\operatorname{Im}\left \lbrack #1 \right \rbrack}
\newcounter{MYtempeqncnt1}
\newcounter{music_second_order_equation_number}
\begin{document}

%
% paper title
% Titles are generally capitalized except for words such as a, an, and, as,
% at, but, by, for, in, nor, of, on, or, the, to and up, which are usually
% not capitalized unless they are the first or last word of the title.
% Linebreaks \\ can be used within to get better formatting as desired.
% Do not put math or special symbols in the title.
\title{Probability of Resolution of MUSIC and g-MUSIC: An Asymptotic Approach}
%
%
% author names and IEEE memberships
% note positions of commas and nonbreaking spaces ( ~ ) LaTeX will not break
% a structure at a ~ so this keeps an author's name from being broken across
% two lines.
% use \thanks{} to gain access to the first footnote area
% a separate \thanks must be used for each paragraph as LaTeX2e's \thanks
% was not built to handle multiple paragraphs
%
\author{David~Schenck,
        Xavier~Mestre,
        and~Marius~Pesavento% <-this % stops a space
\thanks{David Schenck and Marius Pesavento were supported by the DFG PRIDE Project PE 2080/2-1 and the German Academic Exchange Service (DAAD). Xavier Mestre was supported by the Spanish Government under grant RTI2018-099722-BI00.}% <-this % stops a space
\thanks{David Schenck and Marius Pesavento are with the Communication Systems Group, Technische Universit\"at Darmstadt, Darmstadt, Germany, (e-mail: \{schenck, pesavento\}@nt.tu-darmstadt.de).}% <-this % stops a space
\thanks{Xavier Mestre is with the Centre Tecnol\`{o}gic de Telecomunicacions de Catalunya, Castelldefels, Spain, (e-mail: xavier.mestre@cttc.cat).}% <-this % stops a space
\thanks{Parts of this work were published at \textit{IEEE International Conference on Acoustics, Speech, and Signal Processing (ICASSP 2021)}.}%
%\thanks{Parts of this work have been published in \textit{IEEE International Conference on Acoustics, Speech, and Signal Processing (ICASSP 2021)} and in \textit{European Signal Processing Conference (EUSIPCO 2021)}.}% <-this % stops a space
%\thanks{Manuscript received \today; revised \today.}%
}

\maketitle
\begin{acronym}[PR-DML]
	 \acro{ULA}{Uniform Linear Array}
	 \acro{DoA}{Direction-of-Arrival}
	 \acro{FoV}{Field of View}
	 \acro{MUSIC}{Multiple Signal Classification}
	 \acro{ESPRIT}{Estimation of Signal Parameters via Rotational Invariance Technique}
	 \acro{DML}{Deterministic Maximum Likelihood}
	 \acro{SML}{Stochastic Maximum Likelihood}
	 \acro{WSF}{Weighted Subspace Fitting}
	 \acro{CF}{Covariance Fitting}
	 \acro{FCF}{Full Covariance Fitting}
	 \acro{PR}{Partial Relaxation}
	 \acro{PR-DML}{Partially Relaxed Deterministic Maximum Likelihood}
	 \acro{PR-CF}{Partially Relaxed Covariance Fitting}
	 \acro{PR-FCF}{Partially Relaxed Full Covariance Fitting}
	 \acro{MDL}{Minimum Description Length}
	 \acro{SNR}{Signal-to-Noise Ratio} 
	 \acro{LS}{Least Square}
	 \acro{ML}{Maximum Likelihood}
	 \acro{LR}{Likelihood Ratio}
	 \acro{FWE}{Familywise Error-Rate}
	 \acro{FDR}{False Discovery Rate}
	 \acro{RMT}{Random Matrix Theory}
	 \acro{CRB}{Cramer-Rao Bound}
	 \acro{RMSE}{Root-Mean-Squared-Error}
	 \acro{DCT}{Dominated Convergence Theorem}
	 \acro{pdf}{probability density function}
	 \acro{cdf}{cumulative distribution function}
	 \acro{MSE}{Mean Square Error}
	 \acro{CLT}{Central Limit Theorem}
\end{acronym}

\thinmuskip = 0mu
\medmuskip = 0mu
\thickmuskip=0mu
\begin{abstract}
	In this article, the outlier production mechanism of the conventional \ac{MUSIC} and the g-\ac{MUSIC} \ac{DoA} estimation technique is investigated using tools from \ac{RMT}. A general \ac{CLT} is derived that allows to analyze the asymptotic stochastic behavior of eigenvector-based cost functions in the asymptotic regime where the number of snapshots and the number of antennas increase without bound at the same rate. Furthermore, this \ac{CLT} is used to provide an accurate prediction of the resolution capabilities of the \ac{MUSIC} and the g-\ac{MUSIC} \ac{DoA} estimation method. The finite dimensional distribution of the \ac{MUSIC} and the g-\ac{MUSIC} cost function is shown to be asymptotically jointly Gaussian distributed in this asymptotic regime.
\end{abstract}

\begin{IEEEkeywords}
	\ac{MUSIC}, g-\ac{MUSIC}, \acs{DoA} estimation, central limit theorem, random matrix theory, probability of resolution, performance analysis.
\end{IEEEkeywords}

\section{Introduction}		
	Due to the vast variety of use cases, \ac{DoA} estimation belongs to the most relevant research areas in signal processing. The applications range from radar and sonar to electric surveillance, seismology, astronomy and mobile communications \cite{a12, a8, a13, a14}. Multiple \ac{DoA} estimation techniques have been proposed in the literature. Among them, subspace-based \ac{DoA} estimation techniques which are known to provide a good compromise between computational complexity and \ac{DoA} estimation accuracy. This is mainly because these methods avoid multidimensional searches while providing relatively good performance. One of the most popular examples of subspace-based \ac{DoA} estimation methods is \ac{MUSIC} \cite{a1}, which exploits the orthogonality between signal and noise subspaces by finding the \acp{DoA} that achieve the highest orthogonality between the array signature and the noise subspace of the sample covariance matrix. It is well known \cite{a69} that the noise space spanned by the sample covariance matrix is not a consistent estimate of the true one when both the sample size and the number of array elements become large but are still comparable in magnitude. Therefore, it is possible to come up with a refined \ac{MUSIC} algorithm, usually referred to as g-\ac{MUSIC} \cite{a69}, which replaces the noise sample eigenvectors by a consistent estimate of the true noise subspace. This provides a significant improvement in terms of both accuracy and resolution in the low sample size scenario, whereby the number of snapshots and the number of array elements have the same order of magnitude. In any case, both \ac{MUSIC} and g-\ac{MUSIC} are subspaced-based algorithms, and as such both suffer from the so-called breakdown or threshold effect. This effect is characterized by a rapid loss of resolution capabilities due to the systematic appearance of outliers in the \ac{DoA} estimates when either the sample size (in snapshots per antennas) or the \ac{SNR} falls below a certain threshold. When this occurs, merging signal extrema in the null-spectrum of the \ac{DoA} estimator leads to the appearance of outliers in the \ac{DoA} estimates, as in this case at least one of the deepest local minima in the null-spectrum is not associated with a true source. The presence of outliers causes a severe performance breakdown in terms of \ac{DoA} estimation accuracy which has great practical implications \cite{a52}. Moreover, the threshold effect and thus the production of outliers in the \ac{DoA} estimates is not captured by standard statistical performance bounds such as the \ac{CRB} and therefore requires an estimator specific analysis. 
	
	The objective of this paper is to analytically characterize the breakdown effect (thus the outlier production mechanism) in the threshold region of \ac{MUSIC} and g-\ac{MUSIC} by studying the resolution capabilities of both \ac{DoA} estimation algorithms. More specifically, we investigate the probability that these algorithms resolve two close sources in the asymptotic regime where both the sample size and the number of array elements tend to infinity at the same rate. Up to now, the literature has mainly focused on the performance characterization of the conventional \ac{MUSIC} method in terms of both accuracy and resolution probability \cite{a70,a71,a57,a6,a48,a72,a73}. Most of the performance analyses rely on conventional large sample-size asymptotics, where the number of array elements is assumed to be fixed while the number of snapshots grows without bound. This asymptotic regime is not very suitable for characterizing the threshold performance, since loss of resolution occurs e.g. when the number of snapshots is not much larger (or even smaller) than the number of antennas. For this reason, we propose here to analyze this outlier production mechanism under the more appropriate setting where these two quantities are large but comparable in magnitude. This is indeed the setting that was considered in \cite{a74} to investigate the consistency of the \ac{DoA} estimates obtained through \ac{MUSIC} or g-\ac{MUSIC}.
	
	We extend the work in \cite{a74} and analyze the statistical fluctuations of the \ac{MUSIC} and g-\ac{MUSIC} cost functions, which are the key to understanding the outlier production mechanism that leads to loss of resolution. The approach is similar to the one followed in \cite{a75,a65,a66} to study the resolution capabilities of the recently introduced \ac{PR-DML} algorithm. It also shares some ideas with \cite{a63,a39}, where the resolution probability of the conventional \ac{DML} and the \ac{SML} method are derived through an asymptotic characterization of stochastic fluctuations of the corresponding multidimensional cost functions. In comparison to the \ac{PR-DML}, \ac{DML} and \ac{SML} cost functions, the \ac{MUSIC} and g-\ac{MUSIC} cost functions involve eigenvectors of a random matrix and therefore require a fundamentally new asymptotic analysis. Furthermore, eigenvector-based cost functions are shown to be asymptotically jointly Gaussian distributed and a general \ac{CLT} is provided that allows to characterize the asymptotic fluctuations of such cost functions. The asymptotic second order behavior is expressed as a double contour integral which is solved for both \ac{MUSIC} and g-\ac{MUSIC}. Finally, the derived asymptotic distribution of both cost functions is used to predict the probability of resolution of both subspace-based \ac{DoA} estimation methods. In comparison to \cite{a85}, we do not only analyze the asymptotic stochastic behavior of the g-\ac{MUSIC} cost function but also of the conventional \ac{MUSIC} cost function. Additionally, the detailed proofs for the second order asymptotic behavior of both cost functions are provided.
	
	The original contributions of this article can be summarized as follows:
	\begin{itemize}
		\item We derive a \ac{CLT} which states that eigenvector-based cost functions are asymptotically jointly Gaussian distributed for Gaussian distributed observations in the asymptotic regime where both sample size and array dimension go to infinity at the same rate. A general theorem is provided that completely specifies the asymptotic behavior of eigenvector-based cost functions in terms of (i) asymptotic deterministic behavior and (ii) fluctuations around this deterministic equivalent.
	    \item We determine a set of conditions that guarantee that both \ac{MUSIC} and g-\ac{MUSIC} cost functions fluctuate around their asymptotic deterministic equivalents in this asymptotic regime. 
		\item We particularize the above results to the \ac{MUSIC} and g-\ac{MUSIC} cost functions and derive a closed-form expression for the asymptotic probability of resolution of both \ac{DoA} estimation methods. These expressions can be used to determine the probability of resolving closely spaced sources for a given array geometry, a sample volume per antenna and a scenario configuration. 
	\end{itemize}
	
	The rest of the paper is organized as follows. Section \ref{sec: signal_model} introduces the signal model that is assumed in the paper, while some important \ac{RMT} fundamentals are then introduced in Section \ref{sec: rmt_preliminaries}. The conventional \ac{MUSIC} and g-\ac{MUSIC} \ac{DoA} estimation techniques are presented in Section \ref{sec: music_doa_estimation}. The asymptotic stochastic analysis of these two cost functions and the corresponding probability of resolution of the associated methods are given in Section \ref{sec: asymptotic_behavior_pr_dml_cost_function}. Section \ref{sec: proof_second_order} is devoted to the derivation of the asymptotic deterministic behavior of these two cost functions and the characterization of the corresponding fluctuations around it. Finally, these theoretical derivations are then validated by numerical experiments in Section \ref{sec: simulation_results} and Section \ref{sec: conclusion} concludes the paper.
	
	\paragraph*{Notation} Matrices are denoted by boldface uppercase letters $\boldsymbol{A}$, vectors are denoted by boldface lowercase letters $\boldsymbol{a}$, and scalars are denoted by regular letters $a$. Symbols $(\cdot)^\text{T}$, $(\cdot)^\text{H}$, $(\cdot)^{-1}$ and $(\cdot)^{1/2}$ denote the transpose, Hermitian transpose, inverse and the positive square root of the matrix argument, which is assumed positive semidefinite. The expectation operator is represented by $\mathbb{E}\lbrack \cdot \rbrack$. The trace operator is denoted by $\tr{\cdot}$, and the residue of a holomorphic complex function $f(\cdot)$ evaluated at $b$ is denoted by $\res{f(\cdot)}{b}$. $\| \cdot \|_{\text{F}}$ denotes the Frobenius norm, and $\| \cdot \|$ is the spectral norm of the matrix argument.

\section{Signal Model}
	\label{sec: signal_model}
	Consider a sensor array that is equipped with $M$ sensors and $K$ impinging narrowband signals with \acp{DoA} $\boldsymbol{\theta} = \lbrack \theta_{1}, \dots, \theta_{K}\rbrack^{\operatorname{T}}$ that lie within the field of view $\Theta$ of the array. The number of sources $K$ is assumed to be known and smaller than the number of sensors $K < M$. The full-rank steering matrix is given by $\boldsymbol{A}(\boldsymbol{\theta}) = \lbrack \boldsymbol{a}(\theta_{1}),\dots, \boldsymbol{a}(\theta_{K}) \rbrack^{\operatorname{T}} \in \mathbb{C}^{M \times K}$ where $\boldsymbol{a}(\theta_{i})\in \mathbb{C}^{M}$ denotes the steering vector associated to the $i$-th source, which is assumed to be located at $\theta_{i}$. Without loss of generality, we will assume that the array steering vector is normalized to have unit norm, that is $\|\boldsymbol{a}(\theta)\| = 1$. The received baseband signal $\boldsymbol{y}(n) = \lbrack y_{1}(n), \dots, y_{M}(n) \rbrack^{\operatorname{T}} \in \mathbb{C}^{M}$ at time instant $n$ is modeled as:
	\begin{equation}
		\boldsymbol{y}(n) = \boldsymbol{A}(\boldsymbol{\theta}) \boldsymbol{s}(n) + \boldsymbol{n}(n) \text{ for } n=1,\dots,N
		\label{eq: received_signal}
	\end{equation}
	where $\boldsymbol{s}(n)=\lbrack s_{1}(n),\dots,s_{K}(n)\rbrack^{\operatorname{T}} \in \mathbb{C}^{K}$ denotes the transmitted baseband source signal and $\boldsymbol{n}(n)$ represents the sensor noise. Assuming that both signal and noise vectors are statistically independent, zero-mean and circularly symmetric Gaussian vectors, the observation  $\boldsymbol{y}(n)$ in \eqref{eq: received_signal} can be modeled as a zero-mean circularly symmetric Gaussian vector with covariance matrix $\boldsymbol{R} \in \mathbb{C}^{M\times M}$ given by
	\begin{equation}
		\boldsymbol{R} = \mathbb{E} \left \lbrack \boldsymbol{y}(n)\boldsymbol{y}(n)^{\operatorname{H}}\right \rbrack = \boldsymbol{A}\boldsymbol{R}_{\text{s}}\boldsymbol{A}^{\operatorname{H}} + \sigma^{2}\boldsymbol{I}_{M}
		\label{eq: covariance_matrix_received_signal}
	\end{equation}
	where $\boldsymbol{R}_{\text{s}} = \mathbb{E}\lbrack \boldsymbol{s}(n)\boldsymbol{s}(n)^{\operatorname{H}}\rbrack \in \mathbb{C}^{K\times K}$ is the covariance matrix of the transmitted source signal $\boldsymbol{s}(n)$ and $\sigma^{2}\boldsymbol{I}_{M}$ denotes the noise covariance. Let us consider the eigendecomposition of this covariance matrix, which can be expressed as
	\begin{equation}
		\boldsymbol{R} = \sum_{m=1}^{\bar{M}} \gamma_{m}\boldsymbol{E}_{m}\boldsymbol{E}_{m}^{\operatorname{H}} = \boldsymbol{E} \begin{bmatrix}
			\gamma_{1}\boldsymbol{I}_{K_{1}} & & \\ & \ddots & \\ & & \gamma_{\bar{M}}\boldsymbol{I}_{K_{\bar{M}}}
		\end{bmatrix} \boldsymbol{E}^{\operatorname{H}}.
		\label{eq: eigenvalue_decomposition_distinct}
	\end{equation}
	Here, $\bar{M} \leq M$ denotes the total number of distinct eigenvalues, which are sorted in ascending order as $\gamma_{1} < \gamma_{2} < \dots < \gamma_{\bar{M}}$, and $K_{m}$ denotes the multiplicity of $\gamma_{m}$, $m=1,\dots,\bar{M}$. The eigenvectors associated to $\gamma_{m}$ are grouped into an $M \times K_m$ matrix $\boldsymbol{E}_{m}$ of orthogonal columns that span the corresponding subspace and we let $\boldsymbol{E} = \lbrack \boldsymbol{E}_{1}, \dots, \boldsymbol{E}_{\bar{M}} \rbrack \in \mathbb{C}^{M \times M}$. 
	
    We also consider here the sample covariance matrix 
	\begin{equation}
		\hat{\boldsymbol{R}} = \frac{1}{N} \sum_{n=1}^{N} \boldsymbol{y}(n)\boldsymbol{y}(n)^{\operatorname{H}} = \frac{1}{N} \boldsymbol{Y}\boldsymbol{Y}^{\operatorname{H}}
		\label{eq: sample_covariance_matrix_received_signal}
	\end{equation} 
	which has eigendecomposition given by 
	\begin{equation}
		\hat{\boldsymbol{R}} = \sum_{m=1}^{M} \hat{\lambda}_{m} \hat{\boldsymbol{e}}_{m} \hat{\boldsymbol{e}}_{m}^{\operatorname{H}} = \hat{\boldsymbol{E}} \begin{bmatrix}
			\hat{\lambda}_{1} & & \\ & \ddots & \\ & & \hat{\lambda}_{M}
		\end{bmatrix} \hat{\boldsymbol{E}}^{\operatorname{H}}
		\label{eq: sample_covariance_matrix_received_signal_eigendecomposition}
	\end{equation}
	where now $\hat{\lambda}_{1} \leq \hat{\lambda}_{2} \leq \dots \leq \hat{\lambda}_{M}$ are the sample eigenvalues, $\hat{\boldsymbol{e}}_{m}$ denotes the eigenvector associated to $\hat{\lambda}_{m}$, $m=1,\dots,M$ and $\hat{\boldsymbol{E}}= \lbrack \hat{\boldsymbol{e}}_{1},\dots,\hat{\boldsymbol{e}}_{M}\rbrack$. These sample eigenvalues are almost surely different, unless $N<M$, in which case we have a zero sample eigenvalue of multiplicity $M-N$. In this situation, $\hat{\boldsymbol{e}}_{m}$ for $m=1,\ldots,M-N$, span the subspace associated to the zero sample eigenvalue. In the following section, we provide some interesting properties of the asymptotic behavior of sample eigenvalues and eigenvectors based on \ac{RMT} results that will be extensively used throughout the paper. 

\section{\acl{RMT} Preliminaries} 
	\label{sec: rmt_preliminaries}
	Under the above statistical assumptions, the sample covariance matrix $\hat{\boldsymbol{R}}$ in \eqref{eq: sample_covariance_matrix_received_signal} is a consistent estimator of the true one  $\boldsymbol{R}$, provided that the number of antennas $M$ is kept fixed while the number of samples grows without bound, $N \to \infty$. More formally, one can easily show that $\| \hat{\boldsymbol{R}} - \boldsymbol{R}\| \rightarrow 0$ with probability one under these asymptotic assumptions. As pointed out before, this asymptotic regime is often inappropriate in the sense that both $M,N$ are typically comparable in magnitude. In this paper, we will therefore consider an asymptotic regime where these two quantities tend to infinity at the same rate.
	\begin{assumption}
		\label{as:asymptotic_growth}
		The number of samples $N$ is a function of the number of antennas $M$, that is $N = N(M)$ and $N(M) \rightarrow \infty$ as $M \rightarrow \infty$ in a way that $M/N(M) \rightarrow c$ for some constant $0 < c < \infty$.
	\end{assumption}
	It turns out that under Assumption \ref{as:asymptotic_growth} the sample covariance matrix $\hat{\boldsymbol{R}}$ in \eqref{eq: sample_covariance_matrix_received_signal} is not a consistent estimate of the true one in \eqref{eq: covariance_matrix_received_signal}, in the sense that $\|\boldsymbol{R} - \hat{\boldsymbol{R}}\| \nrightarrow 0$. In particular, this shows that the eigenvalues and eigenvectors of $\hat{\boldsymbol{R}}$ do not really converge to the eigenvalues and eigenvectors of $\boldsymbol{R}$ when the dimensions of these matrices increase without bound. However, it is well known that, under some additional assumptions, the empirical eigenvalue distribution of the sample covariance matrix $\hat{\boldsymbol{R}}$ in \eqref{eq: sample_covariance_matrix_received_signal} still shows a deterministic behavior in the large-dimensional regime. In order to formalize this observation, we need to introduce some additional technical assumptions. 
	
	\begin{assumption}
		\label{as: rewritten_sample_covariance_matrix}
		The observations $\boldsymbol{y}(n)$ in \eqref{eq: received_signal}, $n =  1,\ldots,N$ form a collection of independent circularly symmetric complex Gaussian vectors with zero-mean and covariance matrix $\boldsymbol{R}$, bounded in spectral norm. In particular, the quantities $\bar{M}$, $\gamma_ {1},\ldots,\gamma_{\bar{M}}$ and $K_{1},\ldots,K_{\bar{M}}$ corresponding to the eigendecomposition of $\boldsymbol{R}$ in \eqref{eq: eigenvalue_decomposition_distinct} may vary with $M$, but $\sup_{M} \gamma_{\bar{M}}<\infty$.
	\end{assumption}
	
	We remark that the received signal $\boldsymbol{y}(n)$ for $n=1,\dots,N$ in \eqref{eq: received_signal} and the covariance matrix $\boldsymbol{R}$ in \eqref{eq: covariance_matrix_received_signal} satisfy Assumption \ref{as: rewritten_sample_covariance_matrix} by default as long as the \ac{SNR} is bounded. Under the above technical assumptions, the eigenvalues of the sample covariance matrix $\hat{\boldsymbol{R}}$ in \eqref{eq: sample_covariance_matrix_received_signal} are asymptotically almost surely distributed as a non-random measure with density $q_{M}(x)$ \cite{a41,a42,a43,a20}. Informally stated, the histogram of the eigenvalues of the sample covariance matrix tends to be shaped around $q_{M}(x)$ as $M,N$ grow large, with probability one. This deterministic density is therefore the key to understanding the asymptotic behavior of the sample covariance matrix. In particular, one can show that, when $N > M$, the density $q_M(x)$ has compact support consisting of the union of $S$ closed intervals, namely $\mathcal{S} = \lbrack x_{1}^{-}, x_{1}^{+}\rbrack \cup \dots \cup \lbrack x_{S}^{-}, x_{S}^{+} \rbrack$ \cite{a43,a44,a45}. When $N \leq M$, the same description is valid but with the addition of the zero eigenvalue, i.e. $\{0\}$. The procedure to obtain $\mathcal{S}$ is as follows (see \cite[Proposition 1]{a20} and also \cite{a44}). Consider the following function of the true covariance matrix
	 \begin{equation}
	    \label{eq: function_support}
	     \Psi(\omega) = \frac{1}{N} \tr{\boldsymbol{R}^2 \left( \boldsymbol{R} - \omega \boldsymbol{I}_M \right)^{-2}}.
	 \end{equation}
	 The polynomial equation $ \Psi(\omega) = 1$ has $2S$ solutions counting multiplicities, which can be denoted as  $\{\omega_{1}^{-}, \omega_{1}^{+}, \ldots, \omega_{S}^{-},\omega_{S}^{+}\}$. We then define $x_{s}^{\pm} = z(\omega_{s}^{\pm})$, $s = 1, \ldots, S$, where $z(\omega)$ is the transformation
	 \begin{equation}
	 \label{eq: mapping_omega_to_z}
	     z(\omega) = \omega \left( 1- \frac{1}{N} \tr{\boldsymbol{R}\left( \boldsymbol{R} - \omega \boldsymbol{I}_M \right)^{-1}} \right).
	 \end{equation}
	 Each eigenvalue of $\boldsymbol{R}$ can be univocally associated to one of the $S$ intervals, in the sense that there exists a single interval $\lbrack \omega_s^{-}, \omega_s^{+} \rbrack$ that contains that particular eigenvalue. On the other hand, given a certain covariance matrix $\boldsymbol{R}$, the number of intervals of the support $S$ increases with increasing $N$. Furthermore, there exists a minimum number of samples per antenna that guarantees that a certain interval $\lbrack x_{s}^-, x_{s}^+ \rbrack$ is associated to a single eigenvalue of $\boldsymbol{R}$. In this paper, we will strongly rely on the assumption that the lowest eigenvalue of $\boldsymbol{R}$ is the only eigenvalue that belongs to the interval $\lbrack \omega_{1}^-, \omega_{1}^+ \rbrack$ (see Assumption \ref{as: subspace_separation} below). This will allow us to analyze the behavior of subspace \ac{DoA} detection techniques in large dimensional arrays.
	
	Having reviewed some basic notions on the asymptotic spectral behavior of the sample covariance matrix, we are now in the position of introducing the \ac{MUSIC} and g-\ac{MUSIC} subspace \ac{DoA} estimators in the large antenna regime. 

\section{\acs{MUSIC} and g-\acs{MUSIC} \acl{DoA} Estimation} 
	\label{sec: music_doa_estimation}
	
	The main idea behind the conventional \ac{MUSIC} estimator is to exploit the fact that the eigenvectors associated to the noise subspace $\boldsymbol{E}_{1}$ of the true covariance matrix $\boldsymbol{R}$ in \eqref{eq: eigenvalue_decomposition_distinct} are orthogonal to the steering vectors evaluated at the true \acp{DoA} of the received signals. Hence, we can consider a cost function 
	\begin{equation}
		\bar{\eta}_{\text{g}}(\theta) = \boldsymbol{a}(\theta)^{\operatorname{H}} \boldsymbol{E}_{1} \boldsymbol{E}_{1}^{\operatorname{H}} \boldsymbol{a}(\theta).
		\label{eq: g_music_first_order}
	\end{equation}	
	where we recall from \eqref{eq: eigenvalue_decomposition_distinct} that $\boldsymbol{E}_{1}$ contains the $M-K$ eigenvectors associated to the smallest eigenvalue of $\boldsymbol{R}$. The \acp{DoA} of the received signals can be determined as the $K$ distinct values of $\theta$ at which $\bar{\eta}_{\text{g}}(\theta) = 0$ \cite{a69}. Since the noise subspace $\boldsymbol{E}_{1}$ is unknown in practice, the conventional \ac{MUSIC} cost function is obtained by replacing the noise subspace $\boldsymbol{E}_{1}$ in \eqref{eq: g_music_first_order} with the noise eigenvectors of the sample covariance matrix, namely
	\begin{equation}
		\hat{\eta}_{\text{c}}(\theta) = \boldsymbol{a}(\theta)^{\operatorname{H}} \sum_{m=1}^{M-K} \hat{\boldsymbol{e}}_{m}\hat{\boldsymbol{e}}_{m}^{\operatorname{H}} \boldsymbol{a}(\theta)
		\label{eq: music_cost_function}
	\end{equation}
	and the \acp{DoA} are determined by the $K$ distinct values in $\theta$ where $\hat{\eta}_{\text{c}}(\theta)$ in \eqref{eq: music_cost_function} attains its $K$ deepest local minima \cite{a1}.

	Now, as a consequence of the fact that under Assumption \ref{as: rewritten_sample_covariance_matrix} $\|\boldsymbol{R} - \hat{\boldsymbol{R}}\| \nrightarrow 0$ one can generally expect that $|\hat{\eta}_{\text{c}}(\theta) - \eta_{\text{g}}(\theta)| \nrightarrow 0$. With the help of \ac{RMT} tools, one can however find a modification of the \ac{MUSIC} cost function in (\ref{eq: music_cost_function}) that is indeed consistent in this large dimensional regime. This is usually referred to as $M,N$-consistency, as opposed to the more conventional concept of $N$-consistency, which assumes a constant $M$. The modified cost function is usually referred to as g-\ac{MUSIC} \cite{a69} and can be built from a proper combination of signal and noise subspaces, that is 
	\begin{equation}
		\hat{\eta}_{\text{g}}(\theta) = \sum_{m=1}^{M} \phi(m) \boldsymbol{a}(\theta)^{\operatorname{H}}\hat{\boldsymbol{e}}_{m}\hat{\boldsymbol{e}}_{m}^{\operatorname{H}}\boldsymbol{a}(\theta)
		\label{eq: g_music_cost_function}
	\end{equation}
	with real-valued weights
	\begin{equation*}
		\phi(m) = \begin{cases}
			1+ \sum_{k=M-K+1}^{M} \left( \frac{\hat{\lambda}_{k}}{\hat{\lambda}_{m} - \hat{\lambda}_{k}} - \frac{\hat{\mu}_{k}}{\hat{\lambda}_{m}-\hat{\mu}_{k}}\right), & m \leq M-K\\
			-\sum_{k=1}^{M-K} \left( \frac{\hat{\lambda}_{k}}{\hat{\lambda}_{m} - \hat{\lambda}_{k}}-\frac{\hat{\mu}_{k}}{\hat{\lambda}_{m}-\hat{\mu}_{k}}\right), & m > M-K
		\end{cases}
	\end{equation*}
	where $\hat{\mu}_{1} \leq \hat{\mu}_{2} \leq \dots \leq \hat{\mu}_{M}$ are the real-valued solutions (counting multiplicities) to the following equation in $\hat{\mu}$ 
	\begin{equation*}
		\frac{1}{N}\sum_{k=1}^{M} \frac{\hat{\lambda}_{k}}{\hat{\lambda}_{k}-\hat{\mu}} = 1.
	\end{equation*}
	The \acp{DoA} of the g-\ac{MUSIC} estimator are determined by searching for the $K$ deepest local minima of the cost function $\hat{\eta}_{\text{g}}(\theta) $ in \eqref{eq: g_music_cost_function}. In order to justify the superiority of this \ac{DoA} estimation algorithm, we need to impose separation between the signal and noise subspaces in the asymptotic covariance $\boldsymbol{R}$. This is more formally stated in the following assumption. 
	
	\begin{assumption}
		\label{as: subspace_separation}
	We have $0 < \inf_M K_1/M \leq \sup_M K_1/M < 1 $ and $\inf_M \gamma_1 > 0 $. The eigenvalue $\gamma_{1}$ is the unique eigenvalue that is associated to the cluster with support $\lbrack x_{1}^{-}, x_{1}^{+} \rbrack$ for all $M,N$ sufficiently large. Furthermore, there exists a deterministic $\varrho$ and some small $\epsilon>0$, both independent of $M$, such that ${\sup}_{M} x_{1}^{+} +\epsilon < \varrho <{\inf}_{M} x_{2}^{-} -\epsilon$. 
	\end{assumption}
	The first part of Assumption \ref{as: subspace_separation} guarantees that the noise subspace $\boldsymbol{E}_{1}$ does not vanish in the large dimensional regime when the number of antennas grows to infinity. Hence, the number of sources $K$ is allowed to increase with $M$ but in a way that $\limsup_{M} K/M <1$. Furthermore, the noise power ($\gamma_1 = \sigma^2$) can also vary with the number of antennas, as long as it does not vanish with $M \to \infty$. The second part of the assumption ensures that the eigenvalue cluster associated with the noise eigenvalue $\gamma_{1}$, denoted by $[x_{1}^{-} , x_{1}^{+} ]$ is separated from the clusters of adjacent eigenvalues in the asymptotic eigenvalue distribution of the sample covariance matrix $\hat{\boldsymbol{R}}$. The existence of separation in the asymptotic eigenvalue distribution between the cluster associated to the smallest eigenvalue $\gamma_{1}$ and the clusters of adjacent eigenvalues depends on the \acp{DoA}, the \ac{SNR}, as well as the number of snapshots $N$ and the number of sensors $M$ and can be verified using the procedure in \cite[Section II]{a20} or \cite[Section II-A]{a43}. With the aid of this separability assumption, we are now ready to describe the asymptotic behavior of both subspace-based cost functions.
	
		\begin{theorem}
			\label{thm: first_order}
			Under Assumptions\footnote{Gaussianity is not necessary for this result, and can be replaced by a milder condition on the fourth order moments of the observations.} \ref{as:asymptotic_growth}-\ref{as: subspace_separation} and for each $\theta \in \Theta$ 
			\begin{align}
				\left | \hat{\eta}_{\operatorname{c}}(\theta) - \bar{\eta}_{\operatorname{c}}(\theta) \right | & \rightarrow 0 \\
				\left | \hat{\eta}_{\operatorname{g}}(\theta) - \bar{\eta}_{\operatorname{g}}(\theta) \right | & \rightarrow 0
			\end{align}
			almost surely, where $\bar{\eta}_{\operatorname{c}}(\theta)$ and $\bar{\eta}_{\operatorname{g}}(\theta)$ are two deterministic equivalent objective functions defined as follows. The deterministic equivalent of the \ac{MUSIC} cost function is defined as
			\begin{equation}
				\bar{\eta}_{\operatorname{c}}(\theta) = \sum_{m=1}^{\bar{M}} \psi(m) \boldsymbol{a}(\theta)^{\operatorname{H}} \boldsymbol{E}_{m} \boldsymbol{E}_{m}^{\operatorname{H}} \boldsymbol{a}(\theta)
				\label{eq: music_first_order}
			\end{equation}
			with real-valued weights
			\begin{equation*}
				\psi(m) = \begin{cases}
					1- \frac{1}{K_{1}} \sum_{r=2}^{\bar{M}} K_{r} \left( \frac{\gamma_{1}}{\gamma_{r}-\gamma_{1}} - \frac{\mu_{1}}{\gamma_{r}-\mu_{1}}\right), & m=1\\
					\frac{\gamma_{1}}{\gamma_{m}-\gamma_{1}}- \frac{\mu_{1}}{\gamma_{m}-\mu_{1}}, & m\neq 1 
				\end{cases}
			\end{equation*}
			where $\mu_{1} < \mu_{2} < \dots < \mu_{\bar{M}}$ are the real-valued solutions to the following equation in $\mu$
			\begin{equation}
				\frac{1}{N}\sum_{r=1}^{\bar{M}} \frac{ K_{r}\gamma_{r}}{\gamma_{r}-\mu} = 1.
				\label{eq: mu_values}
			\end{equation}
			The deterministic equivalent $\bar{\eta}_{\operatorname{g}}(\theta)$  of the g-\ac{MUSIC} cost function $\hat{\eta}_{\operatorname{g}}(\theta)$ in \eqref{eq: g_music_cost_function} is defined in \eqref{eq: g_music_first_order}.
		\end{theorem}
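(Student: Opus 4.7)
My plan is to express both cost functions as contour integrals of bilinear forms of the resolvent $\hat{\boldsymbol{Q}}(z) = (\hat{\boldsymbol{R}} - z\boldsymbol{I}_M)^{-1}$, and then invoke an existing \ac{RMT} deterministic equivalent for such bilinear forms, uniform on compact sets separated from the asymptotic support. Concretely, by Cauchy's integral formula,
\begin{equation*}
\hat{\eta}_{\operatorname{c}}(\theta) = -\frac{1}{2\pi \mathrm{i}} \oint_{\mathcal{C}_{1}} \boldsymbol{a}(\theta)^{\operatorname{H}} \hat{\boldsymbol{Q}}(z) \boldsymbol{a}(\theta)\, dz,
\end{equation*}
where $\mathcal{C}_{1}$ is a small negatively-oriented contour enclosing the noise cluster $[x_{1}^{-},x_{1}^{+}]$ and no other cluster. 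The existence of such a contour is guaranteed by Assumption \ref{as: subspace_separation} together with the ``no eigenvalue outside the support'' property, which also ensures that $\mathcal{C}_{1}$ can be chosen at positive distance from $\hat{\boldsymbol{R}}$'s spectrum with probability one for large $M$. An analogous representation holds for $\hat{\eta}_{\operatorname{g}}(\theta)$ by writing the weighted sum in \eqref{eq: g_music_cost_function} as a sum of residues of $\boldsymbol{a}(\theta)^{\operatorname{H}} \hat{\boldsymbol{Q}}(z) \boldsymbol{a}(\theta)$ multiplied by an appropriate meromorphic factor built from the $\hat{\mu}_{k}$.

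The three-step program is then: (i) replace $\boldsymbol{a}(\theta)^{\operatorname{H}} \hat{\boldsymbol{Q}}(z) \boldsymbol{a}(\theta)$ by its deterministic equivalent from \cite{a74} (of Silverstein--Bai type) uniformly on $\mathcal{C}_{1}$, which is legitimate because $\mathcal{C}_{1}$ is bounded away from the limiting support; (ii) perform the change of variables $z = z(\omega)$ defined in \eqref{eq: mapping_omega_to_z}, under which $\mathcal{C}_{1}$ transforms into a loop in the $\omega$-plane enclosing $\gamma_{1}$ only; and (iii) compute the resulting residue using the eigendecomposition \eqref{eq: eigenvalue_decomposition_distinct}. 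Step (iii) yields a sum of terms proportional to $\boldsymbol{a}(\theta)^{\operatorname{H}} \boldsymbol{E}_{m} \boldsymbol{E}_{m}^{\operatorname{H}} \boldsymbol{a}(\theta)$, with coefficients that simplify to the weights $\psi(m)$ in \eqref{eq: music_first_order} once the Jacobian $dz/d\omega$ is written explicitly, and with $\mu_{1}$ identified as the (unique) pole of this Jacobian inside $[\omega_{1}^{-},\omega_{1}^{+}]$, i.e.\ the smallest solution of \eqref{eq: mu_values}.

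For the g-\ac{MUSIC} part, the same contour-plus-change-of-variables scheme applies. The weights $\phi(m)$ were constructed in \cite{a69} precisely so that, after the substitution $\hat{\lambda}_{k} \to$ deterministic equivalent and $\hat{\mu}_{k} \to \mu_{k}$, all residues at the signal eigenvalues $\gamma_{2},\dots,\gamma_{\bar{M}}$ cancel, and only the residue at $\gamma_{1}$ survives with coefficient $1$. This directly recovers $\bar{\eta}_{\operatorname{g}}(\theta) = \boldsymbol{a}(\theta)^{\operatorname{H}} \boldsymbol{E}_{1}\boldsymbol{E}_{1}^{\operatorname{H}} \boldsymbol{a}(\theta)$ as defined in \eqref{eq: g_music_first_order}. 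The almost-sure convergence of $\hat{\mu}_{k}$ to $\mu_{k}$ needed here follows from the cluster-separation assumption and standard perturbation arguments applied to the scalar equation whose roots are $\hat{\mu}_{k}$ versus $\mu_{k}$.

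The hard part will be the rigorous justification that the resolvent bilinear form can be replaced by its deterministic equivalent \emph{uniformly in $z$ along $\mathcal{C}_{1}$}, which is essential for swapping the limit with the contour integral, and (for g-\ac{MUSIC}) the analogous uniform control of the random weights $\phi(m)$ through $\hat{\lambda}_{k},\hat{\mu}_{k}$. Both ingredients hinge on Assumption \ref{as: subspace_separation}: the spectral gap keeps $\mathcal{C}_{1}$ at a deterministic positive distance from the random eigenvalues for all $M$ large enough with probability one. Once these uniform bounds are in hand, the residue algebra reduces to a mechanical identification of coefficients, and the stated pointwise almost-sure convergence to $\bar{\eta}_{\operatorname{c}}(\theta)$ and $\bar{\eta}_{\operatorname{g}}(\theta)$ follows.
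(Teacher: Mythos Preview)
Your proposal is correct and follows essentially the same route as the paper: contour-integral representation of the sample noise projector via the resolvent, almost-sure convergence of the bilinear form $\boldsymbol{a}^{\operatorname{H}}(\theta)\hat{\boldsymbol{Q}}(z)\boldsymbol{a}(\theta)$ to $\frac{\omega(z)}{z}\boldsymbol{a}^{\operatorname{H}}(\theta)(\boldsymbol{R}-\omega(z)\boldsymbol{I}_M)^{-1}\boldsymbol{a}(\theta)$, the change of variables $z\mapsto\omega$, and residue calculus to extract the weights $\psi(m)$.

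Two minor deviations are worth noting. First, the paper does not argue uniform convergence of the bilinear form along $\mathcal{C}_1$; it invokes the dominated convergence theorem instead, which is simpler because the integrand is uniformly bounded on the deterministic contour (this is where Assumption~\ref{as: subspace_separation} and the exact-separation results enter). Second, for g-\ac{MUSIC} the paper does not pass through ``$\hat\lambda_k\to$ deterministic equivalent, $\hat\mu_k\to\mu_k$'' termwise; it writes $\hat\eta_{\operatorname{g}}(\theta)$ directly as the contour integral $\frac{1}{2\pi\mathrm{j}}\oint_{\mathcal{C}_z}\boldsymbol{a}^{\operatorname{H}}(\theta)\hat{\boldsymbol{Q}}(z)\boldsymbol{a}(\theta)\,\frac{z}{\hat\omega(z)}\hat\omega'(z)\,\mathrm{d}z$ and uses the scalar convergences $\hat\omega(z)\to\omega(z)$, $\hat\omega'(z)\to\omega'(z)$ (the latter via convergence of holomorphic functions implying convergence of derivatives). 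This sidesteps any need to control individual sample eigenvalues, which is helpful since $K$ is allowed to grow with $M$ here. Your version would work too, but the paper's formulation is cleaner and avoids the extra perturbation argument you flag as ``hard.''
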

		\begin{proof}
			See \cite[Theorem 2]{a43}. A sketch of the proof is provided in Section \ref{sec: proof_second_order}.
		\end{proof}
		
	\paragraph*{Remark} The above theorem points out that only the g-\ac{MUSIC} cost function is an $M,N$-consistent estimator of the originally intended cost function in \eqref{eq: g_music_first_order}. However, this does not need to have a direct translation into the consistency of the \ac{DoA} estimates themselves.  It was shown in \cite{a74} that for widely spaced sources and a \ac{ULA}, both the conventional \ac{MUSIC} and the g-\ac{MUSIC} \ac{DoA} estimators provide $M,N$-consistent \ac{DoA} estimates in spite of the inherent inconsistency of the conventional \ac{MUSIC} cost function. Hence, under certain circumstances, the position of the local minima of the conventional \ac{MUSIC} cost function converge to the true \acp{DoA}, although the global cost function does not. Also, for closely spaced sources, both \ac{MUSIC} methods provide $M,N$-consistent \ac{DoA} estimates. However the g-\ac{MUSIC} method provides $M,N$-consistent \ac{DoA} estimates under lower asymptotic conditions on $N$, which explains the superiority in \ac{DoA} estimation accuracy of g-\ac{MUSIC} over conventional \ac{MUSIC} in scenarios with closely spaced sources and limited sample size.
	
	In order to analyze the probability of resolution of both \ac{MUSIC} methods, we next focus on the characterization of the asymptotic fluctuations of both cost functions around the asymptotic deterministic equivalents in Theorem \ref{thm: first_order}.
	
\section{Main Result: Asymptotic Fluctuations of the \acs{MUSIC} and g-\acs{MUSIC} Cost Function}
	\label{sec: asymptotic_behavior_pr_dml_cost_function}
	The objective of this section is to characterize the asymptotic fluctuations of both conventional \ac{MUSIC} and g-\ac{MUSIC} cost functions in \eqref{eq: music_cost_function} and \eqref{eq: g_music_cost_function} around their asymptotic deterministic equivalents. To that effect, we will describe the finite-dimensional asymptotic distribution of these cost functions evaluated at a constant number of $L$ points. Let us therefore consider a fixed set of $L$ directions given by $\bar{\boldsymbol{\theta}} = \lbrack \bar{\theta}_{1},\dots, \bar{\theta}_{L} \rbrack^{\operatorname{T}}$ within the field of view $\Theta$ of the sensor array and denote 
	\begin{align}
		\hat{\boldsymbol{\eta}}_{\text{c}}(\bar{\boldsymbol{\theta}}) =& \lbrack \hat{\eta}_{\text{c}}(\bar{\theta}_{1}), \dots, \hat{\eta}_{\text{c}}(\bar{\theta}_{L})\rbrack^{\operatorname{T}} \label{eq: music_cost_function_vector}\\
		\hat{\boldsymbol{\eta}}_{\text{g}}(\bar{\boldsymbol{\theta}}) =& \lbrack \hat{\eta}_{\text{g}}(\bar{\theta}_{1}), \dots, \hat{\eta}_{\text{g}}(\bar{\theta}_{L})\rbrack^{\operatorname{T}}, \label{eq: g_music_cost_function_vector} 
	\end{align}
	where $\hat{\eta}_{\text{c}}(\theta)$ and $\hat{\eta}_{\text{g}}(\theta)$ are given in \eqref{eq: music_cost_function} and \eqref{eq: g_music_cost_function}, respectively. In order to investigate the asymptotic behavior of the two vectors $\hat{\boldsymbol{\eta}}_{\text{c}}(\bar{\boldsymbol{\theta}})$ and $\hat{\boldsymbol{\eta}}_{\text{g}}(\bar{\boldsymbol{\theta}})$ we will also consider the two $L$-dimensional vectors $\bar{\boldsymbol{\eta}}_{\text{c}}(\bar{\boldsymbol{\theta}})$ and $\bar{\boldsymbol{\eta}}_{\text{g}}(\bar{\boldsymbol{\theta}})$ that contain the corresponding deterministic equivalents. These two vectors are respectively defined as \eqref{eq: music_cost_function_vector}-\eqref{eq: g_music_cost_function_vector} by replacing $\hat{\eta}_{\text{c}}(\bar{\theta}_{l})$ and $\hat{\eta}_{\text{g}}(\bar{\theta}_{l})$ with $\bar{\eta}_{\text{c}}(\bar{\theta}_{l})$ in \eqref{eq: music_first_order} and $\bar{\eta}_{\text{g}}(\bar{\theta}_{l})$ in \eqref{eq: g_music_first_order}, where $l=1,\ldots,L$. 
 
	In order to introduce the main result of this section, we need to introduce the asymptotic covariance matrices of the vectors in  \eqref{eq: music_cost_function_vector}-\eqref{eq: g_music_cost_function_vector}. Let $\omega(z)$ be defined by inverting the mapping in (\ref{eq: mapping_omega_to_z}) as follows. When $z \in \mathbb{C}^+ \doteq \{ z \in \mathbb{C}: \, \ima{z}>0 \}$, $\omega(z)$ is defined as the unique solution to 
	\begin{equation}
		\label{eq: definition_omega_z}
			z = \omega(z) \Biggl( 1- \frac{1}{N}\sum_{r=1}^{\bar{M}}\frac{K_{r}\gamma_{r}}{\gamma_{r}-\omega(z)}\Biggr) 
	\end{equation}
	in $\mathbb{C}^+$. When $z^\ast \in \mathbb{C}^+$, we take $\omega(z) = \omega^\ast(z^\ast)$. Finally, when $z \in \mathbb{R}$, we take $\omega(z)$ to be the unique solution to the above equation such that $\frac{1}{N}\sum_{r=1}^{\bar{M}} \frac{K_{r}\gamma_{r}^{2}}{|\gamma_{r}-\omega(z)|^{2}} \leq 1$. It can be shown that $\omega(z)$ is well defined on all $\mathbb{C}$ and holomorphic on $\mathbb{C}\backslash \mathcal{S}$, with derivative 
		\begin{equation}
			\omega^{\prime}(z) = \frac{\partial \omega(z)}{\partial z} = \Biggl( 1- \frac{1}{N}\sum_{r=1}^{\bar{M}} \frac{K_{r}\gamma_{r}^{2}}{(\gamma_{r}-\omega(z))^{2}}\Biggr)^{-1}.
			\label{eq: first_order_derivative_wrt_x_omega_x}
		\end{equation}
	
    Using tools from \ac{RMT} it is shown in Section \ref{sec: proof_second_order} that both the conventional \ac{MUSIC} and the g-\ac{MUSIC} cost functions fluctuate around their asymptotic equivalents as Gaussian random vectors. In order to formulate this result, we now define the asymptotic covariance matrices of the corresponding random vectors in \eqref{eq: music_cost_function_vector}-\eqref{eq: g_music_cost_function_vector} after proper centering and normalization. Regarding the conventional \ac{MUSIC} algorithm, we define 
    \begin{equation}
			\begin{aligned}
				\left \lbrack \boldsymbol{\Gamma}_{\operatorname{c}}(\bar{\boldsymbol{\theta}}) \right \rbrack_{p,q} = \sum_{r=1}^{\bar{M}} \sum_{k=1}^{\bar{M}} \Big(  \xi_{\operatorname{c}}(r,k) \boldsymbol{a}(\bar{\theta}_{p})^{\operatorname{H}} \boldsymbol{E}_{r}\boldsymbol{E}_{r}^{\operatorname{H}} \boldsymbol{a}(\bar{\theta}_{q}) &\\[-8pt]
				\times \boldsymbol{a}(\bar{\theta}_{q})^{\operatorname{H}} \boldsymbol{E}_{k}\boldsymbol{E}_{k}^{\operatorname{H}} \boldsymbol{a}(\bar{\theta}_{p})& \Big)
			\end{aligned}
			\label{eq: music_second_order}
		\end{equation}	
	for $p,q \in \lbrace 1, \dots, L\rbrace$ where the real-valued weights
		\begin{equation}
			\xi_{\operatorname{c}}(r,k) = \begin{cases}
				\gamma_{r} \gamma_{k} \tilde{\xi}_{\operatorname{c}}(r,k) & M > N \\
				\gamma_{r} \gamma_{k} \bar{\xi}_{\operatorname{c}}(r,k) & M \leq N
			\end{cases}
			\label{eq: music_second_order_xi}
		\end{equation}
		\setcounter{music_second_order_equation_number}{\value{equation}}%
	differ depending on the number of snapshots $N$ and the number of sensors $M$. In the oversampled case ($M\leq N$) the weights are given by $\bar{\xi}_{\operatorname{c}}(r,k)$ in \eqref{eq: music_second_order_xi_oversampled} whereas in the undersampled case ($M>N$) the weights are given by $\tilde{\xi}_{\operatorname{c}}(r,k)$ in \eqref{eq: music_second_order_xi_undersampled}, both at the top of the next page. The real-valued quantities $\mu_{r}$ for $r=1,\dots,\bar{M}$ are defined as in \eqref{eq: mu_values}. We remark that $\mu_{1}=0$ if $M=N$.	 
\addtocounter{equation}{2}%
%
%-----------Double column Equations
		\begin{figure*}[!t]
		    %\setlength{\abovedisplayskip}{0pt}
			%\setlength{\belowdisplayskip}{2pt}
			% ensure that we have normalsize text
			\normalsize
			% Store the current equation number.
			\setcounter{MYtempeqncnt1}{\value{equation}}
			% Set the equation number to one less than the one
			% desired for the first equation here.
			% The value here will have to changed if equations
			% are added or removed prior to the place these
			% equations are referenced in the main text.
			% IEEE uses as a separator	
			\setcounter{equation}{\value{music_second_order_equation_number}}
			\begin{align}	
				\bar{\xi}_{\operatorname{c}}(r,k) =& -\frac{N}{K_{1}} \frac{1}{\left(\frac{1}{N}K_{1}\gamma_{1}\right)^{2}} \Biggl( 1- \frac{1}{N} \sum_{m=2}^{\bar{M}} \frac{K_{m}\gamma_{m}}{\gamma_{m}-\gamma_{1}} \Biggr)^{2} \delta_{r=k=1} - \frac{1}{\left(\frac{1}{N} K_{1}\gamma_{1}\right)^{2}} \Biggl(1- \frac{1}{N} \sum_{m=2}^{\bar{M}} \frac{K_{m}\gamma_{m}^{2}}{\left(\gamma_{m}-\gamma_{1}\right)^{2}}\Biggr) \delta_{r=k=1} \label{eq: music_second_order_xi_oversampled} \\[-2pt]
				& - \frac{N}{K_{1}} \frac{1}{\left(\gamma_{k}-\gamma_{1}\right)^{2}} \delta_{r=1\neq k} - \frac{N}{K_{1}}\frac{1}{\left(\gamma_{r}-\gamma_{1}\right)^{2}} \delta_{k=1\neq r} + \frac{2}{\pi} \int_{x_{1}^{-}}^{x_{1}^{+}} \frac{1}{\left | 1- \frac{1}{N} \sum_{m=1}^{\bar{M}}\frac{K_{m}\gamma_{m}}{\gamma_{m}-\omega(x)} \right |^{2}} \frac{\ima{\omega(x)}}{\left| \gamma_{r}-\omega(x) \right |^{2} \left | \gamma_{k}-\omega(x) \right |^{2}} \mathrm{d}x \nonumber \\[-2pt]
				& + 2 \frac{\mu_{1}}{\gamma_{1}}\frac{1}{\left( \frac{1}{N}K_{1}\gamma_{1} \right)^{2}} \Biggl( 1- \frac{1}{N} \sum_{m=2}^{\bar{M}} \frac{K_{m}\gamma_{m}^{2}}{\left(\gamma_{m}-\gamma_{1}\right)\left(\gamma_{m}-\mu_{1}\right)}\Biggr) \delta_{r=k=1} + 2 \frac{\mu_{1}}{\gamma_{1}-\mu_{1}} \frac{1}{\left(\frac{1}{N}K_{1}\gamma_{1}\right)^{2}} \Biggl( 1- \frac{1}{N} \sum_{m=2}^{\bar{M}} \frac{K_{m}\gamma_{m}}{\gamma_{m}-\gamma_{1}}\Biggr) \delta_{r=k=1} \nonumber \\[-2pt]
				& + 2 \frac{N}{K_{1}} \frac{1}{\left(\gamma_{k}-\gamma_{1}\right)\gamma_{1}} \frac{\mu_{1}}{\gamma_{k}-\mu_{1}} \delta_{r=1\neq k} + 2 \frac{N}{K_{1}} \frac{1}{\left(\gamma_{r}-\gamma_{1}\right)\gamma_{1}} \frac{\mu_{1}}{\gamma_{r}-\mu_{1}} \delta_{k=1\neq r}  + \frac{\mu_{1}^{2}}{\left(\gamma_{r}-\mu_{1}\right)^{2}\left(\gamma_{k}-\mu_{1}\right)^{2}} \frac{1}{1- \frac{1}{N}\sum_{m=1}^{\bar{M}} \frac{K_{m}\gamma_{m}^{2}}{\left(\gamma_{m}-\mu_{1}\right)^{2}}} \nonumber  \\[-2pt]					
				\tilde{\xi}_{\operatorname{c}}(r,k) =&  \bar{\xi}_{\operatorname{c}}(r,k) -\frac{1}{\gamma_{r}\gamma_{k}}\frac{\mu_{1}}{\left(\gamma_{r}-\mu_{1}\right)\left(\gamma_{k}-\mu_{1}\right)} \frac{1}{\frac{1}{N} \sum_{n=1}^{\bar{M}} \frac{K_{n}}{\gamma_{n}-\mu_{1}}} - \frac{1}{\gamma_{r}\gamma_{k}}\frac{\mu_{1}^{2}}{\left(\gamma_{r}-\mu_{1}\right)\left(\gamma_{k}-\mu_{1}\right)} \frac{N}{N-M}  \label{eq: music_second_order_xi_undersampled}
			\end{align}
			\hrulefill
			% The spacer can be tweaked to stop underfull vboxes.
			%\vspace*{+4pt}
			%\vspace*{-6pt}
			% Restore the current equation number.
			\setcounter{equation}{\value{MYtempeqncnt1}}
		\end{figure*}
%-----------Double column Equations		
	
	Regarding the g-\ac{MUSIC} cost function, we define
		\begin{equation}
			\begin{aligned}		
				\left \lbrack \boldsymbol{\Gamma}_{\operatorname{g}}(\bar{\boldsymbol{\theta}})\right \rbrack_{p,q} = \sum_{r=1}^{\bar{M}}\sum_{k=1}^{\bar{M}}  \Big( \xi_{\operatorname{g}}(r,k) \boldsymbol{a}(\bar{\theta}_{p})^{\operatorname{H}} \boldsymbol{E}_{r} \boldsymbol{E}_{r}^{\operatorname{H}} \boldsymbol{a}(\bar{\theta}_{q}) &\\[-8pt]
				 \times \boldsymbol{a}(\bar{\theta}_{q})^{\operatorname{H}} \boldsymbol{E}_{k}\boldsymbol{E}_{k}^{\operatorname{H}} \boldsymbol{a}(\bar{\theta}_{p})& \Big)
			\end{aligned}
			\label{eq: g_music_second_order}
		\end{equation}
		for $p,q \in \lbrace 1, \dots, L\rbrace$ where the real-valued weights $\xi_{\operatorname{g}}(r,k)$ are given by
		\begin{equation}
			\xi_{\operatorname{g}}(r,k) = - \frac{N}{K_{1}}\delta_{r=k=1} + \frac{2}{\pi} \int_{x_{1}^{-}}^{x_{1}^{+}} \frac{ \gamma_{r} \gamma_{k} |\omega^{\prime}(x)|^{2} \ima{\omega(x)}}{|\gamma_{r}-\omega(x)|^{2} |\gamma_{k}-\omega(x)|^{2}}\mathrm{d}x.
			\label{eq: g_music_second_order_xi}
		\end{equation}
	
	Having introduced these two asymptotic covariance matrices, we next introduce an additional assumption that essentially guarantees that the eigenvalues of the matrices $\boldsymbol{\Gamma}_{\operatorname{c}}(\bar{\boldsymbol{\theta}})$ in \eqref{eq: music_second_order} and $\boldsymbol{\Gamma}_{\operatorname{g}}(\bar{\boldsymbol{\theta}})$ in \eqref{eq: g_music_second_order} are contained in a compact interval of the positive real axis independent of $M$. This is necessary in order to guarantee that the two random cost functions asymptotically fluctuate around the corresponding deterministic equivalents. 
		
		\begin{assumption}
			\label{as: positive_semidefinitness_covariance_matrix}
		    Let $\boldsymbol{A}(\bar{\boldsymbol{\theta}})$ denote the $M \times L$ matrix that contains, stacked side by side, the steering vectors $\boldsymbol{a}(\theta)$ evaluated at the directions $\bar{\theta}_1, \ldots, \bar{\theta}_L$. Then, if $\lambda_{\min}(\cdot)$ denotes the minimum eigenvalue of a matrix, we have 
		    \begin{equation*}
		        \inf_M
		        \sum_{m=2}^{\bar{M}} \lambda_{\min}^2\left( \boldsymbol{A}^{\operatorname{H}}(\bar{\boldsymbol{\theta}})
		        \boldsymbol{E}_m \boldsymbol{E}_m^{\operatorname{H}}
		        \boldsymbol{A}(\bar{\boldsymbol{\theta}}) \right) >0.
		    \end{equation*}
	    \end{assumption}
        This assumption is essentially pointing out that the matrix $\boldsymbol{A}(\bar{\boldsymbol{\theta}})$ must be full-rank, and its projection onto the different signal-subspaces $\boldsymbol{E}_{2},\dots,\boldsymbol{E}_{\bar{M}}$ of the true covariance matrix $\boldsymbol{R}$ cannot vanish uniformly. Having introduced this last assumption, we are now in the position to introduce the main result of this paper.  
		\begin{theorem}
			\label{thm: second_order}
			Under Assumptions \ref{as:asymptotic_growth}-\ref{as: positive_semidefinitness_covariance_matrix}, we have
			\begin{align}
				\sqrt{N}\boldsymbol{\Gamma}_{\operatorname{c}}(\bar{\boldsymbol{\theta}})^{-1/2}&\left( \hat{\boldsymbol{\eta}}_{\operatorname{c}}(\bar{\boldsymbol{\theta}}) - \bar{\boldsymbol{\eta}}_{\operatorname{c}}(\bar{\boldsymbol{\theta}}) \right) \overset{\mathcal{D}}{\rightarrow} \mathcal{N}(\boldsymbol{0}, \boldsymbol{I}_{L}) \label{eq: music_convergence_in_distribution} \\
				\sqrt{N}\boldsymbol{\Gamma}_{\operatorname{g}}(\bar{\boldsymbol{\theta}})^{-1/2}&\left( \hat{\boldsymbol{\eta}}_{\operatorname{g}}(\bar{\boldsymbol{\theta}}) - \bar{\boldsymbol{\eta}}_{\operatorname{g}}(\bar{\boldsymbol{\theta}}) \right) \overset{\mathcal{D}}{\rightarrow} \mathcal{N}(\boldsymbol{0}, \boldsymbol{I}_{L}) \label{eq: g_music_convergence_in_distribution}
			\end{align}
			where $\overset{\mathcal{D}}{\rightarrow}$ denotes convergence in distribution and where $\boldsymbol{\Gamma}_{\operatorname{c}}(\bar{\boldsymbol{\theta}}) \in \mathbb{R}^{L\times L}$ is given in \eqref{eq: music_second_order} and $\boldsymbol{\Gamma}_{\operatorname{g}}(\bar{\boldsymbol{\theta}}) \in \mathbb{R}^{L\times L}$ is given in \eqref{eq: g_music_second_order}, respectively.
		\end{theorem}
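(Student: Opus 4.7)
The strategy is to reduce both cost functions to bilinear forms of the resolvent $\boldsymbol{Q}(z) = (z\boldsymbol{I}_M - \hat{\boldsymbol{R}})^{-1}$ via Cauchy's integral formula, and then invoke an \ac{RMT} central limit theorem for such bilinear forms. Concretely, under Assumption \ref{as: subspace_separation} there is with probability one (eventually) a negatively oriented contour $\mathcal{C}_{\text{n}}$ enclosing exactly the sample eigenvalues associated with the noise cluster $[x_1^-,x_1^+]$ (and, in the undersampled case, also the zero eigenvalue cluster), so that
\begin{equation*}
\hat{\eta}_{\text{c}}(\theta) \;=\; \frac{1}{2\pi \mathrm{i}} \oint_{\mathcal{C}_{\text{n}}} \boldsymbol{a}(\theta)^{\operatorname{H}} \boldsymbol{Q}(z) \boldsymbol{a}(\theta)\, \mathrm{d}z.
\end{equation*}
For the g-\ac{MUSIC} cost function a similar representation holds, but the integrand is modulated by a factor that, after some manipulation, converts the random weights $\phi(m)$ into the deterministic transformation $\omega(z)$ of \eqref{eq: definition_omega_z}; this rewriting (essentially the same device as in \cite{a69,a43}) is the bridge that will later produce the $\omega'(z)$ factor in \eqref{eq: g_music_second_order_xi}.

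Once both cost functions are written as contour integrals of $\boldsymbol{a}(\bar{\theta}_p)^{\operatorname{H}}\boldsymbol{Q}(z)\boldsymbol{a}(\bar{\theta}_p)$ (possibly times a scalar function of $z$), the first step is to apply the known CLT for bilinear forms of the resolvent of a large Gaussian sample covariance matrix. Under Assumption \ref{as: rewritten_sample_covariance_matrix}, the finite-dimensional distributions of $\sqrt{N}\bigl(\boldsymbol{a}(\bar{\theta}_p)^{\operatorname{H}}\boldsymbol{Q}(z)\boldsymbol{a}(\bar{\theta}_q) - \bar{b}_{p,q}(z)\bigr)$, indexed by $z \in \mathbb{C}\setminus \mathcal{S}$ and $p,q \in \{1,\ldots,L\}$, converge jointly to a centered complex Gaussian process whose covariance kernel is a known rational function of $\omega(z_1)$, $\omega(z_2)$ and the eigenstructure of $\boldsymbol{R}$. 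Since the contours can be uniformly bounded away from $\mathcal{S}$ by Assumption \ref{as: subspace_separation}, the convergence transfers from the integrand to the integral, and $(\sqrt{N}(\hat{\boldsymbol{\eta}}_{\text{c}}-\bar{\boldsymbol{\eta}}_{\text{c}}), \sqrt{N}(\hat{\boldsymbol{\eta}}_{\text{g}}-\bar{\boldsymbol{\eta}}_{\text{g}}))$ is asymptotically jointly Gaussian with covariance matrix obtained from the corresponding \emph{double} contour integrals.

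The remaining work — and the true heart of the proof — is to evaluate these double contour integrals in closed form. For the g-\ac{MUSIC} case the computation is comparatively clean: collapsing one of the contours onto the real cut $[x_1^-,x_1^+]$ produces the $(2/\pi)\int \cdots \operatorname{Im}\omega(x)\,\mathrm{d}x$ term, while the residue at infinity (which encodes $z \to \omega(z)$ behaviour through \eqref{eq: first_order_derivative_wrt_x_omega_x}) yields the $-N/K_1 \delta_{r=k=1}$ contribution, giving \eqref{eq: g_music_second_order_xi}. For the conventional \ac{MUSIC} case the integrand has extra poles at $z = \gamma_r$ coming from the mismatch between the random noise projector and the true eigenspaces; computing the residues at these poles, together with the pole at $\omega(z) = \mu_1$ (obtained by the change of variables $z \mapsto \omega$), produces each of the distinct terms in \eqref{eq: music_second_order_xi_oversampled}. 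The undersampled regime $M>N$ requires an additional contribution from the zero-eigenvalue cluster, which manifests as the two correction terms in \eqref{eq: music_second_order_xi_undersampled}.

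The main obstacle is this last step: the double contour integrals must be reduced by careful pole accounting in two variables and by repeated use of the defining identity \eqref{eq: definition_omega_z} for $\omega(z)$, and the undersampled case demands a separate treatment because $\omega(z)$ has an additional branch behaviour near zero. A secondary technical point is the positive definiteness of $\boldsymbol{\Gamma}_{\text{c}}$ and $\boldsymbol{\Gamma}_{\text{g}}$, which is needed to make sense of the normalization $\boldsymbol{\Gamma}^{-1/2}$ in \eqref{eq: music_convergence_in_distribution}--\eqref{eq: g_music_convergence_in_distribution}; here Assumption \ref{as: positive_semidefinitness_covariance_matrix} enters by guaranteeing that the double sum over $r,k$ in \eqref{eq: music_second_order} and \eqref{eq: g_music_second_order} cannot degenerate, so that the minimum eigenvalue of each $\boldsymbol{\Gamma}(\bar{\boldsymbol{\theta}})$ stays bounded away from zero uniformly in $M$. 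Once these pieces are in place, Slutsky's theorem delivers the stated convergence in distribution.
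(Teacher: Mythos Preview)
Your overall architecture matches the paper's: contour-integral representation, a CLT step, evaluation of the resulting double contour integral, and eigenvalue control on $\boldsymbol{\Gamma}(\bar{\boldsymbol{\theta}})$. Two substantive gaps, however, would prevent the argument from closing as written.

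First, the g-\ac{MUSIC} case is not a pure bilinear form of the resolvent: the integrand carries the \emph{random} modulation $\hat{h}(z)=z\hat{\omega}'(z)/\hat{\omega}(z)$, so a pointwise CLT for $\boldsymbol{a}^{\operatorname{H}}\hat{\boldsymbol{Q}}(z)\boldsymbol{a}$ alone is insufficient --- the fluctuations of $\hat{h}(z)$ around $\bar{h}(z)=z\omega'(z)/\omega(z)$ must be controlled jointly. The paper sidesteps this by stating a general CLT (its Theorem~\ref{thm: second_order_general_theorem}) directly for the integrated quantity $\sqrt{N}(\hat{\eta}-\bar{\eta})$ with generic $\hat{h},\bar{h}$, proved via characteristic-function methods (integration by parts and Nash--Poincar\'e). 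Second, your description of the residue calculus is off in ways that matter. The $-N/K_1\,\delta_{r=k=1}$ term in \eqref{eq: g_music_second_order_xi} is not a residue at infinity but the residue at the double pole $\omega=\gamma_1$; more importantly, the factor $(1-\Omega(\omega_1,\omega_2))^{-1}$ in the covariance kernel \eqref{eq: second_order_general} introduces poles at the zeros $\varphi_t(\omega_1)$ of $1-\Omega(\omega_1,\cdot)$, and the paper's computation hinges on (i) a Rouch\'e argument showing exactly one such zero $\varphi_1(\omega_1)$ lies inside $\mathcal{C}_{\omega_2}$, and (ii) the identity $\varphi_1(\omega(x))=\omega(x)^*$ on the cut $[x_1^-,x_1^+]$, which is what actually produces the $\int|\omega'(x)|^2\operatorname{Im}\omega(x)\,\mathrm{d}x$ term. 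Without this $\varphi$-pole analysis the integral in \eqref{eq: g_music_second_order_xi} cannot be obtained, and the much longer \ac{MUSIC} computation in \eqref{eq: music_second_order_xi_oversampled}--\eqref{eq: music_second_order_xi_undersampled} requires it together with additional residues at $\mu_1$ and at $0$. Finally, the paper needs \emph{both} $\sup_M\|\boldsymbol{\Gamma}\|<\infty$ and $\inf_M\lambda_{\min}(\boldsymbol{\Gamma})>0$; for the lower bound it expands $(1-\Omega)^{-1}$ as a power series to exhibit each $\xi(r,k)$ as a sum of squares, a step your outline does not anticipate.
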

		\begin{proof}
			The proof is given in Section \ref{sec: proof_second_order}.
		\end{proof}
		\paragraph*{Remark} The order of convergence in \eqref{eq: music_convergence_in_distribution} and \eqref{eq: g_music_convergence_in_distribution} is $\mathcal{O}(N^{-1/2})$. The real-valued integrals in \eqref{eq: music_second_order_xi} and \eqref{eq: g_music_second_order_xi} can be computed using numerical integration techniques such as the Riemann sum or the Simpson's rule.
		 
		 Theorem \ref{thm: second_order} fully characterizes the asymptotic stochastic behavior of the \ac{MUSIC} as well as the g-\ac{MUSIC} cost function at any point within the \ac{FoV} of the sensor array. With the aid of Theorem \ref{thm: second_order} we can approximate the probability density function of both \ac{MUSIC} cost functions also under non-asymptotic conditions, which can be useful, e.g., for source detection or to quantify the resolution capabilities of the corresponding \ac{DoA} estimator.

\subsection{Probability of Resolution}
	\label{sec: probability_of_resolution}
	
	The practical relevance of a \ac{DoA} estimator highly depends on its computational complexity as well as its estimation accuracy. For the estimation accuracy of subspace-based \ac{DoA} estimators like \ac{MUSIC} and g-\ac{MUSIC} the so called threshold effect is of major interest. The threshold effect describes an abrupt increase in the \ac{MSE} below a certain \ac{SNR} threshold due to the production of outliers in the \ac{DoA} estimates and can be characterized by analyzing the resolution capabilities of the underlying \ac{DoA} estimator \cite{a52,a61,a8}. In the literature, several criteria to declare resolution between two neighboring sources have been introduced \cite{a87,a88,a86,a47}. In case of a one-dimensional spectral search based \ac{DoA} estimator, one may declare that two sources located at \acp{DoA} $\boldsymbol{\theta}=\lbrack \theta_{1},\theta_{2}\rbrack^{\operatorname{T}}$ are resolved if both \ac{DoA} estimation errors, namely $|\theta_{1}-\hat{\theta}_{1}|$ and $|\theta_{2}-\hat{\theta}_{2}|$ are smaller than $|\theta_{1}-\theta_{2}|/2$ \cite{a86}. A popular alternative is to evaluate the cost function at both true \acp{DoA} and to verify if the cost at the mid-angle $(\theta_{1}+\theta_{2})/2$ is larger than at the true \acp{DoA} \cite{a47}. Hence, under the latter criterion resolution is declared if $\boldsymbol{u}^{\operatorname{T}}\hat{\boldsymbol{\eta}}(\bar{\boldsymbol{\theta}})<0$ where
	\begin{equation*}
		\boldsymbol{u} = \begin{bmatrix}
			1/2 \\ 1/2 \\ -1
		\end{bmatrix}, \quad \bar{\boldsymbol{\theta}} = \begin{bmatrix}
			\theta_{1} \\ \theta_{2} \\ \frac{\theta_{1} + \theta_{2}}{2}
		\end{bmatrix}, \quad \hat{\boldsymbol{\eta}}(\bar{\boldsymbol{\theta}}) = \begin{bmatrix}
			\hat{\eta}(\theta_{1}) \\ \hat{\eta}(\theta_{2}) \\ \hat{\eta}\left( \frac{\theta_{1}+\theta_{2}}{2}\right)
		\end{bmatrix} 
	\end{equation*}	
	and $\hat{\eta}(\theta)$ denotes a generic cost function. Using the previously derived asymptotic stochastic behavior of the cost function vector $\hat{\boldsymbol{\eta}}(\bar{\boldsymbol{\theta}})$ in Theorem \ref{thm: second_order}, the probability of resolution can be expressed as the cumulative distribution function\cite{a55}
	\begin{equation}
		P_{\text{res}} = \operatorname{Pr} \left( \boldsymbol{u}^{\operatorname{T}}\hat{\boldsymbol{\eta}}(\bar{\boldsymbol{\theta}}) < 0 \right) = \int_{-\infty}^{0} f_{\boldsymbol{u}^{\operatorname{T}}\hat{\boldsymbol{\eta}}(\bar{\boldsymbol{\theta}})}(x) \mathrm{d}x
		\label{eq: predicted_probability_of_resolution}
	\end{equation}
	where $f_{\boldsymbol{u}^{\operatorname{T}}\hat{\boldsymbol{\eta}}(\bar{\boldsymbol{\theta}})}(x)$ denotes the pdf of the test quantity $\boldsymbol{u}^{\operatorname{T}}\hat{\boldsymbol{\eta}}(\bar{\boldsymbol{\theta}})$ that can be asymptotically approximated by a Gaussian distribution with law $\mathcal{N}(\boldsymbol{u}^{\operatorname{T}}\bar{\boldsymbol{\eta}}(\bar{\boldsymbol{\theta}}), N^{-1} \boldsymbol{u}^{\operatorname{T}} \boldsymbol{\Gamma}(\bar{\boldsymbol{\theta}}) \boldsymbol{u})$, where $\bar{\boldsymbol{\eta}}(\bar{\boldsymbol{\theta}})$ and $\boldsymbol{\Gamma}(\bar{\boldsymbol{\theta}})$ take the form in \eqref{eq: music_first_order}-\eqref{eq: music_second_order} or in \eqref{eq: g_music_first_order}-\eqref{eq: g_music_second_order} depending on the subspace method under evaluation \cite{a40}.
		
\section{Proof of Theorem \ref{thm: second_order}} 
	\label{sec: proof_second_order}
	We begin by reviewing some standard arguments that allow to represent the \ac{MUSIC} and g-\ac{MUSIC} cost functions as contour integrals of the resolvent of the sample covariance matrix, which is a matrix-valued function of a complex variable $z \in \mathbb{C} \backslash \mathbb{R}$ defined as $\hat{\boldsymbol{Q}}(z) = (\hat{\boldsymbol{R}} - z\boldsymbol{I}_M)^{-1}$ \cite{a20,a43}. Observe that, using the eigenvalue decomposition of $\hat{\boldsymbol{R}}$ in \eqref{eq: sample_covariance_matrix_received_signal_eigendecomposition} we are able to write 
	\begin{equation*}
		\hat{\boldsymbol{Q}}(z) = \sum_{m=1}^{M} \frac{1}{\hat{\lambda}_{m}-z} \hat{\boldsymbol{e}}_{m} \hat{\boldsymbol{e}}_{m}^{\operatorname{H}} 
	\end{equation*}
	and therefore a direct application of the Cauchy integral theorem shows that 
    \begin{equation}	
			\sum_{m=1}^{M-K} \hat{\boldsymbol{e}}_{m} \hat{\boldsymbol{e}}_{m}^{\operatorname{H}} = \frac{1}{2\pi \mathrm{j}} \oint_{\mathcal{C}_{z}} \hat{\boldsymbol{Q}}(z) \mathrm{d}z
			\label{eq: music_first_order_contour_integral}
	\end{equation}
	where $\mathcal{C}_{z}$ is a clockwise oriented simple closed contour that encloses only the $K_1 = M-K$ smallest eigenvalues of the sample covariance matrix $\hat{\boldsymbol{R}}$. 
	
	Now, it is well known that for all $M,N$ sufficiently large, the $M-K$ smallest eigenvalues of $\hat{\boldsymbol{R}}$ are located inside $[x_{1}^{},x_{1}^{+}]$ plus $\{0\}$ if $M>N$ \cite{a44,a46}. Thanks to this fact, we can always deform the contour in \eqref{eq: music_first_order_contour_integral} and make it independent of $M$. This deterministic contour may cross the real axis at the points $\varrho$ (defined in Assumption \ref{as: subspace_separation}) and any other point in the negative real axis. Consequently, one can investigate the asymptotic behavior of the eigenvectors of the sample covariance matrix by equivalently characterizing the asymptotic behavior of the resolvent $\hat{\boldsymbol{Q}}(z)$. 
	
	\subsection{Asymptotic Equivalent of \ac{MUSIC} and Derivation of g-\ac{MUSIC}}
	Under Assumptions \ref{as:asymptotic_growth}-\ref{as: rewritten_sample_covariance_matrix} and for fixed $\theta$ and $z\in\mathbb{C}\setminus\mathbb{R}$ we have  
    \begin{equation}
    	\label{eq: convergence_quadratic_forms}
        \left| \boldsymbol{a}^{\operatorname{H}}(\theta) \hat{\boldsymbol{Q}}(z)\boldsymbol{a}(\theta) - \frac{\omega(z)}{z} \boldsymbol{a}^{\operatorname{H}}(\theta) \left({\boldsymbol{R}} - \omega(z) \boldsymbol{I}_M \right)^{-1}\boldsymbol{a}(\theta) \right| \rightarrow 0
    \end{equation}
    almost surely \cite{a41,a42}, where $\omega(z)$ is defined as in \eqref{eq: definition_omega_z}. A direct application of the dominated convergence theorem therefore shows that, since
    \begin{equation*}
        \hat{\eta}_{\text{c}}(\theta) = \frac{1}{2\pi \mathrm{j}} \oint_{\mathcal{C}_{z}} \boldsymbol{a}^{\operatorname{H}}(\theta) \hat{\boldsymbol{Q}}(z) \boldsymbol{a}(\theta) \mathrm{d}z
    \end{equation*}
    we can write
    \begin{equation*}
        \left| \hat{\eta}_{\text{c}}(\theta) - \frac{1}{2\pi \mathrm{j}} \oint_{\mathcal{C}_{z}} \frac{\omega(z)}{z} \boldsymbol{a}^{\operatorname{H}}(\theta) \left({\boldsymbol{R}} - \omega(z) \boldsymbol{I}_M \right)^{-1}\boldsymbol{a}(\theta) \mathrm{d}z \right| \rightarrow 0.
    \end{equation*}
    The right hand side of the above difference can be shown to coincide with the expression in \eqref{eq: music_first_order} after applying the change of variables $z= z(\omega)$ as given in \eqref{eq: mapping_omega_to_z} and solving the corresponding integral via Cauchy integration \cite{a43}.
   	
	A very similar idea can be used to derive the g-\ac{MUSIC} cost function. We begin by expressing the cost function that we want to estimate using again the Cauchy integral, that is
	 \begin{equation*}
        \bar{\eta}_{\text{g}}(\theta) = \frac{1}{2\pi \mathrm{j}} \oint_{\mathcal{C}_{\omega}} \boldsymbol{a}^{\operatorname{H}}(\theta) \left( \boldsymbol{R} - \omega  \boldsymbol{I}_M \right)^{-1} \boldsymbol{a}(\theta) \mathrm{d}\omega
    \end{equation*}
    where now $\mathcal{C}_{\omega}$ is a clockwise oriented simple closed contour that encloses $\gamma_1$ and no other eigenvalue. In particular, it can be shown that \cite{a69} $z \mapsto \omega(z)$ in \eqref{eq: definition_omega_z} can be used as parametrization of such a contour, in the sense that we can choose $\mathcal{C}_{\omega} = \omega(\mathcal{C}_{z})$ with $\mathcal{C}_{z}$ as defined above. It can be seen that the generated contour has the properties that we are looking for \cite{a69}, and consequently we can express 
    \begin{equation}
        \bar{\eta}_{\text{g}}(\theta) =  \frac{1}{2\pi \mathrm{j}} \oint_{\mathcal{C}_{z}} \boldsymbol{a}^{\operatorname{H}}(\theta) \left( \boldsymbol{R} - \omega (z) \boldsymbol{I}_M \right)^{-1} \boldsymbol{a}(\theta) \omega^{\prime}(z) \mathrm{d}z
        \label{eq: g_music_cauchy_integral_intermediate_1}
    \end{equation}
    with $\omega^{\prime}(z) $ as in \eqref{eq: first_order_derivative_wrt_x_omega_x}. Now, observe that the above contour does not depend on $M$, so that invoking again the dominated convergence theorem we can find an $M,N$-consistent estimator of $\bar{\eta}_{\text{g}}(\theta)$ by simply replacing the integrand above with a corresponding $M,N$-consistent estimate. Using \eqref{eq: convergence_quadratic_forms} we readily see that we only need to find an $M,N$-consistent estimator of $\omega(z)$ and $\omega^{\prime}(z)$. An $M,N$-consistent estimator of $\omega(z)$ can be obtained by using the well-known fact that \cite{a41,a42}, under Assumptions \ref{as:asymptotic_growth}-\ref{as: rewritten_sample_covariance_matrix} we have 
    \begin{equation*}
        \left| \frac{1}{M}\tr{\hat{\boldsymbol{Q}}(z)} - \frac{\omega(z)}{z}  \frac{1}{M} \tr{\left({\boldsymbol{R}} - \omega(z) \boldsymbol{I}_M \right)^{-1}} \right| \rightarrow 0
    \end{equation*}
    almost surely for fixed $z \in \mathbb{C} \backslash \mathbb{R}$. Using the fact that $\omega(z)$ is a solution to the equation in \eqref{eq: definition_omega_z}, this can be reformulated as 
    \begin{equation*}
    \left| 		\omega(z) - z\left( 1- \frac{1}{N}\tr{\hat{\boldsymbol{R}}(\hat{\boldsymbol{R}}-z\boldsymbol{I}_{M})^{-1}} \right)^{-1} \right| \rightarrow 0
	\end{equation*}		
	(see \cite[Lemma 8]{a63}). Therefore, the right hand side of the above difference, which will be denoted as $\hat{\omega}(z)$, is an $M,N$-consistent estimator of $\omega(z)$. Furthermore, the $M,N$-consistent estimator of $\omega^{\prime}(z)$ can be obtained by using the fact that convergence of holomorphic functions imply the convergence of their derivatives, so that 
	\begin{equation*}
		\hat{\omega}^{\prime}(z) = \frac{\partial \hat{\omega}(z)}{\partial z} = \frac{\hat{\omega}(z)}{z} + \frac{\hat{\omega}(z)^{2}}{z}\frac{1}{N}\tr{\hat{\boldsymbol{R}}\left(\hat{\boldsymbol{R}}-z\boldsymbol{I}_{M}\right)^{-2}}    
	\end{equation*}
	is an $M,N$-consistent estimator of $\omega^{\prime}(z)$. We can therefore obtain the $M,N$-consistent estimator of the g-\ac{MUSIC} cost function by inserting these estimators into the Cauchy integral in \eqref{eq: g_music_cauchy_integral_intermediate_1}, that is 
	\begin{equation}
		\hat{\eta}_{\text{g}}({\theta}) = \frac{1}{2\pi \mathrm{j}} \oint_{\mathcal{C}_{z}}  \boldsymbol{a}^{\operatorname{H}}(\theta)
		\hat{\boldsymbol{Q}}(z)
		 \boldsymbol{a}(\theta)
		\frac{z}{\hat{\omega}(z)}\hat{\omega}^{\prime}(z) \mathrm{d}z. 
		\label{eq: g_music_cost_function_contour_integral}
	\end{equation}
	The contour integral in \eqref{eq: g_music_cost_function_contour_integral} is solved in closed-form in \cite[Theorem 2]{a69} (also see \cite[Theorem 3]{a20}) using conventional residue calculus, which yields the expression in \eqref{eq: g_music_cost_function}. 

\subsection{Asymptotic Fluctuations}
	Let us now consider again the fluctuations of these two cost functions at a set of $L$ fixed distinct angles, $\bar{\theta}_1,\ldots,\bar{\theta}_L$. Using the previously derived asymptotic equivalents of the \ac{MUSIC} cost function we can express 
	\begin{equation*}
		\sqrt{N}\left( \hat{\eta}_{\text{c}}(\bar{\theta}_{l}) - \bar{\eta}_{\text{c}}(\bar{\theta}_{l}) \right) = \frac{1}{2\pi \mathrm{j}}\oint_{\mathcal{C}_{z}} \sqrt{N} \boldsymbol{a}^{\operatorname{H}}(\bar{\theta}_l) \left( \hat{\boldsymbol{Q}}(z) - \bar{\boldsymbol{Q}}(z) \right) \boldsymbol{a}(\bar{\theta}_l) \mathrm{d}z
	\end{equation*}
	where we have introduced $\bar{\boldsymbol{Q}}(z) = \frac{\omega(z)}{z} \left(\boldsymbol{R}- \omega(z) \boldsymbol{I}_M \right)^{-1}$. Similarly, for the g-\ac{MUSIC} cost function we have
	\begin{equation*}
		\begin{aligned}
			 &\sqrt{N}\left( \hat{\eta}_{\text{g}}(\bar{\theta}_{l}) - \bar{\eta}_{\text{g}}(\bar{\theta}_{l}) \right)\\
			=& \frac{1}{2\pi \mathrm{j}}\oint_{\mathcal{C}_{z}} \sqrt{N} \boldsymbol{a}^{\operatorname{H}}(\bar{\theta}_l) \left(\frac{z \hat{\omega}^{\prime}(z)}{\hat{\omega}(z)} \hat{\boldsymbol{Q}}(z) -  \frac{z\omega^{\prime}(z)}{\omega(z)}\bar{\boldsymbol{Q}}(z) \right) \boldsymbol{a}(\bar{\theta}_l) \mathrm{d}z.
		\end{aligned}
	\end{equation*}
	It can be seen that both expressions take the form
	\begin{equation}
		\label{eq: general_form_second_order_derivation}
		\begin{aligned}
			&\sqrt{N}\left( \hat{\eta}(\bar{\theta}_{l}) - \bar{\eta}(\bar{\theta}_{l}) \right)\\
			=& \frac{1}{2\pi \mathrm{j}}\oint_{\mathcal{C}_{z}} \sqrt{N} \boldsymbol{a}^{\operatorname{H}}(\bar{\theta}_l) \left({\hat{h}(z)} \hat{\boldsymbol{Q}}(z) - \bar{h}(z) \bar{\boldsymbol{Q}}(z) \right) \boldsymbol{a}(\bar{\theta}_l) \mathrm{d}z
		\end{aligned}
	\end{equation}
	where $\hat{\eta}(\theta)$ is a generic cost function with deterministic equivalent $\bar{\eta}(\theta)$. In case of the conventional \ac{MUSIC} cost function we have $\hat{h}(z) = 1$ and $\bar{h}(z) = 1$ whereas in case of the g-\ac{MUSIC} cost function $\hat{h}(z) = \frac{z \hat{\omega}^{\prime}(z)}{\hat{\omega}(z)}$ and $\bar{h}(z) = \frac{z \omega^{\prime}(z)}{\omega(z)}$. It is well known \cite{a89,a80} that the statistic in \eqref{eq: general_form_second_order_derivation} asymptotically fluctuates as a Gaussian random variable with zero-mean and positive variance. However, we are interested in the more general case of the asymptotic joint distribution of a collection of random variables, namely 
	\begin{equation}
		\sqrt{N}\left(\hat{\boldsymbol{\eta}}(\bar{\boldsymbol{\theta}})-\bar{\boldsymbol{\eta}}(\bar{\boldsymbol{\theta}})\right) = \begin{bmatrix}
			 \sqrt{N}(\hat{\eta}(\bar{\theta}_{1})-\bar{\eta}(\bar{\theta}_{1})) \\
			\vdots \\
			 \sqrt{N}(\hat{\eta}(\bar{\theta}_{L})-\bar{\eta}(\bar{\theta}_{L}))
		\end{bmatrix}
		\label{eq: general_form_vector_second_order_derivation}
	\end{equation}
	where $\sqrt{N}(\hat{\eta}(\bar{\theta}_{l})-\bar{\eta}(\bar{\theta}_{l})$ for $l=1,\dots,L$ is defined in \eqref{eq: general_form_second_order_derivation}. The second order asymptotic behavior in Theorem \ref{thm: second_order} is derived by establishing pointwise convergence of the characteristic function of the statistic $\sqrt{N}\left(\hat{\boldsymbol{\eta}}(\bar{\boldsymbol{\theta}})-\bar{\boldsymbol{\eta}}(\bar{\boldsymbol{\theta}})\right)$ in \eqref{eq: general_form_vector_second_order_derivation} to the characteristic function of a Gaussian distributed random variable. Moreover, by the Cram\'er-Wold device \cite{a79} it is sufficient to establish that the one-dimensional projection of the statistic in \eqref{eq: general_form_vector_second_order_derivation}, namely
	\begin{equation}
		 \sum_{l=1}^{L}\sqrt{N}w_{l}\left(\hat{\eta}(\bar{\theta}_{l})-\bar{\eta}(\bar{\theta}_{l})\right) = \sqrt{N}\boldsymbol{w}^{\operatorname{T}}\left(\hat{\boldsymbol{\eta}}(\bar{\boldsymbol{\theta}})-\bar{\boldsymbol{\eta}}(\bar{\boldsymbol{\theta}})\right) 
		 \label{eq: one_dimensional_projection_vector_general}
	\end{equation}
	is asymptotically Gaussian distributed for any collection of real-valued bounded quantities $\boldsymbol{w}=\lbrack w_{1}, \dots, w_{L}\rbrack^{\operatorname{T}} \in \mathbb{R}^{L}$ to show that \eqref{eq: general_form_vector_second_order_derivation} is asymptotically jointly Gaussian distributed. Additionally, by L\'evy's continuity Theorem convergence in distribution of a set of random variables can be proven by establishing pointwise convergence of the characteristic functions. Let $\chi(r)$ be defined as $\chi(r)= \exp \left( \mathrm{j} r \sqrt{N}\boldsymbol{w}^{\operatorname{T}}\left(\hat{\boldsymbol{\eta}}(\bar{\boldsymbol{\theta}})-\bar{\boldsymbol{\eta}}(\bar{\boldsymbol{\theta}})\right) \right)$ and let $\mathbb{E}\lbrack \chi(r)\rbrack$ be the corresponding characteristic function of the one-dimensional projection in \eqref{eq: one_dimensional_projection_vector_general}. The target is to study the asymptotic behavior of the characteristic function $\mathbb{E}\lbrack \chi(r)\rbrack$ in the asymptotic regime where $M,N \rightarrow \infty$ at the same rate by establishing convergence towards the characteristic function of a Gaussian random variable with zero-mean and covariance $ \boldsymbol{w}^{\operatorname{T}} \boldsymbol{\Gamma}(\bar{\boldsymbol{\theta}}) \boldsymbol{w}$, that is $\mathbb{E}\lbrack \chi(r) \rbrack - \bar{\chi}(r) \rightarrow 0$ where
	\begin{equation} 
		\bar{\chi}(r) = \exp\left( - r^{2} \frac{\boldsymbol{w}^{\operatorname{T}} \boldsymbol{\Gamma}(\bar{\boldsymbol{\theta}}) \boldsymbol{w}}{2}\right).
		\label{eq: regularization_function_gaussian_general}
	\end{equation}
	In \eqref{eq: regularization_function_gaussian_general}, $\boldsymbol{\Gamma}(\bar{\boldsymbol{\theta}}) \in \mathbb{R}^{L \times L}$ characterizes the second order asymptotic behavior of the statistic $\sqrt{N}(\hat{\boldsymbol{\eta}}(\bar{\boldsymbol{\theta}})-\bar{\boldsymbol{\eta}}(\bar{\boldsymbol{\theta}}))$ in \eqref{eq: general_form_vector_second_order_derivation}, which is the quantity of interest. In the asymptotic analysis of the characteristic function $\mathbb{E}\lbrack \chi(r)\rbrack$ of the one-dimensional projection $\sqrt{N}\boldsymbol{w}^{\operatorname{T}}(\hat{\boldsymbol{\eta}}(\bar{\boldsymbol{\theta}})-\bar{\boldsymbol{\eta}}(\bar{\boldsymbol{\theta}}))$ we rely on the integration by parts formula \cite{a56,a77,a83} and the Nash-Poincar\'e inequality \cite{a77,a78,a83}. It can be shown that the second order asymptotic behavior of the statistic in \eqref{eq: one_dimensional_projection_vector_general} is computed as follows. 
	\begin{theorem}
		\label{thm: second_order_general_theorem}
	Consider an $L \times L$ matrix $\boldsymbol{\Gamma}(\bar{\boldsymbol{\theta}})$ the elements of which are given by 
		\begin{equation}
			\begin{aligned}
				\left \lbrack \boldsymbol{\Gamma}(\bar{\boldsymbol{\theta}})\right \rbrack_{p,q}=\frac{1}{2 \pi \mathrm{j}}\frac{1}{2 \pi \mathrm{j}} \oint_{\mathcal{C}_{\omega_{1}}}\oint_{\mathcal{C}_{\omega_{2}}} \bar{h}(z(\omega_{1})) \bar{h}(z(\omega_{2})) \frac{\partial z(\omega_{1})}{\partial \omega_{1}}\frac{\partial z(\omega_{2})}{\partial \omega_{2}}& \\[-2pt]
				\times \frac{\omega_{1}}{z(\omega_{1})} \frac{\omega_{2}}{z(\omega_{2})} \frac{\Upsilon_{p,q}(\omega_{1},\omega_{2})}{1-\Omega(\omega_{1},\omega_{2})} \mathrm{d}\omega_{2}\mathrm{d}\omega_{1}&
			\end{aligned}
			\label{eq: second_order_general}
		\end{equation}
		for $p,q \in \lbrace 1, \dots, L \rbrace$, where $\mathcal{C}_{\omega_1}$, $\mathcal{C}_{\omega_2}$ are two clockwise oriented simple closed contours enclosing only the smallest eigenvalue $\gamma_{1}$ of $\boldsymbol{R}$,
		\begin{equation}
			\begin{aligned}
				\Upsilon_{p,q}(\omega_{1},\omega_{2}) =& \boldsymbol{a}(\bar{\theta}_{p})^{\operatorname{H}} ( \boldsymbol{R} - \omega_{1}\boldsymbol{I}_{M})^{-1} \boldsymbol{R} (\boldsymbol{R} - \omega_{2}\boldsymbol{I}_{M})^{-1} \boldsymbol{a}(\bar{\theta}_{q}) \\
				\times& \boldsymbol{a}(\bar{\theta}_{q})^{\operatorname{H}}  (\boldsymbol{R} - \omega_{2}\boldsymbol{I}_{M})^{-1}\boldsymbol{R}(\boldsymbol{R}-\omega_{1}\boldsymbol{I}_{M})^{-1}\boldsymbol{a}(\bar{\theta}_{p})
			\end{aligned}
			\label{eq: upsilon_theorem}
		\end{equation}
		and where
		\begin{equation}
			\Omega(\omega_{1},\omega_{2}) = \frac{1}{N}\tr{\boldsymbol{R}(\boldsymbol{R}-\omega_{1}\boldsymbol{I}_{M})^{-1}\boldsymbol{R}(\boldsymbol{R}-\omega_{2}\boldsymbol{I}_{M})^{-1}}.	
			\label{eq: omega_captial_theorem}
		\end{equation}
		Function $z(\omega)$ is defined in \eqref{eq: mapping_omega_to_z} 		and its first order derivative can be computed as
		\begin{equation}
			\frac{\partial z(\omega)}{\partial \omega} = 1- \frac{1}{N}\sum_{r=1}^{\bar{M}} \frac{K_{r}\gamma_{r}^{2}}{(\gamma_{r}-\omega)^{2}}.
			\label{eq: derivative_z_of_omega_wrt_omega}
		\end{equation}
		Let Assumptions \ref{as:asymptotic_growth}-\ref{as: subspace_separation} hold true and assume additionally that the eigenvalues of $\boldsymbol{\Gamma}(\bar{\boldsymbol{\theta}})$ are all located in a compact interval of the positive real axis independent of $M$. Then, $\sqrt{N}\boldsymbol{\Gamma}(\bar{\boldsymbol{\theta}})^{-1/2}(\hat{\boldsymbol{\eta}}(\bar{\boldsymbol{\theta}})-\bar{\boldsymbol{\eta}}(\bar{\boldsymbol{\theta}}))$ in \eqref{eq: general_form_vector_second_order_derivation} converges in law to a standardized multivariate Gaussian distribution.
	\end{theorem}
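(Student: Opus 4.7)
The plan is to combine the Cram\'er--Wold device with L\'evy's continuity theorem, reducing the multivariate CLT to a one-dimensional statement about the characteristic function of $\sqrt{N}\boldsymbol{w}^{\operatorname{T}}(\hat{\boldsymbol{\eta}}(\bar{\boldsymbol{\theta}})-\bar{\boldsymbol{\eta}}(\bar{\boldsymbol{\theta}}))$ for an arbitrary bounded $\boldsymbol{w}\in\mathbb{R}^{L}$. Since the joint integrand in \eqref{eq: general_form_second_order_derivation} is a linear combination of resolvent quadratic forms, the whole projection can be written as a single contour integral of $\sqrt{N}\boldsymbol{a}^{\operatorname{H}}(\bar{\theta}_{l})(\hat{h}(z)\hat{\boldsymbol{Q}}(z)-\bar{h}(z)\bar{\boldsymbol{Q}}(z))\boldsymbol{a}(\bar{\theta}_{l})$. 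The goal is then to show $\mathbb{E}[\chi(r)]-\bar{\chi}(r)\to 0$ with $\bar{\chi}(r)$ as in \eqref{eq: regularization_function_gaussian_general}. The standard route is to derive the first-order ODE
\begin{equation*}
\frac{\partial}{\partial r}\mathbb{E}[\chi(r)] = -r\,\boldsymbol{w}^{\operatorname{T}}\boldsymbol{\Gamma}(\bar{\boldsymbol{\theta}})\boldsymbol{w}\,\mathbb{E}[\chi(r)] + \varepsilon_{M}(r)
\end{equation*}
where $\varepsilon_{M}(r)\to 0$ uniformly in $r$ on compact sets; integrating this ODE with $\mathbb{E}[\chi(0)]=1$ yields the desired Gaussian characteristic function.

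To obtain that ODE I would compute $\partial_{r}\mathbb{E}[\chi(r)]=\mathrm{j}\sqrt{N}\,\mathbb{E}[\boldsymbol{w}^{\operatorname{T}}(\hat{\boldsymbol{\eta}}(\bar{\boldsymbol{\theta}})-\bar{\boldsymbol{\eta}}(\bar{\boldsymbol{\theta}}))\chi(r)]$ and push the expectation inside the contour integral in \eqref{eq: general_form_second_order_derivation} by Fubini, since $\mathcal{C}_{z}$ is deterministic and bounded away from the support under Assumption~\ref{as: subspace_separation}. The inner expectation $\mathbb{E}[\boldsymbol{a}^{\operatorname{H}}(\bar{\theta}_{l})(\hat{h}(z)\hat{\boldsymbol{Q}}(z)-\bar{h}(z)\bar{\boldsymbol{Q}}(z))\boldsymbol{a}(\bar{\theta}_{l})\chi(r)]$ is then treated by the Gaussian integration-by-parts formula \cite{a56,a77,a83}: differentiating in each observation $\boldsymbol{y}(n)$ generates (i) a deterministic piece that, after applying the resolvent identity and the definition of $\omega(z)$ in \eqref{eq: definition_omega_z}, produces a self-consistent relation involving $\bar{\boldsymbol{Q}}(z)$; (ii) a second-order piece proportional to $-\mathrm{j}r$ that, after recognising $\chi(r)$, contributes the variance term; and (iii) a remainder involving centred resolvent quadratic forms times $\chi(r)$, which must be shown to be $o(1)$.

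The bulk of the argument is then to evaluate the second-order piece. Writing it as a double contour integral in two complex variables $z_{1},z_{2}$, Gaussian IBP produces a kernel proportional to $\boldsymbol{a}^{\operatorname{H}}(\bar{\theta}_{p})\hat{\boldsymbol{Q}}(z_{1})\boldsymbol{a}(\bar{\theta}_{q})\boldsymbol{a}^{\operatorname{H}}(\bar{\theta}_{q})\hat{\boldsymbol{Q}}(z_{2})\boldsymbol{a}(\bar{\theta}_{p})$ together with a multiplier of the form $[1-N^{-1}\tr{\hat{\boldsymbol{R}}\hat{\boldsymbol{Q}}(z_{1})\hat{\boldsymbol{R}}\hat{\boldsymbol{Q}}(z_{2})}]^{-1}$. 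Replacing each random quadratic form and trace by its deterministic equivalent via \eqref{eq: convergence_quadratic_forms} (and the analogous trace convergence \cite{a41,a42}), the kernel becomes $\Upsilon_{p,q}(\omega(z_{1}),\omega(z_{2}))/(1-\Omega(\omega(z_{1}),\omega(z_{2})))$, which after the change of variable $z=z(\omega)$ in \eqref{eq: mapping_omega_to_z} delivers precisely \eqref{eq: second_order_general}. Positivity and boundedness of the eigenvalues of $\boldsymbol{\Gamma}(\bar{\boldsymbol{\theta}})$ assumed in the statement guarantee that the deformed contours $\mathcal{C}_{\omega_{1}},\mathcal{C}_{\omega_{2}}$ enclose only $\gamma_{1}$ and avoid the pole set of $1-\Omega(\omega_{1},\omega_{2})$.

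The main obstacle will be step (iii), i.e. controlling the residual terms generated by the IBP expansion. These are sums of traces and quadratic forms of products of resolvents weighted by $\chi(r)$, and they must be shown to vanish uniformly in $r$ at rate $o(1)$. The natural tool is the Nash--Poincar\'e inequality \cite{a77,a78,a83}, which bounds the variance of smooth functionals of Gaussian vectors by the expected squared norm of their gradients; applied to each resolvent entry, and combined with the spectral-norm bound $\|\hat{\boldsymbol{Q}}(z)\|\le d(z,\mathrm{supp}(q_{M}))^{-1}$ which is uniform on $\mathcal{C}_{z}$ thanks to Assumption~\ref{as: subspace_separation} and the eigenvalue confinement results of \cite{a44,a46}, this produces $O(N^{-1/2})$ control of the centred quadratic forms. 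The delicate point is to iterate this argument so that all cross terms generated by successive IBP steps collapse into the single self-consistent kernel $1/(1-\Omega)$ with a remainder that truly goes to zero; this requires verifying that the geometric series implicit in the inversion of $1-\Omega(\omega_{1},\omega_{2})$ converges uniformly on the chosen contours, which is where the assumed spectral separation and the boundedness assumption on $\boldsymbol{\Gamma}(\bar{\boldsymbol{\theta}})$ are both used.
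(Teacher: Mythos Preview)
Your proposal is correct and follows essentially the same route as the paper: the text preceding the theorem already announces exactly this strategy (Cram\'er--Wold reduction to a one-dimensional projection, L\'evy's continuity theorem, and convergence of the characteristic function $\mathbb{E}[\chi(r)]$ to the Gaussian $\bar{\chi}(r)$ via Gaussian integration by parts and Nash--Poincar\'e variance control), with the detailed computations deferred to \cite{a84}. One small correction of attribution: the uniform bound $\sup |\Omega(\omega(z_{1}),\omega(z_{2}))|<1$ on $\mathcal{C}_{z}\times\mathcal{C}_{z}$ that makes the geometric series/self-consistent kernel well defined is a consequence of the separation Assumption~\ref{as: subspace_separation} alone (this is Lemma~\ref{lemma: bound_Omega} in the paper), whereas the assumption that the eigenvalues of $\boldsymbol{\Gamma}(\bar{\boldsymbol{\theta}})$ lie in a fixed compact interval of $(0,\infty)$ is used only to guarantee that $\boldsymbol{\Gamma}(\bar{\boldsymbol{\theta}})^{-1/2}$ is well defined and uniformly bounded, so that the normalized statistic converges to a non-degenerate standard Gaussian.
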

	\begin{proof}
		A detailed proof is provided in \cite{a84}.
	\end{proof}
	
    With the help of Theorem \ref{thm: second_order_general_theorem} above, the proof of Theorem \ref{thm: second_order} follows directly once we have been able to (i) compute the integral that defines the asymptotic covariance matrix for these two cost functions and (ii) prove that the maximum (resp. minimum) eigenvalue of these two covariance matrices is bounded (resp. bounded away from zero). 

    Let us first consider the computation of the two asymptotic covariance matrices, which follows from solving the integral in \eqref{eq: second_order_general} for $\bar{h}(z(\omega)) = 1$ (conventional \ac{MUSIC}) and for $\bar{h}(z(\omega)) = \frac{z(\omega)}{\omega}\frac{\partial \omega(z)}{\partial z}$ (g-\ac{MUSIC}). To simplify the computation of the asymptotic covariance, we express $\boldsymbol{\Gamma}(\bar{\boldsymbol{\theta}})$ in \eqref{eq: second_order_general} as 
    \begin{equation}
        \label{eq: Gamma_as_Sum}
        \boldsymbol{\Gamma}(\bar{\boldsymbol{\theta}}) = \sum_{r=1}^{\bar{M}} \sum_{k=1}^{\bar{M}} \xi(r,k)  \boldsymbol{A}^{\operatorname{H}}(\bar{\boldsymbol{\theta}}) \boldsymbol{E}_{r}\boldsymbol{E}_{r}^{\operatorname{H}} \boldsymbol{A}(\bar{\boldsymbol{\theta}}) \odot \left(
        \boldsymbol{A}^{\operatorname{H}}(\bar{\boldsymbol{\theta}}) \boldsymbol{E}_{k}\boldsymbol{E}_{k}^{\operatorname{H}} \boldsymbol{A}(\bar{\boldsymbol{\theta}})\right)^{\operatorname{T}}
    \end{equation}
    where $\odot$ denotes element-wise product and where the coefficients $\xi(r,k)$ are defined as
    \begin{equation}
		\xi(r,k) = \frac{1}{2\pi \mathrm{j}} \oint_{\mathcal{C}_{\omega_{1}}} \bar{h}(z(\omega_{1})) \frac{\omega_{1}}{z(\omega_{1})}  \frac{\partial z(\omega_{1})}{\partial \omega_{1}} \frac{\gamma_r \gamma_k \mathcal{I}_{r,k}(\omega_{1})}{(\gamma_{r}-\omega_{1})(\gamma_{k}-\omega_{1})} \mathrm{d}\omega_{1},
		\label{eq: second_order_general_separated_omega_1}
	\end{equation}
	with 
	\begin{equation}
		\mathcal{I}_{r,k}(\omega_{1}) = \frac{1}{2 \pi \mathrm{j}}\oint_{\mathcal{C}_{\omega_{2}}}  \bar{h}(z(\omega_{2})) \frac{\omega_{2}}{z(\omega_{2})} \frac{\partial z(\omega_{2})}{\partial \omega_{2}} \frac{\frac{1}{\left(\gamma_{r}-\omega_{2}\right)\left(\gamma_{k}-\omega_{2}\right)}}{1-\Omega(\omega_{1},\omega_{2})}\mathrm{d}\omega_{2}.
		\label{eq: second_order_general_separated_omega_2}
	\end{equation}
	These two integrals are solved in Appendices \ref{app: second_order_music} and \ref{app: second_order_g_music} for the conventional \ac{MUSIC} and the g-\ac{MUSIC} cost functions, leading to the expressions in \eqref{eq: music_second_order} and \eqref{eq: g_music_second_order} respectively. 
	
	To obtain an upper bound on the spectral norm of $\boldsymbol{\Gamma}(\bar{\boldsymbol{\theta}})$, we consider the expression in \eqref{eq: Gamma_as_Sum} and observe that the coefficients $\xi(r,k)$ are all real-valued and positive. The fact that they are real-valued follows from the property $z(\omega^{\ast}) = z^{\ast}(\omega)$ in \eqref{eq: mapping_omega_to_z} as well as the fact that $\bar{h}(z^{\ast}) = \bar{h}^{\ast}(z)$ for both definitions of this function. The fact that the coefficients $\xi(r,k)$ in \eqref{eq: second_order_general_separated_omega_1} are non-negative can be shown by using the following results. 

	\begin{lemma}
		\label{lemma: bound_Omega}
	    Let $\Omega(\omega_1,\omega_2)$ be defined as in \eqref{eq: omega_captial_theorem}. Then,
	    \begin{equation}
	    	\sup_M \sup_{(z_1,z_2) \in \mathcal{C}_z \times \mathcal{C}_z} | \Omega(\omega(z_1), \omega(z_2)) |  < 1. 
	      	\label{eq: bound_Omega}
	    \end{equation}
	\end{lemma}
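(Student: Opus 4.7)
The plan is to reduce the claim to a one-variable bound via Cauchy--Schwarz and then to exploit the imaginary part of the defining equation for $\omega(z)$ to control that one-variable quantity uniformly in $M$.

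Set $\boldsymbol{A}_{i}=\boldsymbol{R}^{1/2}(\boldsymbol{R}-\omega_{i}\boldsymbol{I}_{M})^{-1}\boldsymbol{R}^{1/2}$ for $i=1,2$, so that $\Omega(\omega_{1},\omega_{2})=\frac{1}{N}\tr{\boldsymbol{A}_{1}\boldsymbol{A}_{2}}$. Applying the Cauchy--Schwarz inequality for the Frobenius inner product and diagonalizing $\boldsymbol{R}$ inside $\tr{\boldsymbol{A}_{i}\boldsymbol{A}_{i}^{\operatorname{H}}}$ yields
\begin{equation*}
    |\Omega(\omega_{1},\omega_{2})|^{2}\leq\Psi(\omega_{1})\Psi(\omega_{2}),\qquad \Psi(\omega):=\frac{1}{N}\sum_{r=1}^{\bar{M}}\frac{K_{r}\gamma_{r}^{2}}{|\gamma_{r}-\omega|^{2}}.
\end{equation*}
Thus it is enough to prove $\sup_{M}\sup_{z\in\mathcal{C}_{z}}\Psi(\omega(z))<1$.

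For the pointwise inequality, I would split into two cases. When $z\in\mathcal{C}_{z}\setminus\mathbb{R}$, taking imaginary parts in \eqref{eq: definition_omega_z} and using $\ima{(\gamma_{r}-\omega)^{-1}}=\ima{\omega}/|\gamma_{r}-\omega|^{2}$ gives
\begin{equation*}
    \ima{z}=\ima{\omega(z)}\bigl(1-\Psi(\omega(z))\bigr).
\end{equation*}
Since $\omega(z)$ lies in the same open half plane as $z$ by its definition on $\mathbb{C}^{+}$ and the reflection rule, both sides have matching nonzero sign, which forces $\Psi(\omega(z))<1$. When $z$ is one of the two real crossings of $\mathcal{C}_{z}$ (at $\varrho$ and at a fixed point on the negative real axis), the real-axis branch of the definition directly gives $\Psi(\omega(z))\leq1$, with equality only at boundary points of the support $\mathcal{S}$; since the contour stays outside $\mathcal{S}$ by a uniform margin under Assumption~\ref{as: subspace_separation}, strict inequality holds.

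To upgrade this pointwise result to the uniform bound \eqref{eq: bound_Omega}, I would fix $\mathcal{C}_{z}$ once and for all, independently of $M$, as a closed curve enclosing $[x_{1}^{-},x_{1}^{+}]\cup\{0\}$ and keeping a distance at least $\epsilon/2$ from $\varrho$ and from every cluster other than the first; this is possible because $\sup_{M}x_{1}^{+}+\epsilon<\varrho<\inf_{M}x_{2}^{-}-\epsilon$ under Assumption~\ref{as: subspace_separation}. On this compact, $M$-free contour, Assumption~\ref{as: subspace_separation} together with $\inf_{M}\gamma_{1}>0$ and $\sup_{M}\gamma_{\bar{M}}<\infty$ implies that $\omega(\mathcal{C}_{z})$ is contained in a compact set uniformly in $M$ and stays bounded away from every eigenvalue of $\boldsymbol{R}$. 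Plugged back into the identity above, this produces a uniform gap $\sup_{M}\sup_{z\in\mathcal{C}_{z}}\Psi(\omega(z))\leq 1-\delta$ for some $\delta>0$, and the lemma follows.

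The main obstacle will be the uniformity in $M$ of the gap $1-\Psi(\omega(z))$. The pointwise strict inequality is an almost immediate consequence of the defining equation, but promoting it to a uniform bound requires controlling the map $z\mapsto\omega(z)$ uniformly as $M$ grows; this reduces to the cluster-separation and eigenvalue-boundedness conditions imposed in Assumption~\ref{as: subspace_separation}, which play an essential role at the two real crossings of the contour.
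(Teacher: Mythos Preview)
Your approach is essentially the paper's: Cauchy--Schwarz reduces the question to $\sup_{M}\sup_{z\in\mathcal{C}_z}\tilde{\Psi}(\omega(z))<1$ with $\tilde{\Psi}(\omega)=\frac{1}{N}\sum_r K_r\gamma_r^2/|\gamma_r-\omega|^2$, and the imaginary-part identity $\tilde{\Psi}(\omega(z))=1-\ima{z}/\ima{\omega(z)}$ handles the off-real part of the contour. So the skeleton is correct.

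Where your proposal is thinner than the paper is exactly where you flag the difficulty: the uniformity in $M$. Two concrete gaps. First, for $z\in\mathbb{C}^{+}$ you assert that $\omega(\mathcal{C}_z)$ lies in an $M$-free compact set; this is what needs proof, not an input. The paper closes it by contradiction: if $\sup_M\tilde{\Psi}(\omega(z))=1$ along a subsequence, the identity forces $\ima{\omega(z)}\to\infty$, but then $\tilde{\Psi}(\omega(z))\leq \ima{\omega(z)}^{-2}\cdot\frac{1}{N}\sum_rK_r\gamma_r^2\to 0$, a contradiction. Second, at the real crossings you say ``the contour stays outside $\mathcal{S}$ by a uniform margin, hence strict inequality''. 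But $\tilde{\Psi}$ itself depends on $M$, so a margin in $z$ (or even in $\omega$) does not by itself produce an $M$-uniform gap in $\tilde{\Psi}$. The paper supplies the missing quantitative step: on the real interval $(\omega_1^{+},\omega_2^{-})$ the function $\tilde{\Psi}$ is strongly convex with second derivative bounded below by an $M$-free constant $q>0$, and $\tilde{\Psi}(\omega_1^{+})=\tilde{\Psi}(\omega_2^{-})=1$; combining this with the $M$-uniform distance of $\omega(\varrho)$ from both endpoints (which does follow from Assumption~\ref{as: subspace_separation}) yields $1-\tilde{\Psi}(\omega(\varrho))\geq \tfrac{q}{2}\min\{(\omega_2^{-}-\omega(\varrho))^2,(\omega(\varrho)-\omega_1^{+})^2\}>0$ uniformly. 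Your sketch gestures at both points but does not carry them out; the strong-convexity argument in particular is the substantive content you are missing.
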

	\begin{proof}
	    By the Cauchy-Schwarz inequality, we have $| \Omega(\omega(z_1), \omega(z_2)) |^{2} \leq \tilde{\Psi}(\omega(z_1)) \tilde{\Psi}(\omega(z_1))$ with 
	    \begin{equation}
	        \label{eq: definition_tilde_Psi}  
	        \tilde{\Psi}(\omega) = \frac{1}{N} \sum_{m=1}^{\bar{M}} K_{m} \frac{\gamma^2_m}{|\gamma_m - \omega|^{2}}. 
	    \end{equation}
	    Hence, it is sufficient to prove that $\sup_M \sup_{z \in \mathcal{C}_z} |\tilde{\Psi} (\omega(z))| < 1$. 
	    Let us first consider $z \in \mathbb{C}^{+}$. Taking imaginary parts on both sides of \eqref{eq: definition_omega_z} we see that \begin{equation*}
	    \tilde{\Psi}(\omega(z)) = 1 - \frac{\ima{z}}{\ima{\omega(z)}} 
	    \end{equation*} 
	    where we recall that $\ima{\omega(z)}>0$ when $\ima{z}>0$. This implies that $\sup_M \tilde{\Psi}(\omega(z)) \leq 1$ and $\inf_M \ima{\omega(z)} > 0$. In order to see that the inequality $\sup_{M} \tilde{\Psi}(\omega(z)) \leq 1$ must be strict, we reason by contradiction. Assume that the equality holds, and consider a subsequence $M'$ such that $\tilde{\Psi}(\omega(z)) \rightarrow 1$, implying $\ima{\omega(z)} \rightarrow \infty$, as $M' \to \infty$.  Now, using the fact that $|\gamma_m - \omega(z)|^2 \geq \ima{w(z)}^2$ we see from the definition of $\tilde{\Psi}(\omega(z))$ in \eqref{eq: definition_tilde_Psi} that  $\tilde{\Psi}(\omega(z)) \leq \ima{\omega(z)}^{-2} \frac{1}{N}\sum_{m=1}^{\bar{M}} K_m \gamma_m^2 \rightarrow 0$ along that subsequence, leading to contradiction. 
	    
	    Consider now $z \in  \mathcal{C}_z \cap \mathbb{R}$ and observe that this means that either $z < 0$ or $z = \varrho$ in Assumption \ref{as: subspace_separation}, which are the two crossing points of the contour on the real axis. Assume first that $z = \varrho$ and consider $\omega_1^+$ as right hand side of the first cluster of the support $\mathcal{S}$ and $\omega_{2}^{-}$ as the left hand side of the second cluster such that $\gamma_{1}<\omega_{1}^{+}<\omega(\varrho) < \omega_{2}^{-}<\gamma_{2}$ (the case $z<0$ follows by similar arguments). It can be observed that $\tilde{\Psi}(\omega)$ in \eqref{eq: definition_tilde_Psi} is a strongly convex function since the second order derivative w.r.t. $\omega$ 
	    \begin{equation}
	        \tilde{\Psi}^{\prime \prime}(\omega) = \frac{6}{N}\sum_{r=1}^{\bar{M}} \frac{K_{r}\gamma_{r}^{2}}{(\gamma_{r}-\omega)^{4}}
	        \label{eq: definition_tilde_Psi_second_derivative}
	    \end{equation}
	    is lower bounded by a positive quantity $q$, namely
	    \begin{equation*}
	         q= \underset{M}{\inf} \underset{\omega\in(\omega_{1}^{+},\omega_{2}^{-})}{\inf} \tilde{\Psi}^{\prime \prime}(\omega) > \frac{3}{4}\frac{1}{N}\sum_{r=1}^{\bar{M}}\frac{K_{r}\gamma_{r}^{2}}{\gamma_{r}^{4}+\gamma_{2}^{4}} >0.
	    \end{equation*} 
	    By strong convexity it follows that
	    \begin{align*}
	        \tilde{\Psi}(\omega_{2}^{-}) - \tilde{\Psi}(\omega(\varrho)) \geq &  \tilde{\Psi}^{\prime}(\omega(\varrho)) (\omega_{2}^{-} - \omega(\varrho))+ \frac{q}{2}(\omega_{2}^{-}-\omega(\varrho))^{2}\\
	        \tilde{\Psi}(\omega_{1}^{+}) - \tilde{\Psi}(\omega(\varrho)) \geq & - \tilde{\Psi}^{\prime}(\omega(\varrho)) (\omega(\varrho)-\omega_{1}^{+})+ \frac{q}{2}(\omega_{1}^{+}-\omega(\varrho))^{2}.
	    \end{align*}
	    Using the fact that $ \tilde{\Psi}(\omega_{2}^{-})=1$ and $ \tilde{\Psi}(\omega_{1}^{+}) =1$ we obtain
	    \begin{equation*}
	        1-\tilde{\Psi}(\omega(\varrho)) \geq \begin{cases} 
	            \tilde{\Psi}^{\prime}(\omega(\varrho)) (\omega_{2}^{-} - \omega(\varrho))+ \frac{q}{2}(\omega_{2}^{-}-\omega(\varrho))^{2} \\
	            - \tilde{\Psi}^{\prime}(\omega(\varrho)) (\omega(\varrho)-\omega_{1}^{+})+ \frac{q}{2}(\omega_{1}^{+}-\omega(\varrho))^{2}.
	        \end{cases}
	    \end{equation*}
	    Hence, $\inf_{M}1-\tilde{\Psi}(\omega(\varrho)) > \frac{q}{2}\min \left \lbrack (\omega_{2}^{-}-\omega(\varrho))^{2}, (\omega_{1}^{+}-\omega(\varrho))^{2}\right \rbrack >0$ and therefore $\sup_{M} \tilde{\Psi}(\omega(\varrho)) < 1$. The boundedness of $\tilde{\Psi}(\omega(z))$ at the second crossing point with the real axis $z<0$ follows by similar reasoning. 
	\end{proof}

	The fact that $\Omega(\omega_1,\omega_2)$ is uniformly bounded on $\mathcal{C}_\omega \times \mathcal{C}_\omega$ implies that we can take a power series expansion of the term $(1- \Omega(\omega_1,\omega_2))^{-1}$ and write 
	\begin{equation}
	    \begin{aligned}
	         \xi(r,k) = \gamma_r \gamma_k \sum_{n \geq 0} \frac{1}{N^n}  \sum_{r_1 + \ldots + r_{\bar{M}} = n} {{n}\choose{r_1,\ldots,r_{\bar{M}}}} \prod_{m=1}^{\bar{M}} K_m^{r_m} \gamma_m^{2 r_m}& \\[-2pt]
	        \times \Biggl( \frac{1}{2\pi\mathrm{j}} \oint_{\mathcal{C}_z} \frac{\omega(z)}{z} \frac{\bar{h}(z)}{(\gamma_r - \omega(z))(\gamma_k-\omega(z))} \prod_{m=1}^{\bar{M}} \frac{1}{(\gamma_m - \omega(z))^{r_m}} \mathrm{d}z \Biggr)^2 &
	    \end{aligned}
	    \label{eq: xi_rk_power_series}
	\end{equation}
	where we employed the multinomial theorem to factor out the different powers $\Omega^n(\omega_1,\omega_2)$, $n\geq 0$. An immediate consequence of the above decomposition is the fact that the coefficients $\xi(r,k)$ are non-negative. This will be useful in order to determine the appropriate lower bound on the corresponding covariance matrix.  

	\begin{lemma}
		\label{lemma: upperbound_Gamma}
		Under Assumptions \ref{as:asymptotic_growth}-\ref{as: subspace_separation}, we have $\sup_M \| \boldsymbol{\Gamma}_{\mathrm{c}}(\bar{\boldsymbol{\theta}}) \| < + \infty$ and $\sup_M \| \boldsymbol{\Gamma}_{\mathrm{g}}(\bar{\boldsymbol{\theta}}) \| < + \infty$. 
	\end{lemma}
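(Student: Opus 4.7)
The plan is to reduce the spectral norm bound to a bound on the trace by exploiting the positivity of $\boldsymbol{\Gamma}$. First I would note that each matrix
\[
M_{r,k}=\boldsymbol{A}^{\operatorname{H}}(\bar{\boldsymbol{\theta}})\boldsymbol{E}_{r}\boldsymbol{E}_{r}^{\operatorname{H}}\boldsymbol{A}(\bar{\boldsymbol{\theta}}) \odot \left(\boldsymbol{A}^{\operatorname{H}}(\bar{\boldsymbol{\theta}})\boldsymbol{E}_{k}\boldsymbol{E}_{k}^{\operatorname{H}}\boldsymbol{A}(\bar{\boldsymbol{\theta}})\right)^{\operatorname{T}}
\]
appearing in \eqref{eq: Gamma_as_Sum} is positive semidefinite by the Schur product theorem, while the coefficients $\xi(r,k)$ are nonnegative by the absolutely convergent series representation \eqref{eq: xi_rk_power_series}. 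Therefore $\boldsymbol{\Gamma}(\bar{\boldsymbol{\theta}})$ is itself positive semidefinite, and since its dimension $L$ is fixed, $\|\boldsymbol{\Gamma}(\bar{\boldsymbol{\theta}})\|\leq\operatorname{tr}(\boldsymbol{\Gamma}(\bar{\boldsymbol{\theta}}))=\sum_{i=1}^{L}[\boldsymbol{\Gamma}(\bar{\boldsymbol{\theta}})]_{i,i}$. The task reduces to showing that each diagonal entry stays uniformly bounded in $M$.

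Next I would return to the integral formula \eqref{eq: second_order_general} evaluated at $p=q=i$, choosing the two contours to coincide with $\omega(\mathcal{C}_z)$ for a deterministic $\mathcal{C}_z$ crossing the real axis at $\varrho$ (Assumption \ref{as: subspace_separation}) and at a fixed negative value. Under Assumption \ref{as: subspace_separation}, $\omega(\mathcal{C}_z)$ can be chosen to enclose $\gamma_1$ while remaining at a positive distance $\delta_0>0$ from every eigenvalue of $\boldsymbol{R}$, uniformly in $M$. This gives $\|(\boldsymbol{R}-\omega\boldsymbol{I}_M)^{-1}\|\leq\delta_0^{-1}$ on the contour, and therefore $|\Upsilon_{i,i}(\omega_1,\omega_2)|\leq\|\boldsymbol{R}\|^{2}\delta_0^{-4}$, which is bounded by Assumption \ref{as: rewritten_sample_covariance_matrix}. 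Lemma \ref{lemma: bound_Omega} supplies a uniform bound on $|1-\Omega(\omega_1,\omega_2)|^{-1}$, while the remaining scalar factors $\bar{h}(z(\omega))$, $\omega/z(\omega)$ and $z^{\prime}(\omega)$ given by \eqref{eq: derivative_z_of_omega_wrt_omega} can all be shown to be uniformly bounded on $\omega(\mathcal{C}_z)$ in both the MUSIC case ($\bar{h}\equiv 1$) and the g-MUSIC case ($\bar{h}(z)=z\omega^{\prime}(z)/\omega(z)$). Multiplying these uniform bounds by the fixed length of the contour yields $|[\boldsymbol{\Gamma}(\bar{\boldsymbol{\theta}})]_{i,i}|\leq C$ for some $C$ independent of $M$, and hence $\sup_{M}\|\boldsymbol{\Gamma}(\bar{\boldsymbol{\theta}})\|\leq CL<\infty$ for both estimators.

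The hardest part will be controlling the scalar factors in the g-MUSIC case, specifically $\omega^{\prime}(z)=(1-N^{-1}\sum_{r}K_{r}\gamma_{r}^{2}/(\gamma_{r}-\omega(z))^{2})^{-1}$. Showing that its denominator stays uniformly away from zero on $\omega(\mathcal{C}_z)$ requires showing that $\omega(z)$ itself remains uniformly bounded away from the boundary points $\omega_{1}^{+}$ and $\omega_{2}^{-}$ of the support image (which are precisely the zeros of $z^{\prime}(\omega)$). This is exactly the content of the strong-convexity argument used in the proof of Lemma \ref{lemma: bound_Omega}: since $\varrho\in(x_{1}^{+}+\epsilon,x_{2}^{-}-\epsilon)$ uniformly, $\omega(\varrho)$ is bounded away from both $\omega_{1}^{+}$ and $\omega_{2}^{-}$, and combined with $\inf_{M}\gamma_{1}>0$ this delivers the required control on $\omega(z)$ and $\omega^{\prime}(z)$ along the entire contour; the analogous crossing on the negative real axis is handled by the same reasoning.
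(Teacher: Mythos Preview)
Your proposal is correct and follows essentially the same route as the paper: both exploit the nonnegativity of $\xi(r,k)$ and the Schur product theorem to reduce the spectral norm to a trace-type bound, and then control the contour integrals via Lemma~\ref{lemma: bound_Omega} for $(1-\Omega)^{-1}$, the uniform separation of $\omega(\mathcal{C}_z)$ from the spectrum of $\boldsymbol{R}$, and the $\tilde{\Psi}$-based bound on $\omega'(z)$ in the g-MUSIC case. The only difference is organizational: the paper first bounds the individual coefficients $\xi(r,k)$ uniformly in $r,k$ and then collapses the double sum in \eqref{eq: Gamma_as_Sum} using $\sum_r\boldsymbol{E}_r\boldsymbol{E}_r^{\operatorname{H}}=\boldsymbol{I}_M$, whereas you bound the diagonal entries $[\boldsymbol{\Gamma}]_{i,i}$ directly from the double integral \eqref{eq: second_order_general} before invoking the spectral decomposition.
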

	\begin{proof}
	    Consider the general expression in \eqref{eq: Gamma_as_Sum} and assume that we are able to find a positive constant $\kappa$ such that $\sup_M \sup_{r,k} \xi(r,k) < \kappa$. It will immediately follow that    
	    \begin{align*}
	        \| \boldsymbol{\Gamma}(\bar{\boldsymbol{\theta}}) \| &\leq \kappa \left\|\sum_{r=1}^{\bar{M}} \sum_{k=1}^{\bar{M}}   \boldsymbol{A}^{\operatorname{H}}(\bar{\boldsymbol{\theta}}) \boldsymbol{E}_{r}\boldsymbol{E}_{r}^{\operatorname{H}} \boldsymbol{A}(\bar{\boldsymbol{\theta}}) \odot \left(
	        \boldsymbol{A}^{\operatorname{H}}(\bar{\boldsymbol{\theta}}) \boldsymbol{E}_{k}\boldsymbol{E}_{k}^{\operatorname{H}} \boldsymbol{A}(\bar{\boldsymbol{\theta}})\right)^{\operatorname{T}} \right\|  \\
	       & =\kappa \left\|\boldsymbol{A}^{\operatorname{H}}(\bar{\boldsymbol{\theta}}) \boldsymbol{A}(\bar{\boldsymbol{\theta}}) \odot \left(
	        \boldsymbol{A}^{\operatorname{H}}(\bar{\boldsymbol{\theta}}) \boldsymbol{A}(\bar{\boldsymbol{\theta}})\right)^{\operatorname{T}} \right\| \leq L \kappa
	    \end{align*}    
	    where in the last inequality we have used the fact that the steering vectors have unit norm and the spectral norm of a matrix is upper bounded by its trace. Hence, we only need to find an upper bound on the coefficients. Following the definition of these coefficients in \eqref{eq: xi_rk_power_series} we can readily see that 
	    \begin{equation}
	        \xi(r,k) \leq \frac{\gamma_r \gamma_k}{ 1 - \bar{\Omega}} \left( \frac{1}{2\pi\mathrm{j}}  \oint_{\mathcal{C}_z} |\bar{h}(z)| \frac{|\omega(z)|}{|z|} \frac{1}{|\gamma_r - \omega(z)||\gamma_k - \omega(z)|} |\mathrm{d}z| \right)^2
	        \label{eq: upper_bound_coefficients}
	    \end{equation}
	    where $\bar{\Omega} = \sup_{M,(\omega_1,\omega_2) \in \mathcal{C}_\omega \times \mathcal{C}_\omega} \Omega(\omega_1, \omega_2) < 1$. Now, from the proof of Lemma \ref{lemma: bound_Omega} it directly follows that $\inf_{M,\omega \in \mathcal{C}_\omega} |\omega(z) - \gamma_k|> 0$ for $k = 1,\ldots,\bar{M}$. On the other hand, taking real and imaginary part on both sides of \eqref{eq: definition_omega_z} it can be shown by contradiction that for bounded $|z|<\infty$ we have $\sup_{M} |\omega(z)/z|< \infty$.    

	   This directly shows that the coefficients $\xi(r,k)$ are bounded for the \ac{MUSIC} cost function, which has $\bar{h}(z) = 1$. In order to prove boundedness of the g-\ac{MUSIC} covariance $\boldsymbol{\Gamma}_{\mathrm{g}}(\bar{\boldsymbol{\theta}})$ we only need to show that $\sup_{M} |\omega^{\prime}(z)|<\infty$ since $\bar{h}(z)=\frac{z}{\omega(z)}\omega^{\prime}(z)$ and $\frac{z}{\omega(z)}$ cancels out with $\frac{\omega(z)}{z}$ in \eqref{eq: upper_bound_coefficients}. The upper bound $\sup_M |\omega^{\prime}(z) | < \infty$ follows from the fact that, by the triangular inequality, $\sup_M |\omega^{\prime}(z)| \leq \sup_M (1-|\tilde{\Psi}(\omega(z))|)^{-1} <1$. This completes the proof of Lemma \ref{lemma: upperbound_Gamma}. 
	\end{proof}

	We finally conclude the proof of Theorem \ref{thm: second_order} by showing that the smallest eigenvalue of the two asymptotic covariance matrices is bounded away from zero.
	\begin{lemma}
	    \label{lemma: lowerbound_Gamma}
	     Under Assumptions \ref{as:asymptotic_growth}-\ref{as: positive_semidefinitness_covariance_matrix}, we have $\inf_M \lambda_{\min}(\boldsymbol{\Gamma}_{\mathrm{c}}(\bar{\boldsymbol{\theta}})) > 0$ and $\inf_M \lambda_{\min}(\boldsymbol{\Gamma}_{\mathrm{g}}(\bar{\boldsymbol{\theta}})) > 0$, where $\lambda_{\min}(\cdot)$ denotes the minimum eigenvalue of a matrix. 
	\end{lemma}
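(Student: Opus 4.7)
The plan is to build a positive semidefinite lower bound on $\boldsymbol{\Gamma}(\bar{\boldsymbol{\theta}})$ by truncating the double sum in \eqref{eq: Gamma_as_Sum}. Writing $\boldsymbol{B}_r = \boldsymbol{A}^{\operatorname{H}}(\bar{\boldsymbol{\theta}})\boldsymbol{E}_r\boldsymbol{E}_r^{\operatorname{H}}\boldsymbol{A}(\bar{\boldsymbol{\theta}})$, each Hadamard product $\boldsymbol{B}_r \odot \bar{\boldsymbol{B}}_k$ is PSD by the Schur product theorem (since $\bar{\boldsymbol{B}}_k$, the entrywise conjugate of a Hermitian PSD matrix, is itself PSD). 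Combined with the non-negativity of every coefficient $\xi(r,k)$ that follows from the multinomial expansion \eqref{eq: xi_rk_power_series}, I can drop every off-diagonal term as well as the diagonal term at $r=k=1$ to obtain
\begin{equation*}
\boldsymbol{\Gamma}(\bar{\boldsymbol{\theta}}) \;\succeq\; \sum_{r=2}^{\bar{M}} \xi(r,r)\, \boldsymbol{B}_r \odot \bar{\boldsymbol{B}}_r.
\end{equation*}

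Next I would apply the classical Schur bound $\lambda_{\min}(\boldsymbol{B}_r \odot \bar{\boldsymbol{B}}_r) \geq \lambda_{\min}(\boldsymbol{B}_r)\,\min_i \bar{B}_{r,ii} \geq \lambda_{\min}^2(\boldsymbol{B}_r)$, where the last inequality uses the fact that each diagonal entry of a Hermitian PSD matrix majorizes its smallest eigenvalue. Combined with Weyl's subadditivity of $\lambda_{\min}$ across PSD summands, this yields
\begin{equation*}
\lambda_{\min}(\boldsymbol{\Gamma}(\bar{\boldsymbol{\theta}})) \;\geq\; \Bigl( \min_{2 \leq r \leq \bar{M}} \xi(r,r)\Bigr) \sum_{r=2}^{\bar{M}} \lambda_{\min}^2(\boldsymbol{B}_r),
\end{equation*}
and the sum on the right is bounded below uniformly in $M$ by Assumption \ref{as: positive_semidefinitness_covariance_matrix}. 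What remains is therefore to verify $\inf_M \min_{2 \leq r \leq \bar{M}} \xi(r,r) > 0$ for both choices $\bar{h}(z) = 1$ (MUSIC) and $\bar{h}(z) = z\,\omega'(z)/\omega(z)$ (g-MUSIC).

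For g-MUSIC this is essentially immediate from the closed-form \eqref{eq: g_music_second_order_xi}: for $r \geq 2$ the Kronecker term vanishes and $\xi_{\operatorname{g}}(r,r)$ reduces to a strictly positive integral over $[x_1^-, x_1^+]$, whose integrand has $\operatorname{Im}[\omega(x)] > 0$ on the open interval while $|\gamma_r - \omega(x)|$ stays bounded away from $0$ and $\infty$ uniformly in $M$ by Assumption \ref{as: subspace_separation}, which separates the noise cluster $\omega(\mathcal{S}_1) \subset [\omega_1^-,\omega_1^+]$ from $\gamma_2, \ldots, \gamma_{\bar{M}}$. For conventional MUSIC, I would lower bound $\xi_{\operatorname{c}}(r,r)$ by keeping only the $n=0$ term of the power series \eqref{eq: xi_rk_power_series}, obtaining the square of a single contour integral, which can be evaluated via the change of variables $\omega = \omega(z)$ and residue calculus at $\omega = \gamma_1$: since $z(\omega)$ has a simple pole there, $z'(\omega)/z(\omega)$ contributes a residue of $-1$, giving a non-vanishing value of the form $\gamma_1/(\gamma_r-\gamma_1)^2$, whose uniform positivity then follows again from Assumptions \ref{as: rewritten_sample_covariance_matrix}--\ref{as: subspace_separation}.

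The main obstacle is the last step for conventional MUSIC, where the explicit formulas \eqref{eq: music_second_order_xi_oversampled}--\eqref{eq: music_second_order_xi_undersampled} contain many terms with mixed signs, so the direct route of inspecting $\xi_{\operatorname{c}}(r,r)$ entry by entry is fragile. The cleaner strategy is to bypass those formulas entirely and exploit the sum-of-squares structure of \eqref{eq: xi_rk_power_series}, reducing the question to a single residue computation and then tracking the uniform boundedness of the explicit rational expression in $\gamma_1, \gamma_r$. Once this residue lower bound is in place, the chain of inequalities above immediately yields $\inf_M \lambda_{\min}(\boldsymbol{\Gamma}_{\operatorname{c}}(\bar{\boldsymbol{\theta}})) > 0$ and $\inf_M \lambda_{\min}(\boldsymbol{\Gamma}_{\operatorname{g}}(\bar{\boldsymbol{\theta}})) > 0$, thereby completing the proof of Theorem \ref{thm: second_order} via Theorem \ref{thm: second_order_general_theorem}.
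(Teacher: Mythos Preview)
Your overall strategy --- truncating \eqref{eq: Gamma_as_Sum} to the diagonal $r=k\geq 2$ using the non-negativity of $\xi(r,k)$ from \eqref{eq: xi_rk_power_series}, applying a Schur-product lower bound, and then invoking Assumption \ref{as: positive_semidefinitness_covariance_matrix} --- is exactly the route the paper takes. The gap is in how you control the coefficients.

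For conventional \ac{MUSIC}, your residue computation is incomplete. After the change of variables the integrand $\tfrac{\omega}{z(\omega)}\,z'(\omega)\,(\gamma_r-\omega)^{-2}$ has, besides the pole at $\omega=\gamma_1$ coming from the simple pole of $z(\omega)$, a second pole inside $\mathcal{C}_\omega$ at $\omega=\mu_1$, where $z(\omega)$ has a simple \emph{zero} (cf.\ \eqref{eq: mu_values}). The $n=0$ contour integral is therefore not $\gamma_1/(\gamma_r-\gamma_1)^2$ but the difference $\Phi_r(\gamma_1)-\Phi_r(\mu_1)$ with $\Phi_r(x)=x/(\gamma_r-x)^2$, which is precisely \eqref{eq:expr_upsilon_c_bound}. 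Since $\mu_1\to\gamma_1$ as $K_1/N\to 0$, this difference is not manifestly bounded away from zero; the paper has to work for it, using convexity of $\Phi_r$ on $(0,\gamma_r)$ together with the lower bound $\gamma_1-\mu_1\geq K_1\gamma_1/N$ in the oversampled case, and the sign change $\mu_1<0<\gamma_1$ in the undersampled case. Your ``non-vanishing value of the form $\gamma_1/(\gamma_r-\gamma_1)^2$'' glosses over exactly this step.

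For g-\ac{MUSIC} your route diverges from the paper's. You propose to lower bound $\xi_{\mathrm{g}}(r,r)$ directly via the integral in \eqref{eq: g_music_second_order_xi}, but positivity of the integrand on the open interval does not by itself yield a uniform-in-$M$ lower bound: $\ima{\omega(x)}$ vanishes at both endpoints, and you would still need to control the interval length and $|\omega'(x)|$ from below uniformly. The paper sidesteps this by extracting the $n=1$ term of \eqref{eq: xi_rk_power_series} (the $n=0$ term actually vanishes for g-\ac{MUSIC} when $r=k\geq 2$, since after substituting $\bar{h}(z)\tfrac{\omega(z)}{z}=\omega'(z)$ the integrand has no pole inside $\mathcal{C}_\omega$). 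That $n=1$ term evaluates algebraically to $\tfrac{K_1}{N}\gamma_1^2\gamma_r^2/(\gamma_1-\gamma_r)^4$, whose uniform positivity follows immediately from $\inf_M\gamma_1>0$, $\sup_M\|\boldsymbol{R}\|<\infty$, and $\inf_M K_1/M>0$.
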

	\begin{proof}
	    In order to proof this lemma, we consider again the general expression of any of these two matrices that is given in \eqref{eq: Gamma_as_Sum}. By inserting the series expansion of $\xi(r,k)\geq 0$ in \eqref{eq: xi_rk_power_series} into \eqref{eq: Gamma_as_Sum} we obtain an expression of $\boldsymbol{\Gamma}(\bar{\boldsymbol{\theta}})$ as a linear combination of positive semidefinite matrices with non-negative coefficients. Hence, the original covariance can be lower bounded (in the ordering of positive semidefinite matrices) by selecting any of the terms of this expansion. In the case of \ac{MUSIC} we can select the term $n=0$ in \eqref{eq: xi_rk_power_series} along the sum $k = r\geq 2$ in \eqref{eq: Gamma_as_Sum}, leading to the lower bound
	    \begin{align*}
	        \boldsymbol{\Gamma}_{\mathrm{c}}(\bar{\boldsymbol{\theta}})
	        & \geq \sum_{k=2}^{\bar{M}} \upsilon_{\mathrm{c}}(k) \boldsymbol{A}^{\operatorname{H}}(\bar{\boldsymbol{\theta}}) \boldsymbol{E}_{k}\boldsymbol{E}_{k}^{\operatorname{H}} \boldsymbol{A}(\bar{\boldsymbol{\theta}}) \odot \left(
	        \boldsymbol{A}^{\operatorname{H}}(\bar{\boldsymbol{\theta}}) \boldsymbol{E}_{k}\boldsymbol{E}_{k}^{\operatorname{H}} \boldsymbol{A}(\bar{\boldsymbol{\theta}})\right)^{\operatorname{T}}
	    \end{align*}
	    where  
	    \begin{equation}
	        \upsilon_{\mathrm{c}}(k) = \left( \frac{1}{2\pi\mathrm{j}} \oint_{\mathcal{C}^{-}_z}  \frac{\omega(z)}{z} \frac{\gamma_k}{(\gamma_k -\omega(z))^2} \mathrm{d}z \right)^2. 
	        \label{eq:expr_upsilon_c}
	    \end{equation}
	    Likewise, for the g-\ac{MUSIC} cost function we can select the term $n=1$ in \eqref{eq: xi_rk_power_series} along the sum $k = r\geq 2$ in \eqref{eq: Gamma_as_Sum}, so that 
	   	\begin{align*}
	        \boldsymbol{\Gamma}_{\mathrm{g}}(\bar{\boldsymbol{\theta}}) 
	        &\geq \sum_{k=2}^{\bar{M}} \upsilon_{\mathrm{g}}(k) \boldsymbol{A}^{\operatorname{H}}(\bar{\boldsymbol{\theta}}) \boldsymbol{E}_{k}\boldsymbol{E}_{k}^{\operatorname{H}} \boldsymbol{A}(\bar{\boldsymbol{\theta}}) \odot \left(
	        \boldsymbol{A}^{\operatorname{H}}(\bar{\boldsymbol{\theta}}) \boldsymbol{E}_{k}\boldsymbol{E}_{k}^{\operatorname{H}} \boldsymbol{A}(\bar{\boldsymbol{\theta}})\right)^{\operatorname{T}}
	    \end{align*}
	    where now 
	    \begin{equation}
	          \upsilon_{\mathrm{g}}(k) =\frac{1}{N} \sum_{j=1}^{\bar{M}} K_j \gamma_j^2  \left( \frac{1}{2\pi\mathrm{j}} \oint_{\mathcal{C}^{-}_\omega}  \frac{\omega}{z(\omega)} \frac{\gamma_k}{(\gamma_k -\omega)^2(\gamma_j -\omega)} \mathrm{d}\omega \right)^2.
	          \label{eq:expr_upsilon_g}
	    \end{equation}
	    Using the fact that, for any two positive semidefinite matrices $\boldsymbol{A}, \boldsymbol{B}$, we have $\lambda_{\min}(\boldsymbol{A} \odot \boldsymbol{B}) \geq \lambda_{\min}(\boldsymbol{A})\lambda_{\min}(\boldsymbol{B})$ \cite{horn13}, we see that the lemma will follow from Assumption \ref{as: positive_semidefinitness_covariance_matrix} if we are able to show that $\inf_{M,k\geq 2} \upsilon_{\mathrm{c}}(k)>0 $ and $\inf_{M,k\geq 2} \upsilon_{\mathrm{g}}(k)>0$. 
	    
	    In case of $\upsilon_{\mathrm{g}}(k)$ in \eqref{eq:expr_upsilon_g}, we have 
	    \begin{align*}
	        \upsilon_{\mathrm{g}}(k) & =  \frac{K_1}{N}\frac{\gamma_1^2 \gamma_k^2}{(\gamma_1-\gamma_k)^4} \delta_{k \geq 2} + 
	        \frac{1}{N}\sum_{j=2}^{\bar{M}} K_j \frac{\gamma_j^2 \gamma_k^2}{(\gamma_1-\gamma_j)^4} \delta_{k = 1} \\[-2pt]
	       & \geq \frac{K_1}{N}\frac{\gamma_1^4}{(2\|\boldsymbol{R}\|)^4} \delta_{k \geq 2} + 
	        \frac{M-K_1}{N} \frac{\gamma_1^4}{(2\|\boldsymbol{R}\|)^4} \delta_{k = 1}
	    \end{align*}
	    and the lower bound follows from the fact that $\inf_M \gamma_1 >0$, $\sup_M \|\boldsymbol{R}\|< \infty$ and $0 < \inf_M K_1/M \leq \sup_M K_1/M < 1 $. Furthermore, the closed-form solution for $\upsilon_{\mathrm{c}}(k)$ in \eqref{eq:expr_upsilon_c} is given by
	    \begin{equation}
	    	\label{eq:expr_upsilon_c_bound}
	        \begin{aligned}
	            \upsilon_{\mathrm{c}}(k)  =& \gamma^2_k\left(\Phi_k(\mu_1)-\Phi_k(\gamma_1)\right)^2 \delta_{k\geq 2} \\
	            & +\Biggl( \gamma_1 \sum_{m=2}^{\bar{M}} \left(\Phi_1(\mu_m) - \Phi_1(\gamma_m)\right) \Biggr)^2 \delta_{k=1}
	         \end{aligned}
	    \end{equation}
	    where we have introduced $\Phi_k(x) =  \frac{x}{(\gamma_k - x)^2}$. Consider first the lower bound of $v_{\operatorname{c}}(k)$ in \eqref{eq:expr_upsilon_c_bound} for $k\geq 2$ in the oversampled case ($M\leq N$). Since $\Phi_k(x)$ is a convex function on the region $0<x<\gamma_k$, we can upper bound $\Phi_k(\gamma_1)>\Phi_k(\mu_1)+\Phi^{\prime}_k(\mu_1)(\gamma_1 - \mu_1)$ where $\Phi^{\prime}_k(x)$ is the derivative of $\Phi_k(x)$ w.r.t. $x$. This shows that
	    \begin{equation*}
	        (\Phi_k(\gamma_1)-\Phi_k(\mu_1))^2 \geq (\Phi^{\prime}_k(\mu_1))^2(\gamma_1 - \mu_1)^2 
	        \geq \frac{(\gamma_1 - \mu_1)^2}{\gamma_k^4} \geq \frac{(\gamma_1 - \mu_1)^2}{\|\boldsymbol{R}\|^4} 
	    \end{equation*}
	    where we have used the fact that in the oversampled case $\mu_{1} \geq 0$ and therefore $\Phi^{\prime}_k(\mu_1) \geq \Phi^{\prime}_k(0) = 1 / \gamma_k ^{2}$.  On the other hand, to see that $\inf_M | \gamma_1 - \mu_1| > 0$  we simply recall from the definition of $\mu_1$ in \eqref{eq: mu_values} that we can write
	    \begin{equation*}
	        \gamma_1 - \mu_1 = \Biggl( 1- \frac{1}{N}\sum_{j=2}^{\bar{M}} K_j \frac{\gamma_j}{\gamma_j - \mu_1} \Biggr)^{-1} \frac{K_1\gamma_1}{N} > \frac{K_1\gamma_1}{N} 
	    \end{equation*}
	    where in the last equation we used the fact that $\mu_1 < \gamma_1 < \gamma_j$ for $j=2,\ldots,\bar{M}$ and therefore the second term in the denominator is always positive. The fact that $\inf_M |\mu_1 - \gamma_1| > 0$ follows the boundedness of the spectral norm of $\boldsymbol{R}$ and Assumption \ref{as:asymptotic_growth}. This proves that in the oversampled case $v_{\text{c}}(k)$ in \eqref{eq:expr_upsilon_c_bound} for $k\geq 2$ is lower bounded by a constant independent of $M$.

	    Let us now consider the lower bound of $v_{\text{c}}(k)$ in \eqref{eq:expr_upsilon_c_bound} in the undersampled case ($M>N$) for $k\geq 2$. In the undersampled case we know that $\mu_{1} < 0 < \gamma_{1}$ and therefore
	    \begin{equation*}
			\phi_{k}(\gamma_{1}) - \phi_{k}(\mu_{1}) > \frac{\gamma_{1}}{\left(\gamma_{k}-\gamma_{1}\right)^{2}} > \frac{\gamma_{1}}{\left(2 \|\boldsymbol{R}\|\right)^{2}} = \frac{\gamma_{1}}{4 \|\boldsymbol{R}\|^{2}}
		\end{equation*}
		where we have used the fact that $-\phi_{k}(\mu_{1})>0$. Consequently, the lower bound on $v_{\text{c}}(k)$ in \eqref{eq:expr_upsilon_c_bound} for $k\geq 2$ in the undersampled case directly follows from the fact that $\inf_M \gamma_1 >0$ and $\sup_M \|\boldsymbol{R}\|< \infty$. This concludes the proof of Lemma \ref{lemma: lowerbound_Gamma}.
	\end{proof}

\section{Simulation Results}
	\label{sec: simulation_results}
	In this Section the predicted probability of resolution in \eqref{eq: predicted_probability_of_resolution} is compared to the simulated one. Consider a scenario with $K=2$ sources that are located at $\boldsymbol{\theta} = \lbrack 45^{\circ}, 50^{\circ}\rbrack^{\operatorname{T}}$ and a \ac{ULA} that is equipped with $M=15$ sensors. Since we consider only non-asymptotic scenarios with a finite number of sensors, we refrain from normalizing the steering vectors in all simulations. The transmitted signals are zero-mean with unit power and the \ac{SNR} is given by $\text{SNR}=1/\sigma^{2}$. The simulations are carried out for correlated signals with correlation coefficient $\rho=0.95$ as well as uncorrelated signals. The separation boundary is defined as the lowest $\text{SNR}$ or the smallest angular separation between both sources that provides separation between the eigenvalue cluster that is associated to the noise eigenvalue $\gamma_{1}$ and remaining eigenvalue clusters.  Hence, the separation boundary is given by the smallest \ac{SNR} or the smallest angular separation between both sources that allows to differentiate between noise and signal subspace and therefore satisfies Assumption \ref{as: subspace_separation}. All simulations are conducted for $10000$ Monte-Carlo trials.
	
	In Figure \ref{fig: probability_of_resolution_vs_snr_N_10}, \ref{fig: probability_of_resolution_vs_snr_N_15} and \ref{fig: probability_of_resolution_vs_snr_N_100} the probability of resolution is depicted for different \acp{SNR} and for $N=10$ (undersampled case), $N=15$ (special case where $M=N$) and $N=100$ (oversampled case) snapshots, respectively. In all three scenarios our prediction of the probability of resolution in \eqref{eq: predicted_probability_of_resolution} is very accurate since it is very close to the simulated one. Even in case of correlated sources and limited number of snapshots the proposed forecast of the probability of resolution in \eqref{eq: predicted_probability_of_resolution} provides a remarkably accurate description of the threshold effect. Furthermore, in all three scenarios the g-\ac{MUSIC} \ac{DoA} estimation method shows superior resolution capabilities than the conventional \ac{MUSIC} technique. 
	
	\begin{figure}
		\centering
		\input{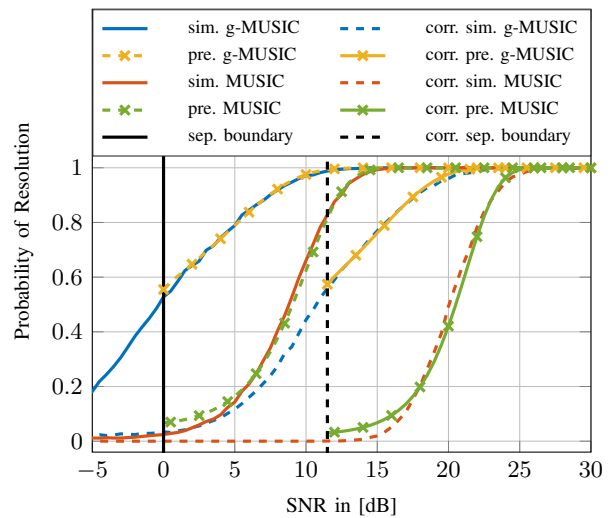}
		\caption{Uncorrelated and Correlated Sources, Correlation Coefficient $\rho = 0.95$, $M=15$ Sensors, $N=10$ Snapshots}
		\label{fig: probability_of_resolution_vs_snr_N_10}
	\end{figure}
	
	\begin{figure}
		\centering
		% This file was created by matlab2tikz.
%
%The latest updates can be retrieved from
%  http://www.mathworks.com/matlabcentral/fileexchange/22022-matlab2tikz-matlab2tikz
%where you can also make suggestions and rate matlab2tikz.
%
\definecolor{mycolor1}{rgb}{0.00000,0.44700,0.74100}%
\definecolor{mycolor2}{rgb}{0.85000,0.32500,0.09800}%
\definecolor{mycolor3}{rgb}{0.92900,0.69400,0.12500}%
\definecolor{mycolor4}{rgb}{0.49400,0.18400,0.55600}%
\definecolor{mycolor5}{rgb}{0.46600,0.67400,0.18800}%
\definecolor{mycolor6}{rgb}{0,0,0}%
\definecolor{mycolor7}{rgb}{1,1,0}%
\begin{tikzpicture}

\begin{axis}[%
width=9in,
height=8in,
scale=0.29,
at={(1.011in,0.642in)},
scale only axis,
unbounded coords=jump,
xmin=-10,
xmax=25,
xlabel style={font=\color{white!15!black}},
xlabel={SNR in [dB]},
ymin=-0.04,
ymax=1.58,
ylabel style={font=\color{white!15!black}},
ylabel={Probability of Resolution},
label style={font=\footnotesize},
axis background/.style={fill=white},
xmajorgrids,
ymajorgrids,
ytick={0,  0.2, 0.4, 0.6, 0.8, 1},
legend style={at={(0,1)},anchor=north west, legend cell align=left, align=center, draw=white!15!black, legend columns = 2, column sep=12.7pt, font=\scriptsize},%
]

\addplot [color=mycolor1, line width=1.2pt]
  table[row sep=crcr]{%
-11	0.1102\\
-10	0.0944\\
-9	0.093\\
-8	0.1127\\
-7	0.147\\
-6	0.1952\\
-5	0.2533\\
-4	0.3347\\
-3	0.4215\\
-2	0.497\\
-1	0.5597\\
0	0.6353\\
1	0.6877\\
2	0.7356\\
3	0.7903\\
4	0.8372\\
5	0.8845\\
6	0.9194\\
7	0.9501\\
8	0.9733\\
9	0.9842\\
10	0.9939\\
11	0.9981\\
12	0.9996\\
13	0.9997\\
14	1\\
15	1\\
16	1\\
17	1\\
18	1\\
19	1\\
20	1\\
21	1\\
22	1\\
23	1\\
24	1\\
25	1\\
26	1\\
};
\addlegendentry{sim. g-\ac{MUSIC}}

\addplot [color=mycolor1, line width=1.2pt, dashed]
  table[row sep=crcr]{%
-11	0.0481\\
-10	0.0353\\
-9	0.0267\\
-8	0.0248\\
-7	0.0231\\
-6	0.0213\\
-5	0.0196\\
-4	0.022\\
-3	0.0245\\
-2	0.0268\\
-1	0.0331\\
0	0.039\\
1	0.0474\\
2	0.059\\
3	0.0878\\
4	0.1168\\
5	0.1619\\
6	0.2211\\
7	0.2992\\
8	0.3896\\
9	0.4752\\
10	0.5567\\
11	0.6354\\
12	0.7076\\
13	0.7566\\
14	0.8129\\
15	0.8602\\
16	0.9071\\
17	0.9404\\
18	0.9663\\
19	0.9814\\
20	0.9897\\
21	0.9957\\
22	0.9985\\
23	0.9992\\
24	1\\
25	1\\
26	1\\
};
\addlegendentry{corr. sim. g-\ac{MUSIC}}

\addplot [color=mycolor3, line width=1.2pt, dashed, mark=x, mark size=2.5pt, mark repeat = 2, mark phase = 1, mark options={solid, mycolor3}]
  table[row sep=crcr]{%
-11	nan\\
-10	nan\\
-9	nan\\
-8	nan\\
-7	nan\\
-6	nan\\
-5	nan\\
-4	nan\\
-3	nan\\
-2	nan\\
-1	0.589879982533668\\
0	0.636773921255219\\
1	0.684963657234872\\
2	0.735319380519261\\
3	0.78674332851671\\
4	0.837108653683629\\
5	0.88366972269466\\
6	0.92362472587335\\
7	0.954849073620917\\
8	0.976574515343769\\
9	0.989668845658973\\
10	0.996278655164755\\
11	0.998958733212313\\
12	0.999787517009104\\
13	0.999970795947159\\
14	0.999997554514384\\
15	0.999999890055585\\
16	0.999999997737254\\
17	0.999999999982562\\
18	0.999999999999961\\
19	1\\
20	1\\
21	1\\
22	1\\
23	1\\
24	1\\
25	1\\
26	1\\
};
\addlegendentry{pre. g-\ac{MUSIC}}

\addplot [color=mycolor3, line width=1.2pt, mark=x, mark size=2.5pt, mark repeat = 2, mark phase = 1, mark options={solid, mycolor3}]
  table[row sep=crcr]{%
-11	nan\\
-10	nan\\
-9	nan\\
-8	nan\\
-7	nan\\
-6	nan\\
-5	nan\\
-4	nan\\
-3	nan\\
-2	nan\\
-1	nan\\
0	nan\\
1	nan\\
2	nan\\
3	nan\\
4	nan\\
5	nan\\
6	nan\\
7	nan\\
8	nan\\
9	nan\\
10	0.581922460108689\\
11	0.638666907006614\\
12	0.694475872681534\\
13	0.752343909360717\\
14	0.810670285576663\\
15	0.866085616001574\\
16	0.914418083508184\\
17	0.952014864419234\\
18	0.977238389871363\\
19	0.991266279340165\\
20	0.997434898265004\\
21	0.99946138133773\\
22	0.999925705920791\\
23	0.999993942848576\\
24	0.999999744254483\\
25	0.999999995263474\\
26	0.999999999968777\\
};
\addlegendentry{corr. pre. g-\ac{MUSIC}}

\addplot [color=mycolor2, line width=1.2pt]
  table[row sep=crcr]{%
-11	0.0384\\
-10	0.0204\\
-9	0.0139\\
-8	0.0101\\
-7	0.0074\\
-6	0.0055\\
-5	0.0059\\
-4	0.0077\\
-3	0.0108\\
-2	0.013\\
-1	0.02\\
0	0.0297\\
1	0.0496\\
2	0.0788\\
3	0.1186\\
4	0.1909\\
5	0.2657\\
6	0.3811\\
7	0.5092\\
8	0.6647\\
9	0.7944\\
10	0.8894\\
11	0.9555\\
12	0.9826\\
13	0.9957\\
14	0.9989\\
15	0.9999\\
16	1\\
17	1\\
18	1\\
19	1\\
20	1\\
21	1\\
22	1\\
23	1\\
24	1\\
25	1\\
26	1\\
};
\addlegendentry{sim. \ac{MUSIC}}

\addplot [color=mycolor2, line width=1.2pt, dashed]
  table[row sep=crcr]{%
-11	0.0082\\
-10	0.0026\\
-9	0.0014\\
-8	0.0002\\
-7	0.0001\\
-6	0\\
-5	0\\
-4	0\\
-3	0\\
-2	0\\
-1	0\\
0	0\\
1	0\\
2	0\\
3	0\\
4	0\\
5	0\\
6	0\\
7	0\\
8	0\\
9	0.0001\\
10	0\\
11	0.0009\\
12	0.0032\\
13	0.0143\\
14	0.0338\\
15	0.078\\
16	0.1619\\
17	0.2945\\
18	0.4637\\
19	0.6454\\
20	0.7955\\
21	0.9028\\
22	0.9586\\
23	0.9859\\
24	0.9955\\
25	0.9995\\
26	0.9998\\
};
\addlegendentry{corr. sim. \ac{MUSIC}}

\addplot [color=mycolor5, line width=1.2pt, dashed, mark=x, mark size=2.5pt, mark repeat = 2, mark phase = 2, mark options={solid, mycolor5}]
  table[row sep=crcr]{%
-11	nan\\
-10	nan\\
-9	nan\\
-8	nan\\
-7	nan\\
-6	nan\\
-5	nan\\
-4	nan\\
-3	nan\\
-2	nan\\
-1	0.0558521964359782\\
0	0.0641473658750538\\
1	0.0789386455841701\\
2	0.102205525242957\\
3	0.13742224604801\\
4	0.189451916058214\\
5	0.263924681584646\\
6	0.365278787252307\\
7	0.492952964993895\\
8	0.636876864455033\\
9	0.776389350377302\\
10	0.887479059300237\\
11	0.956463958053839\\
12	0.987949637836633\\
13	0.997810776345245\\
14	0.999764774127027\\
15	0.999986846021998\\
16	0.99999967340238\\
17	0.999999997049264\\
18	0.999999999992448\\
19	0.999999999999996\\
20	1\\
21	1\\
22	1\\
23	1\\
24	1\\
25	1\\
26	1\\
};
\addlegendentry{pre. \ac{MUSIC}}

\addplot [color=mycolor5, line width=1.2pt, mark=x, mark size=2.5pt, mark repeat = 2, mark phase = 2, mark options={solid, mycolor5}]
  table[row sep=crcr]{%
-11	nan\\
-10	nan\\
-9	nan\\
-8	nan\\
-7	nan\\
-6	nan\\
-5	nan\\
-4	nan\\
-3	nan\\
-2	nan\\
-1	nan\\
0	nan\\
1	nan\\
2	nan\\
3	nan\\
4	nan\\
5	nan\\
6	nan\\
7	nan\\
8	nan\\
9	nan\\
10	0.0249238001271439\\
11	0.0281854189456899\\
12	0.0357908645573493\\
13	0.0493790733070403\\
14	0.0725248854282003\\
15	0.111343307576696\\
16	0.174991830509455\\
17	0.274420431368481\\
18	0.416567485739563\\
19	0.592844433801365\\
20	0.770277082932193\\
21	0.904265154430896\\
22	0.973595109254208\\
23	0.995766754225018\\
24	0.999661325526598\\
25	0.999988716025507\\
26	0.999999874079533\\
};
\addlegendentry{corr. pre. \ac{MUSIC}}

\addplot [color=mycolor6, line width=1.2pt]
  table[row sep=crcr]{%
-1	-0.05\\
-1	1.05\\
};
\addlegendentry{sep. boundary}

\addplot [color=mycolor6, line width=1.2pt,dashed]
  table[row sep=crcr]{%
10	-0.05\\
10	1.05\\
};
\addlegendentry{corr. sep. boundary}

\end{axis}
\end{tikzpicture}%
		\caption{Uncorrelated and Correlated Sources, Correlation Coefficient $\rho = 0.95$, $M=15$ Sensors, $N=15$ Snapshots}
		\label{fig: probability_of_resolution_vs_snr_N_15}
	\end{figure}
	
	\begin{figure}
		\centering
		% This file was created by matlab2tikz.
%
%The latest updates can be retrieved from
%  http://www.mathworks.com/matlabcentral/fileexchange/22022-matlab2tikz-matlab2tikz
%where you can also make suggestions and rate matlab2tikz.
%
\definecolor{mycolor1}{rgb}{0.00000,0.44700,0.74100}%
\definecolor{mycolor2}{rgb}{0.85000,0.32500,0.09800}%
\definecolor{mycolor3}{rgb}{0.92900,0.69400,0.12500}%
\definecolor{mycolor4}{rgb}{0.49400,0.18400,0.55600}%
\definecolor{mycolor5}{rgb}{0.46600,0.67400,0.18800}%
\definecolor{mycolor6}{rgb}{0,0,0}%
\definecolor{mycolor7}{rgb}{1,1,0}%
\begin{tikzpicture}

\begin{axis}[%
width=9in,
height=8in,
scale=0.29,
at={(1.011in,0.642in)},
scale only axis,
unbounded coords=jump,
xmin=-15,
xmax=20,
xlabel style={font=\color{white!15!black}},
xlabel={SNR in [dB]},
ymin=-0.04,
ymax=1.58,
ylabel style={font=\color{white!15!black}},
ylabel={Probability of Resolution},
label style={font=\footnotesize},
axis background/.style={fill=white},
xmajorgrids,
ymajorgrids,
ytick={0,  0.2, 0.4, 0.6, 0.8, 1},
legend style={at={(0,1)},anchor=north west, legend cell align=left, align=center, draw=white!15!black, legend columns = 2, column sep=12.7pt, font=\scriptsize},%
]

\addplot [color=mycolor1, line width=1.2pt]
  table[row sep=crcr]{%
-16	0.0787\\
-15	0.0628\\
-14	0.061\\
-13	0.0744\\
-12	0.1047\\
-11	0.1547\\
-10	0.2364\\
-9	0.3362\\
-8	0.4585\\
-7	0.565\\
-6	0.656\\
-5	0.7364\\
-4	0.7974\\
-3	0.871\\
-2	0.9182\\
-1	0.954\\
0	0.9815\\
1	0.9944\\
2	0.9972\\
3	0.9989\\
4	0.9998\\
5	1\\
6	1\\
7	1\\
8	1\\
9	1\\
10	1\\
11	1\\
12	1\\
13	1\\
14	1\\
15	1\\
16	1\\
17	1\\
18	1\\
19	1\\
20	1\\
21	1\\
};
\addlegendentry{sim. g-\ac{MUSIC}}

\addplot [color=mycolor1, line width=1.2pt, dashed]
  table[row sep=crcr]{%
-16	0.0376\\
-15	0.0216\\
-14	0.0196\\
-13	0.0182\\
-12	0.0178\\
-11	0.0162\\
-10	0.0194\\
-9	0.02\\
-8	0.0199\\
-7	0.0241\\
-6	0.0247\\
-5	0.0304\\
-4	0.0362\\
-3	0.0481\\
-2	0.0617\\
-1	0.0879\\
0	0.1318\\
1	0.2038\\
2	0.2941\\
3	0.4145\\
4	0.5452\\
5	0.6419\\
6	0.7303\\
7	0.8106\\
8	0.8727\\
9	0.9274\\
10	0.9608\\
11	0.9812\\
12	0.9935\\
13	0.9976\\
14	0.9997\\
15	0.9999\\
16	1\\
17	1\\
18	1\\
19	1\\
20	1\\
21	1\\
};
\addlegendentry{corr. sim. g-\ac{MUSIC}}

\addplot [color=mycolor3, line width=1.2pt, dashed, mark=x, mark size=2.5pt, mark repeat = 2, mark phase = 1, mark options={solid, mycolor3}]
  table[row sep=crcr]{%
-16	nan\\
-15	nan\\
-14	nan\\
-13	nan\\
-12	nan\\
-11	nan\\
-10	nan\\
-9	nan\\
-8	nan\\
-7	0.588268437583334\\
-6	0.656186012021831\\
-5	0.728477794861612\\
-4	0.801178041798474\\
-3	0.867335374729772\\
-2	0.920815250584134\\
-1	0.95862423005879\\
0	0.981576477180211\\
1	0.993247892718785\\
2	0.998052989675745\\
3	0.999583237420348\\
4	0.999938532324916\\
5	0.999994318062667\\
6	0.999999708025989\\
7	0.999999992831759\\
8	0.999999999930535\\
9	0.999999999999791\\
10	1\\
11	1\\
12	1\\
13	1\\
14	1\\
15	1\\
16	1\\
17	1\\
18	1\\
19	1\\
20	1\\
21	1\\
};
\addlegendentry{pre. g-\ac{MUSIC}}

\addplot [color=mycolor3, line width=1.2pt, mark=x, mark size=2.5pt, mark repeat = 2, mark phase = 1, mark options={solid, mycolor3}]
  table[row sep=crcr]{%
-16	nan\\
-15	nan\\
-14	nan\\
-13	nan\\
-12	nan\\
-11	nan\\
-10	nan\\
-9	nan\\
-8	nan\\
-7	nan\\
-6	nan\\
-5	nan\\
-4	nan\\
-3	nan\\
-2	nan\\
-1	nan\\
0	nan\\
1	nan\\
2	nan\\
3	nan\\
4	0.564985338748985\\
5	0.644556609495485\\
6	0.723785881593881\\
7	0.804219761791732\\
8	0.877541921791839\\
9	0.934943412584981\\
10	0.971949357654958\\
11	0.990702519767317\\
12	0.997781218848072\\
13	0.999648116823208\\
14	0.999966418260304\\
15	0.999998296693416\\
16	0.999999960724757\\
17	0.999999999661926\\
18	0.999999999999153\\
19	1\\
20	1\\
21	1\\
};
\addlegendentry{corr. pre. g-\ac{MUSIC}}

\addplot [color=mycolor2, line width=1.2pt]
  table[row sep=crcr]{%
-16	0.0144\\
-15	0.0041\\
-14	0.001\\
-13	0.0004\\
-12	0.0002\\
-11	0.0004\\
-10	0.0003\\
-9	0.0005\\
-8	0.0012\\
-7	0.0039\\
-6	0.0108\\
-5	0.0292\\
-4	0.0682\\
-3	0.1414\\
-2	0.2715\\
-1	0.446\\
0	0.6328\\
1	0.8101\\
2	0.9294\\
3	0.9804\\
4	0.997\\
5	0.9995\\
6	1\\
7	1\\
8	1\\
9	1\\
10	1\\
11	1\\
12	1\\
13	1\\
14	1\\
15	1\\
16	1\\
17	1\\
18	1\\
19	1\\
20	1\\
21	1\\
};
\addlegendentry{sim. \ac{MUSIC}}

\addplot [color=mycolor2, line width=1.2pt, dashed]
  table[row sep=crcr]{%
-16	0.0014\\
-15	0.0001\\
-14	0\\
-13	0\\
-12	0\\
-11	0\\
-10	0\\
-9	0\\
-8	0\\
-7	0\\
-6	0\\
-5	0\\
-4	0\\
-3	0\\
-2	0\\
-1	0\\
0	0\\
1	0\\
2	0\\
3	0\\
4	0\\
5	0.0004\\
6	0.0031\\
7	0.0168\\
8	0.0582\\
9	0.1863\\
10	0.3808\\
11	0.6206\\
12	0.8118\\
13	0.9309\\
14	0.9804\\
15	0.9966\\
16	0.9996\\
17	1\\
18	1\\
19	1\\
20	1\\
21	1\\
};
\addlegendentry{corr. sim. \ac{MUSIC}}

\addplot [color=mycolor5, line width=1.2pt, dashed, mark=x, mark size=2.5pt, mark repeat = 2, mark phase = 2, mark options={solid, mycolor5}]
  table[row sep=crcr]{%
-16	nan\\
-15	nan\\
-14	nan\\
-13	nan\\
-12	nan\\
-11	nan\\
-10	nan\\
-9	nan\\
-8	nan\\
-7	0.0260236414860566\\
-6	0.0302179396942898\\
-5	0.0452616575335246\\
-4	0.0792306168520496\\
-3	0.14727520834914\\
-2	0.266641599193407\\
-1	0.440655356803\\
0	0.641228768832412\\
1	0.816820040337787\\
2	0.929597004033025\\
3	0.98083032016859\\
4	0.996557579789046\\
5	0.999627103052124\\
6	0.999978252942768\\
7	0.999999409678967\\
8	0.999999993806182\\
9	0.999999999980167\\
10	0.999999999999986\\
11	1\\
12	1\\
13	1\\
14	1\\
15	1\\
16	1\\
17	1\\
18	1\\
19	1\\
20	1\\
21	1\\
};
\addlegendentry{pre. \ac{MUSIC}}

\addplot [color=mycolor5, line width=1.2pt, mark=x, mark size=2.5pt, mark repeat = 2, mark phase = 2, mark options={solid, mycolor5}]
  table[row sep=crcr]{%
-16	nan\\
-15	nan\\
-14	nan\\
-13	nan\\
-12	nan\\
-11	nan\\
-10	nan\\
-9	nan\\
-8	nan\\
-7	nan\\
-6	nan\\
-5	nan\\
-4	nan\\
-3	nan\\
-2	nan\\
-1	nan\\
0	nan\\
1	nan\\
2	nan\\
3	nan\\
4	0.017174873488337\\
5	0.0167140890243378\\
6	0.0235461958018316\\
7	0.042395791503552\\
8	0.0871283245318204\\
9	0.182719306886428\\
10	0.353730105802887\\
11	0.587134039486672\\
12	0.808802812013109\\
13	0.942951685999684\\
14	0.990318712384808\\
15	0.999185091177123\\
16	0.999970982701578\\
17	0.999999639888142\\
18	0.999999998777313\\
19	0.999999999999164\\
20	1\\
21	1\\
};
\addlegendentry{corr. pre. \ac{MUSIC}}

\addplot [color=mycolor6, line width=1.2pt]
  table[row sep=crcr]{%
-7	-0.05\\
-7	1.05\\
};
\addlegendentry{sep. boundary}

\addplot [color=mycolor6, line width=1.2pt,dashed]
  table[row sep=crcr]{%
4	-0.05\\
4	1.05\\
};
\addlegendentry{corr. sep. boundary}

\end{axis}
\end{tikzpicture}%
		\caption{Uncorrelated and Correlated Sources, Correlation Coefficient $\rho = 0.95$, $M=15$ Sensors, $N=100$ Snapshots}
		\label{fig: probability_of_resolution_vs_snr_N_100}
	\end{figure}
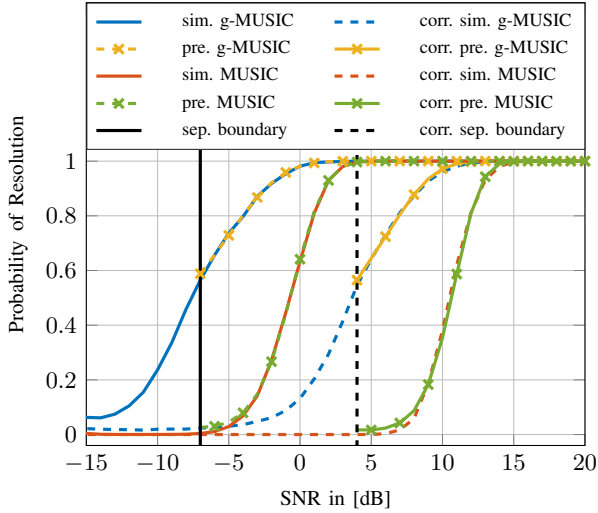
	
	In Figure \ref{fig: probability_of_resolution_vs_snapshots} the probability of resolution is depicted for different numbers of snapshots $N$. The \ac{SNR} is fixed to $\text{SNR} = 6\text{dB}$ and the source signals are uncorrelated. The proposed analytic expression for the probability of resolution very accurately describes the empirical simulated probability of resolution of \ac{MUSIC} as well as g-\ac{MUSIC}. Furthermore, g-\ac{MUSIC} outperforms conventional \ac{MUSIC} especially under harsh conditions with limited number of snapshots. 
	
	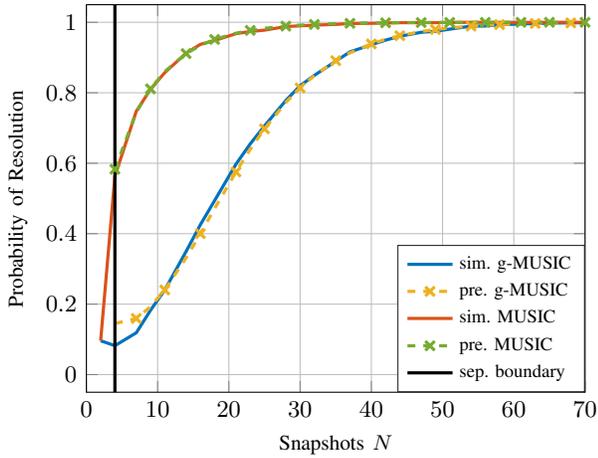
\begin{figure}
		\centering
		% This file was created by matlab2tikz.
%
%The latest updates can be retrieved from
%  http://www.mathworks.com/matlabcentral/fileexchange/22022-matlab2tikz-matlab2tikz
%where you can also make suggestions and rate matlab2tikz.
%
\definecolor{mycolor1}{rgb}{0.00000,0.44700,0.74100}%
\definecolor{mycolor2}{rgb}{0.85000,0.32500,0.09800}%
\definecolor{mycolor3}{rgb}{0.92900,0.69400,0.12500}%
\definecolor{mycolor4}{rgb}{0.49400,0.18400,0.55600}%
\definecolor{mycolor5}{rgb}{0.46600,0.67400,0.18800}%
\definecolor{mycolor6}{rgb}{0,0,0}%
\definecolor{mycolor7}{rgb}{1,1,0}%
\begin{tikzpicture}

\begin{axis}[%
width=9in,
height=7in,
scale=0.29,
at={(1.011in,0.642in)},
scale only axis,
unbounded coords=jump,
xmin=0,
xmax=70,
xlabel style={font=\color{white!15!black}},
xlabel={Snapshots $N$},
ymin=-0.05,
ymax=1.05,
ylabel style={font=\color{white!15!black}},
ylabel={Probability of Resolution},
label style={font=\footnotesize},
axis background/.style={fill=white},
xmajorgrids,
ymajorgrids,
legend style={at={(1,0)},anchor=south east, legend cell align=left, align=left, draw=white!15!black, font=\scriptsize}
]
\addplot [color=mycolor1, line width=1.2pt]
  table[row sep=crcr]{%
2	0.0962\\
4	0.0821\\
7	0.1186\\
9	0.1826\\
11	0.2402\\
14	0.3498\\
16	0.4255\\
18	0.494\\
21	0.5974\\
23	0.6549\\
25	0.7059\\
28	0.7781\\
30	0.8201\\
32	0.8494\\
35	0.8897\\
37	0.9161\\
40	0.9362\\
42	0.9494\\
44	0.9604\\
47	0.9716\\
49	0.9746\\
51	0.9805\\
54	0.9892\\
56	0.9895\\
58	0.9929\\
61	0.9954\\
63	0.9964\\
65	0.9977\\
68	0.9985\\
70	0.9986\\
};
\addlegendentry{sim. g-MUSIC}

\addplot [color=mycolor3, line width=1.2pt, dashed, mark=x, mark size=2.5pt, mark repeat = 2, mark phase = 2, mark options={solid, mycolor3}]
  table[row sep=crcr]{%
2	nan\\
4	0.144919500014348\\
7	0.159463160045959\\
9	0.192752181515109\\
11	0.240292946696881\\
14	0.331380853513896\\
16	0.400182368286736\\
18	0.471243967138189\\
21	0.575216531894811\\
23	0.639341267698344\\
25	0.697564246050472\\
28	0.772420609436524\\
30	0.813914503646139\\
32	0.84907023411367\\
35	0.891192130186329\\
37	0.91317959965868\\
40	0.938714225333822\\
42	0.951685373796067\\
44	0.962060198608097\\
47	0.973768560711998\\
49	0.97956686656906\\
51	0.984125746161926\\
54	0.989177795865384\\
56	0.991638869291463\\
58	0.993552237705956\\
61	0.995647113943333\\
63	0.996656363413429\\
65	0.997435017503569\\
68	0.998280468840121\\
70	0.998684645491257\\
};
\addlegendentry{pre. g-MUSIC}

\addplot [color=mycolor2, line width=1.2pt]
  table[row sep=crcr]{%
2	0.0962\\
4	0.5632\\
7	0.7473\\
9	0.81\\
11	0.8571\\
14	0.9125\\
16	0.9374\\
18	0.9484\\
21	0.9677\\
23	0.9739\\
25	0.9777\\
28	0.988\\
30	0.9899\\
32	0.992\\
35	0.994\\
37	0.9966\\
40	0.9971\\
42	0.9979\\
44	0.9989\\
47	0.999\\
49	0.999\\
51	0.9995\\
54	0.9999\\
56	0.9999\\
58	0.9999\\
61	0.9999\\
63	0.9998\\
65	1\\
68	0.9999\\
70	1\\
};
\addlegendentry{sim. MUSIC}

\addplot [color= mycolor5, line width=1.2pt, dashed, mark=x, mark size=2.5pt, mark repeat = 2, mark phase = 1,mark options={solid, mycolor5}]
  table[row sep=crcr]{%
2	nan\\
4	0.583096463317217\\
7	0.74459731870589\\
9	0.811164273879223\\
11	0.860340842918876\\
14	0.911188231440112\\
16	0.934316018932252\\
18	0.951406604467824\\
21	0.969054563815555\\
23	0.977080229208701\\
25	0.983015479305785\\
28	0.989154338794479\\
30	0.991952126575913\\
32	0.994025072278333\\
35	0.996174223771102\\
37	0.997156155035738\\
40	0.998175862925289\\
42	0.998642537837728\\
44	0.998989404908947\\
47	0.999350363914326\\
49	0.999515901579355\\
51	0.999639130565624\\
54	0.999767594304807\\
56	0.999826612948034\\
58	0.999870605055304\\
61	0.99991653559709\\
63	0.999937668867464\\
65	0.999953439114659\\
68	0.999969925633745\\
70	0.999977521135402\\
};
\addlegendentry{pre. MUSIC}

\addplot [color=mycolor6, line width=1.2pt]
  table[row sep=crcr]{%
4	-0.05\\
4	1.05\\
};
\addlegendentry{sep. boundary}

\end{axis}
\end{tikzpicture}%
		\caption{Uncorrelated Sources, $M=15$ Sensors, $\text{SNR}=6\text{dB}$}
		\label{fig: probability_of_resolution_vs_snapshots}
	\end{figure}
	
	In Figure \ref{fig: probability_of_resolution_vs_angular_separation_N_15} the probability of resolution is depicted for different angular separation $\Delta\vartheta$ between two sources. The \ac{SNR} is fixed to $\text{SNR} = 2\text{dB}$ and a total of $N=15$ snapshots are considered. The transmitted signals are uncorrelated and the two sources are located at $\boldsymbol{\theta}=\lbrack 45^{\circ}, 45^{\circ}+\Delta\vartheta\rbrack^{\operatorname{T}}$ with $\Delta\vartheta \in \lbrack 0.2^{\circ}, 10^{\circ}\rbrack$. Furthermore, a \ac{ULA} with $M=15$ antennas is used. It can be observed that the proposed prediction of the probability of resolution is almost identical to the actual one. Especially in difficult scenarios with low \ac{SNR} and closely spaced sources g-\ac{MUSIC} is superior to conventional \ac{MUSIC}. 

	\begin{figure}
		\centering
		% This file was created by matlab2tikz.
%
%The latest updates can be retrieved from
%  http://www.mathworks.com/matlabcentral/fileexchange/22022-matlab2tikz-matlab2tikz
%where you can also make suggestions and rate matlab2tikz.
%
\definecolor{mycolor1}{rgb}{0.00000,0.44700,0.74100}%
\definecolor{mycolor2}{rgb}{0.85000,0.32500,0.09800}%
\definecolor{mycolor3}{rgb}{0.92900,0.69400,0.12500}%
\definecolor{mycolor4}{rgb}{0.49400,0.18400,0.55600}%
\definecolor{mycolor5}{rgb}{0.46600,0.67400,0.18800}%
\definecolor{mycolor6}{rgb}{0,0,0}%
\definecolor{mycolor7}{rgb}{1,1,0}%
\begin{tikzpicture}

\begin{axis}[%
width=9in,
height=7in,
scale=0.29,
at={(1.011in,0.642in)},
scale only axis,
unbounded coords=jump,
xmin=0,
xmax=10,
xlabel style={font=\color{white!15!black}},
xlabel={Angular Separation $\Delta\vartheta$ in $\lbrack \text{deg} \rbrack$},
ymin=-0.05,
ymax=1.4,
ylabel style={font=\color{white!15!black}},
ylabel={Probability of Resolution},
label style={font=\footnotesize},
axis background/.style={fill=white},
xmajorgrids,
ymajorgrids,
ytick={0,  0.2, 0.4, 0.6, 0.8, 1},
legend style={at={(0,1)},anchor=north west, legend cell align=left, align=center, draw=white!15!black, legend columns = 2, column sep=18.1pt, font=\scriptsize},%
]
\addplot [color=mycolor1, line width=1.2pt]
  table[row sep=crcr]{%
0.2	0.0092\\
0.451282051282051	0.0128\\
0.702564102564103	0.0199\\
0.953846153846154	0.0322\\
1.20512820512821	0.0495\\
1.45641025641026	0.0799\\
1.70769230769231	0.1344\\
1.95897435897436	0.1811\\
2.21025641025641	0.2517\\
2.46153846153846	0.316\\
2.71282051282051	0.3714\\
2.96410256410256	0.4191\\
3.21538461538462	0.4692\\
3.46666666666667	0.5051\\
3.71794871794872	0.5461\\
3.96923076923077	0.5858\\
4.22051282051282	0.629\\
4.47179487179487	0.6635\\
4.72307692307692	0.6899\\
4.97435897435898	0.732\\
5.22564102564103	0.7751\\
5.47692307692308	0.8102\\
5.72820512820513	0.8409\\
5.97948717948718	0.8774\\
6.23076923076923	0.9018\\
6.48205128205128	0.9195\\
6.73333333333333	0.9456\\
6.98461538461539	0.9599\\
7.23589743589744	0.9675\\
7.48717948717949	0.9795\\
7.73846153846154	0.9856\\
7.98974358974359	0.9919\\
8.24102564102564	0.9939\\
8.49230769230769	0.9962\\
8.74358974358974	0.9978\\
8.99487179487179	0.9991\\
9.24615384615385	0.9988\\
9.4974358974359	0.9996\\
9.74871794871795	0.9993\\
10	0.9996\\
};
\addlegendentry{sim. g-\ac{MUSIC}}

\addplot [color=mycolor3, line width=1.2pt, dashed, mark=x, mark size=2.5pt, mark repeat = 2, mark phase = 2, mark options={solid, mycolor3}]
  table[row sep=crcr]{%
0.2	nan\\
0.451282051282051	nan\\
0.702564102564103	nan\\
0.953846153846154	nan\\
1.20512820512821	nan\\
1.45641025641026	nan\\
1.70769230769231	nan\\
1.95897435897436	nan\\
2.21025641025641	nan\\
2.46153846153846	nan\\
2.71282051282051	nan\\
2.96410256410256	nan\\
3.21538461538462	0.527462680662917\\
3.46666666666667	0.546412192325487\\
3.71794871794872	0.568507827646425\\
3.96923076923077	0.594466867045674\\
4.22051282051282	0.624295447854023\\
4.47179487179487	0.6576050457523\\
4.72307692307692	0.69365866153768\\
4.97435897435898	0.731418418957819\\
5.22564102564103	0.769635261402212\\
5.47692307692308	0.806980215740047\\
5.72820512820513	0.842195905952014\\
5.97948717948718	0.874237037676354\\
6.23076923076923	0.902370744951243\\
6.48205128205128	0.926219930180269\\
6.73333333333333	0.945748970969089\\
6.98461538461539	0.961204742848449\\
7.23589743589744	0.973032840817656\\
7.48717948717949	0.981788876734244\\
7.73846153846154	0.988060005461425\\
7.98974358974359	0.992405298710248\\
8.24102564102564	0.995317524550776\\
8.49230769230769	0.997204506767532\\
8.74358974358974	0.998385804496431\\
8.99487179487179	0.999099677588126\\
9.24615384615385	0.999515668886142\\
9.4974358974359	0.999749117153566\\
9.74871794871795	0.99987509386756\\
10	0.999940352556901\\
};
\addlegendentry{pre. g-\ac{MUSIC}}

\addplot [color=mycolor2, line width=1.2pt]
  table[row sep=crcr]{%
0.2	0\\
0.451282051282051	0\\
0.702564102564103	0\\
0.953846153846154	0\\
1.20512820512821	0\\
1.45641025641026	0\\
1.70769230769231	0\\
1.95897435897436	0\\
2.21025641025641	0\\
2.46153846153846	0\\
2.71282051282051	0\\
2.96410256410256	0.0001\\
3.21538461538462	0.0002\\
3.46666666666667	0.001\\
3.71794871794872	0.0028\\
3.96923076923077	0.0069\\
4.22051282051282	0.0116\\
4.47179487179487	0.0243\\
4.72307692307692	0.0454\\
4.97435897435898	0.0749\\
5.22564102564103	0.1244\\
5.47692307692308	0.176\\
5.72820512820513	0.243\\
5.97948717948718	0.3314\\
6.23076923076923	0.4256\\
6.48205128205128	0.5209\\
6.73333333333333	0.6009\\
6.98461538461539	0.7001\\
7.23589743589744	0.7765\\
7.48717948717949	0.8492\\
7.73846153846154	0.8913\\
7.98974358974359	0.9354\\
8.24102564102564	0.9561\\
8.49230769230769	0.9725\\
8.74358974358974	0.9858\\
8.99487179487179	0.9934\\
9.24615384615385	0.994\\
9.4974358974359	0.998\\
9.74871794871795	0.9987\\
10	0.999\\
};
\addlegendentry{sim. \ac{MUSIC}}

\addplot [color= mycolor5, line width=1.2pt, dashed, mark=x, mark size=2.5pt, mark repeat = 2, mark phase = 1,mark options={solid, mycolor5}]
  table[row sep=crcr]{%
0.2	nan\\
0.451282051282051	nan\\
0.702564102564103	nan\\
0.953846153846154	nan\\
1.20512820512821	nan\\
1.45641025641026	nan\\
1.70769230769231	nan\\
1.95897435897436	nan\\
2.21025641025641	nan\\
2.46153846153846	nan\\
2.71282051282051	nan\\
2.96410256410256	nan\\
3.21538461538462	0.0198189061430286\\
3.46666666666667	0.0217746028791452\\
3.71794871794872	0.0256170336866725\\
3.96923076923077	0.0316893383831539\\
4.22051282051282	0.040732213004268\\
4.47179487179487	0.0538733539640759\\
4.72307692307692	0.072653752598324\\
4.97435897435898	0.0990057427301254\\
5.22564102564103	0.135106024570307\\
5.47692307692308	0.183034120785847\\
5.72820512820513	0.244216761411509\\
5.97948717948718	0.318747909038275\\
6.23076923076923	0.404810582654594\\
6.48205128205128	0.498501253991708\\
6.73333333333333	0.594273652690436\\
6.98461538461539	0.685966703358439\\
7.23589743589744	0.768088897864486\\
7.48717948717949	0.836889300735799\\
7.73846153846154	0.890852991376988\\
7.98974358974359	0.930539989571242\\
8.24102564102564	0.957959435424672\\
8.49230769230769	0.975792875717115\\
8.74358974358974	0.986734258977487\\
8.99487179487179	0.99307867841282\\
9.24615384615385	0.996561203183546\\
9.4974358974359	0.998373053138507\\
9.74871794871795	0.999267259426615\\
10	0.999686058356731\\
};
\addlegendentry{pre. \ac{MUSIC}}

\addplot [color=mycolor6, line width=1.2pt]
  table[row sep=crcr]{%
3.21538461538462	-0.05\\
3.21538461538462	1.05\\
};
\addlegendentry{sep. boundary}

\end{axis}
\end{tikzpicture}%
		\caption{Uncorrelated Sources Located at $\boldsymbol{\theta} = \lbrack 45^{\circ}, 45^{\circ}+\Delta\vartheta \rbrack^{\operatorname{T}}$, $M=15$ Sensors, $N=15$ Snapshots, $\text{SNR}=2\text{dB}$}
		\label{fig: probability_of_resolution_vs_angular_separation_N_15}
	\end{figure}
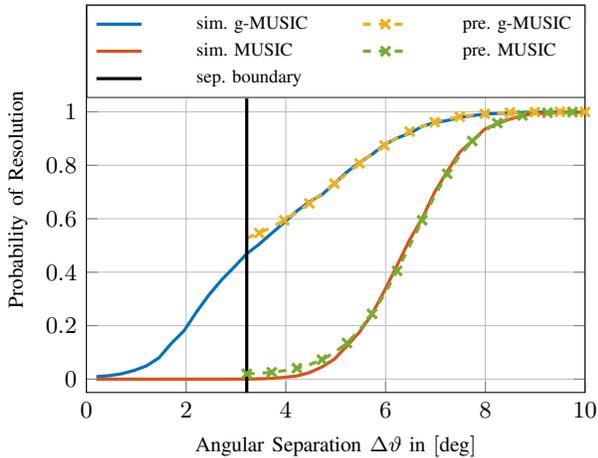

\section{Conclusion}
	\label{sec: conclusion}
	In this article the asymptotic stochastic behavior of the conventional \ac{MUSIC} and the g-\ac{MUSIC} cost function is investigated in the asymptotic regime where the number of snapshots and the number of sensors go to infinity at the same rate. Using tools from \ac{RMT} the finite dimensional distribution of the random \ac{MUSIC} and g-\ac{MUSIC} cost function is derived and shown to be asymptotically jointly Gaussian distributed. Furthermore, the resolution capabilities of both \ac{MUSIC} \ac{DoA} estimation methods is analyzed based on the asymptotic stochastic behavior of their cost functions. An analytic expression for the probability of resolution is provided that allows to predict the probability of resolution in the threshold region and thus provides an accurate description of the outlier production mechanism.

\appendices 
		\section{Determination of the Asymptotic Covariance of the MUSIC Cost Function}
			\label{app: second_order_music}
			In case of the conventional \ac{MUSIC} cost function $\bar{h}(z(\omega_{2})) = 1$. Hence, the complex contour integral in \eqref{eq: second_order_general_separated_omega_2} yields
			\begin{equation}
				\mathcal{I}_{r,k}(\omega_{1}) = \frac{1}{2 \pi \mathrm{j}}\oint_{\mathcal{C}_{\omega_{2}}} \frac{\omega_{2}}{z(\omega_{2})} \frac{\partial z(\omega_{2})}{\partial \omega_{2}} \frac{\frac{1}{\left(\gamma_{r}-\omega_{2}\right)\left(\gamma_{k}-\omega_{2}\right)}}{1-\Omega(\omega_{1},\omega_{2})}\mathrm{d}\omega_{2} 
				\label{eq: second_order_general_separated_omega_2_music}
			\end{equation}
			and can be solved in closed-form by applying conventional residue calculus \cite{a54}. A closed-form expression for $\mathcal{I}_{r,k}(\omega_{1})$ is obtained by summing the residues of the integrand in \eqref{eq: second_order_general_separated_omega_2_music} evaluated at all singularities that lie within the complex contour $\mathcal{C}_{\omega_{2}}$. To begin with, we consider the oversampled case where $M\leq N$. It can be seen, that the integrand of $\mathcal{I}_{r,k}(\omega_{1})$ in \eqref{eq: second_order_general_separated_omega_2_music} exhibits three different types of singularities that lie inside the complex contour $\mathcal{C}_{\omega_{2}}$. The first type of singularities corresponds to the poles of $\frac{\omega_{2}}{z(\omega_{2})}$, which are denoted by $\mu_{1}<\mu_{2}<\dots <\mu_{\bar{M}}$ in \eqref{eq: mu_values}. The corresponding residue w.r.t. $\omega_{2}$ evaluated at $\mu_{t}$ yields 
			\begin{equation}
				\res{\frac{\omega_{2}}{z(\omega_{2})} \frac{\partial z(\omega_{2})}{\partial \omega_{2}} \frac{\frac{1}{\left(\gamma_{r}-\omega_{2}\right)\left(\gamma_{k}-\omega_{2}\right)}}{1-\Omega(\omega_{1},\omega_{2})}}{\mu_{t}} =  \frac{\frac{\mu_{t}}{(\gamma_{r}-\mu_{t})(\gamma_{k}-\mu_{t})}}{1-\Omega(\omega_{1},\mu_{t})}
				\label{eq: music_mathcal_i_residue_a}
			\end{equation}
			where we have used $\Omega(\omega_{1},\omega_{2})$ in \eqref{eq: omega_captial_theorem}, and $\frac{\partial z(\omega)}{\partial \omega}$ in \eqref{eq: derivative_z_of_omega_wrt_omega}. However, only $\mu_{1}$ lies within the contour $\mathcal{C}_{\omega_{2}}$ which is why only the residue evaluated at $\mu_{1}$ contributes to the solution of $\mathcal{I}_{r,k} (\omega_{1})$ in \eqref{eq: second_order_general_separated_omega_2_music}. 
			
			The second type of singularities corresponds to the roots of $1-\Omega(\omega_{1},\omega_{2})$ where $\Omega(\omega_{1},\omega_{2})$ is given in \eqref{eq: omega_captial_theorem}. The complex-valued roots are denoted by $\varphi_{r}(\omega_{1})$ for $r=1,\dots,\bar{M}$ and sorted according to their real-part in ascending order $\rea{\varphi_{1}(\omega_{1})} < \dots < \rea{\varphi_{\bar{M}}(\omega_{1})}$ and given by the solutions  of the polynomial equation in $\varphi(\omega_{1})$ 
			\begin{equation}
				\frac{1}{N} \sum_{l=1}^{\bar{M}}\frac{K_{l}\gamma_{l}^{2}}{(\gamma_{l}-\omega_{1})(\gamma_{l}-\varphi(\omega_{1}))} = 1.
				\label{eq: varphi_values}
			\end{equation}
			The corresponding residue evaluated at $\varphi_{t}(\omega_{1})$ is given by 
			\begin{equation}
				\begin{aligned}
					&\res{\frac{\omega_{2}}{z(\omega_{2})} \frac{\partial z(\omega_{2})}{\partial \omega_{2}} \frac{\frac{1}{\left(\gamma_{r}-\omega_{2}\right)\left(\gamma_{k}-\omega_{2}\right)}}{1-\Omega(\omega_{1},\omega_{2})}}{\varphi_{t}(\omega_{1})} \\[-2pt]
					=& \frac{\varphi_{t}(\omega_{1})-\omega_{1}}{\left(\gamma_{r}-\varphi_{t}(\omega_{1})\right)\left(\gamma_{k}-\varphi_{t}(\omega_{1})\right)} \frac{1}{1-\frac{1}{N}\sum_{m=1}^{\bar{M}}\frac{K_{m}\gamma_{m}}{\gamma_{m}-\varphi_{t}(\omega_{1})}}.
				\end{aligned}
				\label{eq: music_mathcal_i_residue_b}
			\end{equation}
			\begin{lemma}
				Assuming that $\omega_{1}$ is located outside the contour generated by the parameterization $x \mapsto \omega_{1}(x)$ in \cite[Remark 3]{a20}, there exists exactly one solution of the equation in \eqref{eq: varphi_values} that is enclosed by the contour $\mathcal{C}_{\omega_{2}}$, namely $\varphi_{1}(\omega_{1})$.
			\end{lemma}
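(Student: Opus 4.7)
My strategy is to recognize \eqref{eq: varphi_values} as a preimage problem for the rational map $z(\cdot)$ in \eqref{eq: mapping_omega_to_z} and then count preimages by continuation in $\omega_{1}$. The first step is to establish the equivalence
\[
	\Omega(\omega_{1},\varphi) = 1 \quad \Longleftrightarrow \quad z(\varphi) = z(\omega_{1}), \quad \varphi \neq \omega_{1}.
\]
This follows from the partial-fraction identity $\bigl[(\gamma_{l}-\omega_{1})(\gamma_{l}-\varphi)\bigr]^{-1} = (\varphi-\omega_{1})^{-1}\bigl[(\gamma_{l}-\varphi)^{-1} - (\gamma_{l}-\omega_{1})^{-1}\bigr]$, which rewrites $\Omega(\omega_{1},\varphi) = (\varphi-\omega_{1})^{-1}[g(\varphi)-g(\omega_{1})]$ with $g(\omega) = N^{-1}\sum_{l}K_{l}\gamma_{l}^{2}/(\gamma_{l}-\omega)$, combined with the algebraic identity $g(\omega) = \omega - z(\omega) + N^{-1}\tr{\boldsymbol{R}}$ that follows directly from \eqref{eq: mapping_omega_to_z}. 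Because $z(\omega)$ is a rational map of degree $\bar{M}+1$, the equation $z(\varphi) = z(\omega_{1})$ has exactly $\bar{M}+1$ solutions counted with multiplicity, one of which is $\omega_{1}$ itself; the remaining $\bar{M}$ are $\varphi_{1}(\omega_{1}),\ldots,\varphi_{\bar{M}}(\omega_{1})$.

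With this reformulation, I would count enclosed roots by a homotopy argument in $\omega_{1}$. Since $\mathcal{C}_{\omega_{2}}$ is the image $\omega(\mathcal{C}_{z})$ of the $z$-contour under the $\gamma_{1}$-branch, we have $z(\mathcal{C}_{\omega_{2}}) = \mathcal{C}_{z}$, and this curve coincides (up to deformation) with the contour generated by $x \mapsto \omega_{1}(x)$ mentioned in \cite[Remark~3]{a20}. A branch $\varphi_{t}(\omega_{1})$ can only cross $\mathcal{C}_{\omega_{2}}$ at those $\omega_{1}$ for which $z(\omega_{1}) \in \mathcal{C}_{z}$, i.e., precisely when $\omega_{1}$ traverses that same contour. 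Since by hypothesis $\omega_{1}$ lies in the unbounded component of its complement, the number of roots inside $\mathcal{C}_{\omega_{2}}$ is locally constant there. I would then evaluate the count by sending $\omega_{1}\to\infty$ along a path staying in this unbounded component; an asymptotic expansion of \eqref{eq: varphi_values} for large $\omega_{1}$ forces $\varphi_{t}(\omega_{1}) = \gamma_{t} + O(\omega_{1}^{-1})$ for $t=1,\ldots,\bar{M}$, so by Assumption \ref{as: subspace_separation} exactly one root (the one converging to $\gamma_{1}$) is enclosed. Continuity along the path then identifies this unique root as $\varphi_{1}(\omega_{1})$, since the smallest-real-part branch remains isolated from the others provided no branch collisions are crossed.

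The principal obstacle will be verifying that the unbounded component of the complement of the contour from \cite[Remark~3]{a20} is indeed path-connected to infinity and avoids the critical locus of $z(\cdot)$ at which the branches $\varphi_{t}(\omega_{1})$ could coalesce. This rests on Assumption \ref{as: subspace_separation}, which guarantees that $\mathcal{C}_{z}$ can be chosen tight around $[x_{1}^{-},x_{1}^{+}]$ and bounded away from the other support intervals, so that its $z$-preimage decomposes cleanly into a single loop around $\gamma_{1}$ (namely $\mathcal{C}_{\omega_{2}}$) plus components far from it; this separation is exactly what allows the homotopy to $\infty$ to be carried out without leaving the regular locus.
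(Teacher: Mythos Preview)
Your reformulation of \eqref{eq: varphi_values} as the fibre equation $z(\varphi)=z(\omega_{1})$ is correct and gives genuine structural insight; the partial-fraction identity and the relation $g(\omega)=\omega-z(\omega)+N^{-1}\tr{\boldsymbol{R}}$ do exactly what you claim. The counting argument you build on it, however, is a genuinely different route from the paper's and carries a gap more serious than the obstacle you flag. You assert that the number of $\varphi_{t}$ inside $\mathcal{C}_{\omega_{2}}$ is locally constant on the exterior of $\mathcal{C}_{\omega_{2}}$ because a $\varphi_{t}$ can touch $\mathcal{C}_{\omega_{2}}$ only when $z(\omega_{1})\in\mathcal{C}_{z}$. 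But this last condition reads $\omega_{1}\in z^{-1}(\mathcal{C}_{z})$, and that preimage has a component near \emph{every} eigenvalue $\gamma_{l}$, not only near $\gamma_{1}$. A path from $\omega_{1}$ to infinity will in general meet the components around $\gamma_{2},\ldots,\gamma_{\bar{M}}$, and at each such crossing one of the $\varphi_{t}$ lands exactly on $\mathcal{C}_{\omega_{2}}$; your argument offers no mechanism preventing it from passing through and changing the count. The obstacle you name---avoiding the critical locus of $z$---controls only branch \emph{coalescence}, which affects the labelling of the $\varphi_{t}$ but not how many lie inside a fixed contour. The real issue is these other preimage components, and the hypothesis of the lemma (which in the paper's reading amounts to $\tilde{\Psi}(\omega_{1})<1$, a condition that already excludes neighbourhoods of \emph{all} the $\gamma_{l}$) is what ultimately handles them---but you have not used it in that way.

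The paper sidesteps the whole homotopy by a direct Rouch\'e argument. Writing $f(\varphi)=1-\Omega(\omega_{1},\varphi)$, the Cauchy--Schwarz inequality gives $|1-f(\varphi)|^{2}=|\Omega(\omega_{1},\varphi)|^{2}\le\tilde{\Psi}(\omega_{1})\,\tilde{\Psi}(\varphi)<1$ for $\varphi\in\mathcal{C}_{\omega_{2}}$, using $\tilde{\Psi}(\omega_{1})<1$ from the hypothesis together with Lemma~\ref{lemma: bound_Omega}. Rouch\'e then says $f$ and the constant $1$ have the same zero-minus-pole count inside $\mathcal{C}_{\omega_{2}}$, so $f$ has exactly as many zeros as poles there; the only enclosed pole of $f$ is $\gamma_{1}$, hence exactly one zero is enclosed. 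This is a two-line argument once Lemma~\ref{lemma: bound_Omega} is available, whereas your approach would still need to justify the global topology of $z^{-1}(\mathcal{C}_{z})$ before the continuation step becomes rigorous.
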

			\begin{proof}
			    Since $\omega_{1}$ is located outside the contour $\tilde{\Psi}(\omega_{1})<1$ where $\tilde{\Psi}(\omega)$ is given in \eqref{eq: definition_tilde_Psi}. It can be observed that for a fixed $\omega_{1}$ the $\varphi_{r}(\omega_{1})$ values are the zeros of the function $f(\varphi) = 1-\Omega(\omega_{1},\varphi)$ where $\Omega(\omega_{1},\omega_{2})$ is given in \eqref{eq: omega_captial_theorem}. By Cauchy-Schwarz inequality it follows that, if $\varphi \in \mathcal{C}_{\omega_{2}}$, we have $|1-f(\varphi)|^{2} \leq \tilde{\Psi}(\omega_{1}) \tilde{\Psi}(\varphi) < 1$ (see Lemma \ref{lemma: bound_Omega}). It can also be seen that $f(\varphi)$ has no singularities or zeros lying directly on the contour $\mathcal{C}_{\omega_{2}}$, so that by Rouch\'e's theorem $f(\varphi)$ has the same number of zeros and poles inside $\mathcal{C}_{\omega_{2}}$. The poles of $f(\varphi)$ are located at the true eigenvalues $\gamma_{r}$ for $r=1,\dots,\bar{M}$. However, only the smallest eigenvalue $\gamma_{1}$ is enclosed by $\mathcal{C}_{\omega_{2}}$. Hence, it follows that only one zero of $1-f(\varphi)$ is enclosed by the contour $\mathcal{C}_{\omega_{2}}$, namely $\varphi_{1}(\omega_{1})$.
			\end{proof}
			Consequently, only the residue evaluated at $\varphi_{1}(\omega_{1})$  contributes to the solution of $\mathcal{I}_{r,k}(\omega_{1})$ in \eqref{eq: second_order_general_separated_omega_2_music}. 
		
			The third type of singularities corresponds to the poles at the true eigenvalues $\gamma_{t}$ for $t=1,\dots,\bar{M}$. However, we have to distinguish between poles of order one and order two. The residue with respect to $\omega_{2}$ evaluated at $\gamma_{t}$ yields 
			\begin{equation}
				\begin{aligned}
					&\res{\frac{\omega_{2}}{z(\omega_{2})} \frac{\partial z(\omega_{2})}{\partial \omega_{2}} \frac{\frac{1}{(\gamma_{r}-\omega_{2})(\gamma_{k}-\omega_{2})}}{1-\Omega(\omega_{1},\omega_{2})} }{\gamma_{t}}\\
					=&\begin{cases}
						\frac{N}{K_{t}\gamma_{t}}\frac{\gamma_{t}-\omega_{1}}{\gamma_{k}-\gamma_{t}} , & \text{for } t = r \neq k\\
						\frac{N}{K_{t}\gamma_{t}}\frac{\gamma_{t}-\omega_{1}}{\gamma_{r}-\gamma_{t}}, & \text{for } t = k \neq r\\
						\alpha_{t}(\omega_{1}) , & \text{for } r=k=t
					\end{cases}
				\end{aligned}
				\label{eq: music_mathcal_i_residue_c}
			\end{equation}
			where 
			\begin{equation}
				\begin{aligned}
					\alpha_{t}(\omega_{1}) =& \frac{(\gamma_{t}-\omega_{1})}{\left(\frac{1}{N}K_{t}\gamma_{t}\right)^{2}} \Biggl( 1- \frac{1}{N}\sum_{m=1 \atop m\neq t}^{\bar{M}} \frac{K_{m}\gamma_{m}}{\gamma_{m}-\gamma_{t}}\Biggr)\\%[-4pt]
					& +\frac{\left(\gamma_{t}-\omega_{1}\right)^{2}}{\gamma_{t}\left(\frac{1}{N} K_{t}\gamma_{t}\right)^{2}} \Biggl(1 -\frac{1}{N}\sum_{m=1\atop m\neq t}^{\bar{M}} \frac{K_{m}\gamma_{m}^{2}}{(\gamma_{m}-\omega_{1})(\gamma_{m}-\gamma_{t})}\Biggr).
				\end{aligned}
				\label{eq: zeta_music_second_order}
			\end{equation}
			It can be observed that only the noise eigenvalue $\gamma_{1}$ is enclosed by the contour $\mathcal{C}_{\omega_{2}}$. Correspondingly, the closed-form solution for the complex contour integral $\mathcal{I}_{r,k}(\omega_{1})$ in \eqref{eq: second_order_general_separated_omega_2_music} is obtained by taking the negative sum (negative sum because of the negatively orientated contour $\mathcal{C}_{\omega_{2}}$) of the residues in \eqref{eq: music_mathcal_i_residue_a}, \eqref{eq: music_mathcal_i_residue_b} and \eqref{eq: music_mathcal_i_residue_c} evaluated at all singularities that lie inside the complex contour $\mathcal{C}_{\omega_{2}}$ and yields
			\begin{align}
					\mathcal{I}_{r,k}(\omega_{1}) =& -\frac{\mu_{1}}{(\gamma_{r}-\mu_{1})(\gamma_{k}-\mu_{1})} \frac{1}{1-\Omega(\omega_{1},\mu_{1})}-\alpha_{1}(\omega_{1})\delta_{r=k=1} \nonumber\\
					&- \frac{\varphi_{1}(\omega_{1})-\omega_{1}}{(\gamma_{r}-\varphi_{1}(\omega_{1}))(\gamma_{k}-\varphi_{1}(\omega_{1}))} \frac{1}{1-\frac{1}{N}\sum_{m=1}^{\bar{M}} \frac{K_{m}\gamma_{m}}{\gamma_{m}-\varphi_{1}(\omega_{1})}}\nonumber\\
					&-\frac{N}{K_{1}\gamma_{1}}\frac{\gamma_{1}-\omega_{1}}{\gamma_{k}-\gamma_{1}} \delta_{r=1\neq k} -\frac{N}{K_{1}\gamma_{1}}\frac{\gamma_{1}-\omega_{1}}{\gamma_{r}-\gamma_{1}}\delta_{k=1\neq r}.
				\label{eq: closed_form_expression_mathcal_I_music}
			\end{align}
			In the following we substitute the closed-form expression for $\mathcal{I}_{r,k}(\omega_{1})$ in \eqref{eq: closed_form_expression_mathcal_I_music} into the general expression for the real-valued weights $\xi(r,k)$ in \eqref{eq: second_order_general_separated_omega_1} such that
			\begin{equation}
				\begin{aligned}
					 \frac{1}{\gamma_{r}\gamma_{k}}\xi_{\operatorname{c}}(r,k) = \frac{-1}{2\pi \mathrm{j}} \oint_{\mathcal{C}_{\omega_{1}}}  \frac{\omega_{1}}{z(\omega_{1})} \frac{\partial z(\omega_{1})}{\partial \omega_{1}} \frac{1}{(\gamma_{r}-\omega_{1})(\gamma_{k}-\omega_{1})} & \\[-2pt]
					\times \Biggl( \alpha_{1}(\omega_{1})\delta_{r=k=1}+\frac{\mu_{1}}{(\gamma_{r}-\mu_{1})(\gamma_{k}-\mu_{1})} \frac{1}{1-\Omega(\omega_{1},\mu_{1})}&\\[-2pt]
					+ \frac{N}{K_{1}\gamma_{1}}\frac{\gamma_{1}-\omega_{1}}{\gamma_{k}-\gamma_{1}} \delta_{r=1\neq k} +\frac{N}{K_{1}\gamma_{1}}\frac{\gamma_{1}-\omega_{1}}{\gamma_{r}-\gamma_{1}}\delta_{k=1\neq r} &\\[-2pt]
					+\frac{\varphi_{1}(\omega_{1})-\omega_{1}}{(\gamma_{r}-\varphi_{1}(\omega_{1}))(\gamma_{k}-\varphi_{1}(\omega_{1}))} \frac{1}{1-\frac{1}{N}\sum_{m=1}^{\bar{M}} \frac{K_{m}\gamma_{m}}{\gamma_{m}-\varphi_{1}(\omega_{1})}}&\Biggr) \mathrm{d}\omega_{1}
				\end{aligned}
				\label{eq: second_order_asymptotic_behavior_appendix_proof_music}
			\end{equation}
			where we have used $\bar{h}(z(\omega_{1})) =1$, $\Omega(\omega_{1},\omega_{2})$ in \eqref{eq: omega_captial_theorem} and $\alpha_{t}(\omega_{1})$ in \eqref{eq: zeta_music_second_order}. The integrand in \eqref{eq: second_order_asymptotic_behavior_appendix_proof_music} exhibits four different types of singularities. The first group of singularities corresponds to the poles that are located at the true eigenvalues $\gamma_{t}$. Hence, the residue of the first part of the integrand in \eqref{eq: second_order_asymptotic_behavior_appendix_proof_music} evaluated at $\gamma_{t}$ is given by 
			\begin{equation}
				\begin{aligned}
					&\res{\frac{\omega_{1}}{z(\omega_{1})} \frac{\partial z(\omega_{1})}{\partial \omega_{1}} \frac{\alpha_{t}(\omega_{1})}{(\gamma_{r}-\omega_{1})(\gamma_{k}-\omega_{1})}\delta_{r=k=t}}{\gamma_{t}}\\[-2pt]
					=& -\frac{N}{K_{t}} \frac{1}{\left(\frac{1}{N}K_{t}\gamma_{t}\right)^{2}} \Biggl(1-\frac{1}{N} \sum_{m=1\atop m\neq t}^{\bar{M}} \frac{K_{m}\gamma_{m}}{\gamma_{m}-\gamma_{t}}\Biggr)^{2} \delta_{r=k=t} \\[-2pt]
					& -\frac{1}{\left(\frac{1}{N}K_{t}\gamma_{t}\right)^{2}} \Biggl(1-\frac{1}{N} \sum_{m=1 \atop m \neq t}^{\bar{M}} \frac{K_{m}\gamma_{m}^{2}}{\left(\gamma_{m}-\gamma_{t}\right)^{2}}\Biggr) \delta_{r=k=t}
				\end{aligned}
				\label{eq: music_xi_residue_gamma_a}
			\end{equation}
			whereas the residue of the second part of the integrand in \eqref{eq: second_order_asymptotic_behavior_appendix_proof_music} yields 
			\begin{equation}
				\begin{aligned}
					&\res{\frac{\frac{\mu_{t}}{(\gamma_{r}-\mu_{t})(\gamma_{k}-\mu_{t})}}{1-\Omega(\omega_{1},\mu_{t})}\frac{\frac{\omega_{1}}{z(\omega_{1})} \frac{\partial z(\omega_{1})}{\partial \omega_{1}}}{(\gamma_{r}-\omega_{1})(\gamma_{k}-\omega_{1})}}{\gamma_{t}}\\[-2pt]
					=& \begin{cases}
						\frac{1}{\frac{1}{N}K_{t}\gamma_{t}} \frac{\mu_{t}}{\gamma_{k}-\mu_{t}}\frac{1}{\gamma_{k}-\gamma_{t}}, & \text{for } t=r\neq k \\
						\frac{1}{\frac{1}{N}K_{t}\gamma_{t}} \frac{\mu_{t}}{\gamma_{r}-\mu_{t}} \frac{1}{\gamma_{r}-\gamma_{t}}, & \text{for } t=k\neq r \\
						\beta_{t}, & \text{for } t=k=r
					\end{cases}
				\end{aligned}
				\label{eq: music_xi_residue_gamma_b}
			\end{equation}
			where 
			\begin{equation*}
				\begin{aligned}
					\beta_{t} =& \frac{1}{\left(\frac{1}{N} K_{t}\gamma_{t}\right)^{2}} \frac{\mu_{t}}{\gamma_{t}-\mu_{t}} \Biggl( 1- \frac{1}{N}\sum_{m=1 \atop m\neq t}^{\bar{M}} \frac{K_{m}\gamma_{m}}{\gamma_{m}-\gamma_{t}} \Biggr) \\[-2pt]
					&+ \frac{\mu_{t}}{\gamma_{t}\left(\frac{1}{N}K_{t}\gamma_{t}\right)^{2}} \Biggl(1-\frac{1}{N} \sum_{m=1 \atop m \neq t}^{\bar{M}} \frac{K_{m}\gamma_{m}^{2}}{(\gamma_{m}-\gamma_{t})(\gamma_{m}-\mu_{t})}\Biggr).
				\end{aligned}
			\end{equation*}
			The residue of the third part of the integrand in \eqref{eq: second_order_asymptotic_behavior_appendix_proof_music} evaluated at $\gamma_{t}$ is computed as follows
			\begin{equation}
				\begin{aligned}
					& \res{\frac{N}{K_{t}\gamma_{t}} \frac{\omega_{1}}{z(\omega_{1})} \frac{\partial z(\omega_{1})}{\partial \omega_{1}} \frac{\frac{\gamma_{t}-\omega_{1}}{\gamma_{k}-\gamma_{t}}\delta_{r=t\neq k} + \frac{\gamma_{t}-\omega_{1}}{\gamma_{r}-\gamma_{t}}\delta_{k=t\neq r}}{(\gamma_{r}-\omega_{1})(\gamma_{k}-\omega_{1})}  }{\gamma_{t}}\\
					&= -\frac{N}{K_{t}} \frac{1}{(\gamma_{k}-\gamma_{t})^{2}} \delta_{r=t\neq k} - \frac{N}{K_{t}} \frac{1}{(\gamma_{r}-\gamma_{t})^{2}}\delta_{k=t\neq r}.
				\end{aligned}
				\label{eq: music_xi_residue_gamma_c}
			\end{equation}
			Since only the noise eigenvalue $\gamma_{1}$ is enclosed by the contour $\mathcal{C}_{\omega_{1}}$, only the residues in \eqref{eq: music_xi_residue_gamma_a}, \eqref{eq: music_xi_residue_gamma_b} and \eqref{eq: music_xi_residue_gamma_c} evaluated at $\gamma_{1}$ contribute to the result of ${\xi}_{\operatorname{c}}(r,k)$ in \eqref{eq: second_order_asymptotic_behavior_appendix_proof_music}. 
			
			The second type of singularities belongs to the poles of $\frac{ \omega_{1}}{ z(\omega_{1})}$, which are located at the $\mu$-values that are defined in \eqref{eq: mu_values}. The corresponding residue of the first part of the integrand in \eqref{eq: second_order_asymptotic_behavior_appendix_proof_music} evaluated at $\mu_{t}$ is given by 
			\begin{equation}
				\begin{aligned}
					&\res{\frac{\omega_{1}}{z(\omega_{1})} \frac{\partial z(\omega_{1})}{\partial \omega_{1}} \frac{\alpha_{t}(\omega_{1})}{(\gamma_{r}-\omega_{1})(\gamma_{k}-\omega_{1})}\delta_{r=k=t}}{\mu_{t}}\\[-2pt]
					=& \frac{\mu_{t}}{\gamma_{t}-\mu_{t}} \frac{1}{\left(\frac{1}{N}K_{t}\gamma_{t}\right)^{2}} \Biggl( 1- \frac{1}{N} \sum_{m=1 \atop m \neq t}^{\bar{M}} \frac{K_{m}\gamma_{m}}{\gamma_{m}-\gamma_{t}} \Biggr) \delta_{r=k=t} \\[-2pt]
					&+ \frac{\mu_{t}}{\gamma_{t}\left(\frac{1}{N}K_{t}\gamma_{t}\right)^{2}} \Biggl( 1- \frac{1}{N}\sum_{m=1 \atop m\neq t}^{\bar{M}} \frac{K_{m}\gamma_{m}^{2}}{(\gamma_{m}-\mu_{t})(\gamma_{m}-\gamma_{t})}\Biggr)\delta_{r=k=t}.
				\end{aligned}
				\label{eq: music_xi_residue_mu_a}
			\end{equation}
			Furthermore, the residues of the second and third part of the integrand in \eqref{eq: second_order_asymptotic_behavior_appendix_proof_music} evaluated at $\mu_{t}$ are given by 
			\begin{equation}
				\begin{aligned}
					&\res{\frac{\frac{\mu_{t}}{(\gamma_{r}-\mu_{t})(\gamma_{k}-\mu_{t})}}{1-\Omega(\omega_{1},\mu_{t})}\frac{\frac{\omega_{1}}{z(\omega_{1})} \frac{\partial z(\omega_{1})}{\partial \omega_{1}}}{(\gamma_{r}-\omega_{1})(\gamma_{k}-\omega_{1})}}{\mu_{t}}\\[-2pt]
					=& \frac{\mu_{t}^{2}}{(\gamma_{r}-\mu_{t})^{2}(\gamma_{k}-\mu_{t})^{2} } \frac{1}{\left( 1- \frac{1}{N} \sum_{m=1}^{\bar{M}} \frac{K_{m}\gamma_{m}^{2}}{(\gamma_{m}-\mu_{t})^{2}}\right)}
				\end{aligned}
				\label{eq: music_xi_residue_mu_b}
			\end{equation}
			and
			\begin{equation}
				\begin{aligned}
					& \res{\frac{N}{K_{t}\gamma_{t}} \frac{\omega_{1}}{z(\omega_{1})} \frac{\partial z(\omega_{1})}{\partial \omega_{1}} \frac{\frac{\gamma_{t}-\omega_{1}}{\gamma_{k}-\gamma_{t}}\delta_{r=t\neq k} + \frac{\gamma_{t}-\omega_{1}}{\gamma_{r}-\gamma_{t}}\delta_{k=t\neq r}}{(\gamma_{r}-\omega_{1})(\gamma_{k}-\omega_{1})}  }{\mu_{t}}\\[-2pt]
					=& \frac{N}{K_{t}}\frac{1}{(\gamma_{k}-\gamma_{t})\gamma_{t}} \frac{\mu_{t}}{\gamma_{k}-\mu_{t}}\delta_{r=t\neq k} + \frac{N}{K_{t}}\frac{1}{(\gamma_{r}-\gamma_{t})\gamma_{t}}\frac{\mu_{t}}{\gamma_{r}-\mu_{t}}\delta_{k=t \neq r}.
				\end{aligned}
				\label{eq: music_xi_residue_mu_c}
			\end{equation}
			However, only $\mu_{1}$ is enclosed by the contour $\mathcal{C}_{\omega_{1}}$ which is why only the residues in \eqref{eq: music_xi_residue_mu_a}, \eqref{eq: music_xi_residue_mu_b} and \eqref{eq: music_xi_residue_mu_c} evaluated at $\mu_{1}$ contribute to the final result of ${\xi}_{\operatorname{c}}(r,k)$ in \eqref{eq: second_order_asymptotic_behavior_appendix_proof_music}.
			
			The third type of singularities of the integrand in \eqref{eq: second_order_asymptotic_behavior_appendix_proof_music} is given by the solutions to the polynomial equation $1=\Omega(\omega_{1},\mu_{t})$ in $\omega_{1}$ where $\Omega(\omega_{1},\omega_{2})$ is defined in \eqref{eq: omega_captial_theorem}. However, using Cauchy's argument principle it can be shown that the solutions to $1=\Omega(\omega_{1},\mu_{t})$ are located outside the contour $\mathcal{C}_{\omega_{1}}$ and therefore do not contribute to the  solution of ${\xi}_{\operatorname{c}}(r,k)$ in \eqref{eq: second_order_asymptotic_behavior_appendix_proof_music}. The following lemma is in line. 
			
			\begin{lemma}
				Function $f(\omega_{1})=1-\Omega(\omega_{1},\mu_{t})$ with $\Omega(\omega_{1},\omega_{2})$ in \eqref{eq: omega_captial_theorem} does not exhibit any zeros inside the contour $\mathcal{C}_{\omega_{1}}$.
			\end{lemma}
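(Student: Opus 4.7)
My plan is to reduce the claim to a statement about the zeros of $z(\omega_1)$ via a clean algebraic identity, and then locate those zeros relative to $\mathcal{C}_{\omega_1}$ using the branch structure of $\omega(\cdot)$.

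The first step is to establish the identity
\begin{equation*}
1 - \Omega(\omega_1, \mu_t) = \frac{z(\omega_1)}{\omega_1 - \mu_t}.
\end{equation*}
Starting from the spectral representation $\Omega(\omega_1,\mu_t) = \frac{1}{N}\sum_r \frac{K_r \gamma_r^2}{(\gamma_r - \omega_1)(\gamma_r - \mu_t)}$, the partial fraction $\frac{1}{(\gamma_r - \omega_1)(\gamma_r - \mu_t)} = \frac{1}{\omega_1 - \mu_t}\bigl[(\gamma_r - \omega_1)^{-1} - (\gamma_r - \mu_t)^{-1}\bigr]$ splits the sum into two one-variable sums of the form $\frac{1}{N}\sum_r \frac{K_r \gamma_r^2}{\gamma_r - \omega}$, each of which rewrites via \eqref{eq: mapping_omega_to_z} as $\frac{1}{N}\operatorname{tr}(\boldsymbol{R}) + \omega - z(\omega)$. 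The constant parts cancel and, since $z(\mu_t) = 0$ by \eqref{eq: mu_values}, the identity drops out.

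Next I enumerate the zeros of $f$. Factoring $z(\omega) = \omega\bigl(1 - \frac{1}{N}\sum_r \frac{K_r \gamma_r}{\gamma_r - \omega}\bigr)$ shows that $z(\omega_1)$ vanishes precisely on $\{0\} \cup \{\mu_r\}_{r=1}^{\bar M}$, and because $z'(\mu_t) = 1 - \tilde{\Psi}(\mu_t) \neq 0$ the simple pole of $(\omega_1 - \mu_t)^{-1}$ cancels the zero at $\omega_1 = \mu_t$. Hence the zeros of $f(\omega_1)$ are exactly $\{0\} \cup \{\mu_r : r \neq t\}$. Since the lemma is invoked with $\mu_t = \mu_1$ (the only $\mu$-value enclosed in the earlier residue computation), the candidate zeros to exclude from $\operatorname{int}(\mathcal{C}_{\omega_1})$ are $\{0, \mu_2, \ldots, \mu_{\bar M}\}$.

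Finally I argue that each such candidate lies outside $\mathcal{C}_{\omega_1}$. By construction the contour arises as the image $\omega(\mathcal{C}_z)$ of a $z$-contour enclosing only the sample-eigenvalue cluster $[x_1^-,x_1^+]$ associated with $\gamma_1$, and in the oversampled regime where the lemma is applied $\{0\} \notin \operatorname{int}(\mathcal{C}_z)$. For any candidate $c \in \{0, \mu_2, \ldots, \mu_{\bar M}\}$, the argument principle applied to $\omega(z) - c$ (which is holomorphic on $\mathbb{C}\setminus \mathcal{S}$ and has no poles inside $\mathcal{C}_z$) equates the winding number of $\omega(\mathcal{C}_z)$ around $c$ with the number of principal preimages of $c$ inside $\mathcal{C}_z$; any such preimage must solve $z = z(c) = 0$, so the winding number vanishes, and $c$ is exterior to $\mathcal{C}_{\omega_1}$.

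The principal technical obstacle is the last step: the distinct $\omega$-values $\{0, \mu_1, \ldots, \mu_{\bar M}\}$ all collapse to the single $z$-value $z = 0$, and one must carefully disentangle which of them is picked up by the principal branch $\omega(\cdot)$ and therefore lies inside the image contour. The argument-principle computation above handles them uniformly by reducing the $\omega$-plane enclosure question to the simple $z$-plane enclosure question $0 \notin \operatorname{int}(\mathcal{C}_z)$; any attempt to treat the candidates directly in the $\omega$-plane runs into the delicate branch structure of $\omega(\cdot)$ across the support $\mathcal{S}$ and would require substantially more bookkeeping.
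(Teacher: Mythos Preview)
Your identity $1-\Omega(\omega_1,\mu_t)=z(\omega_1)/(\omega_1-\mu_t)$ is correct and is a substantially cleaner route than the paper's approach: the paper computes the argument-principle integral $\oint f'/f$ directly, through about a page of algebraic manipulation of the integrand, to conclude that $Z-P=0$. With your identity the zeros of $f$ are read off immediately as $\{0\}\cup\{\mu_r:r\neq t\}$, and the only remaining task is to locate them relative to $\mathcal{C}_{\omega_1}$.

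That localization step, however, has two genuine gaps. First, your claim that $0\notin\operatorname{int}(\mathcal{C}_z)$ in the oversampled regime is wrong for the contour used in the paper: $\mathcal{C}_z$ is chosen to cross the real axis at $\varrho\in(x_1^+,x_2^-)$ on the right and at a point of the \emph{negative} real axis on the left, so $0\in\operatorname{int}(\mathcal{C}_z)$ and hence $0\in\operatorname{int}(\mathcal{C}_{\omega_1})$. Thus $f(0)=0$ \emph{is} a zero inside the contour. The paper's own proof in fact concludes this (``except of the zero at $\omega=0$ no other zero of $f(\omega_{1})$ is located inside''); the lemma's statement is slightly imprecise, and its operational content is only that $1/(1-\Omega(\omega_1,\mu_1))$ contributes no residues beyond the one at $\omega_1=0$, which is handled separately as the ``fourth type'' of singularity in \eqref{eq: music_xi_residue_zero}.

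Second, your argument-principle step for $\omega(z)-c$ does not go through as written: $\omega(\cdot)$ is holomorphic only on $\mathbb{C}\setminus\mathcal{S}$, and the branch cut $[x_1^-,x_1^+]\subset\mathcal{S}$ lies \emph{inside} $\mathcal{C}_z$, so the zero-counting formula you invoke is not available. The cut is precisely what allows the distinct $\omega$-values $0,\mu_1,\ldots,\mu_{\bar M}$, all mapping to $z=0$, to have different winding numbers under $\omega(\mathcal{C}_z)$. The correct conclusion---that $\mu_2,\ldots,\mu_{\bar M}$ lie outside $\mathcal{C}_{\omega_1}$---does hold, but follows more directly from the interlacing $\gamma_1<\mu_2<\gamma_2<\cdots$ combined with the fact (already used in the paper's residue analysis) that $\mathcal{C}_{\omega_1}$ encloses only $\gamma_1$ among the eigenvalues and only $\mu_1$ among the $\mu$-values. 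Replacing your winding-number paragraph with that one-line observation would make the identity-based proof complete and strictly shorter than the paper's.
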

			\begin{proof}
				Let $P$ denote the number of poles and $Z$ the number of zeros of $f(\omega_{1})$ that are located inside the contour $\mathcal{C}_{\omega_{1}}$, then according to Cauchy's argument principle
				\begin{equation}
					-\frac{1}{2\pi \mathrm{j}} \oint_{\mathcal{C}_{\omega_{1}}} \frac{f^{\prime}(\omega_{1})}{f(\omega_{1})} \mathrm{d}\omega_{1} = Z - P,
					\label{eq: cauchy_argument_principle}
				\end{equation}
				where the first order derivative of $f(\omega_{1})$ w.r.t. $\omega_{1}$ is given by
				\begin{equation}
					f^{\prime}(\omega_{1})=\frac{\partial f(\omega_{1})}{\partial \omega_{1}} = -\frac{1}{N}\sum_{m=1}^{\bar{M}}\frac{K_{m}\gamma_{m}^{2}}{(\gamma_{m}-\omega_{1})^{2}(\gamma_{m}-\mu_{t})} .
					\label{eq: appendix_first_order_derivative_f_one_minus_omega}
				\end{equation}
				Substituting $f(\omega_{1}) = 1- \Omega(\omega_{1},\mu_{t})$ and its first order derivative in \eqref{eq: appendix_first_order_derivative_f_one_minus_omega} into the argument principle in \eqref{eq: cauchy_argument_principle} yields
				\begin{equation*}
					\begin{aligned}
						&\frac{1}{2\pi \mathrm{j}} \oint_{\mathcal{C}_{\omega_{1}}} \frac{\frac{1}{N}\sum_{m=1}^{\bar{M}} \frac{K_{m}\gamma_{m}^{2} + \gamma_{m}-\gamma_{m}}{(\gamma_{m}-\omega_{1})^{2}(\gamma_{m}-\mu_{t})}}{1-\frac{1}{N}\sum_{m=1}^{\bar{M}} \frac{K_{m}\gamma_{m}^{2}}{(\gamma_{m}-\omega_{1})(\gamma_{m}-\mu_{t})}} \frac{\mu_{t}-\omega_{1}}{\mu_{t}-\omega_{1}} \mathrm{d}\omega_{1} \\[-1pt]
						=& \frac{1}{2\pi \mathrm{j}} \oint_{\mathcal{C}_{\omega_{1}}} \frac{1}{\mu_{t}-\omega_{1}} \frac{\frac{1}{N}\sum_{m=1}^{\bar{M}} \frac{K_{m}\gamma_{m}^{2} (\mu_{t}-\gamma_{m})}{(\gamma_{m}-\omega_{1})^{2}(\gamma_{m}-\mu_{t})}}{1-\frac{1}{N}\sum_{m=1}^{\bar{M}} \frac{K_{m}\gamma_{m}^{2}}{(\gamma_{m}-\omega_{1})(\gamma_{m}-\mu_{t})}}\mathrm{d}\omega_{1} \\[-1pt]
						&+ \frac{1}{2\pi \mathrm{j}} \oint_{\mathcal{C}_{\omega_{1}}} \frac{1}{\mu_{t}-\omega_{1}}  \frac{\frac{1}{N} \sum_{m=1}^{\bar{M}} \frac{K_{m}\gamma_{m}^{2}(\gamma_{m}-\omega_{1})}{(\gamma_{m}-\omega_{1})^{2}(\gamma_{m}-\mu_{t})}}{1-\frac{1}{N}\sum_{m=1}^{\bar{M}} \frac{K_{m}\gamma_{m}^{2}}{(\gamma_{m}-\omega_{1})(\gamma_{m}-\mu_{t})}} \mathrm{d}\omega_{1}
					\end{aligned}
				\end{equation*}
				where we have added and subtracted $\gamma_{m}$ and multiplied and divided by $\mu_{t}-\omega_{1}$. Next, we add and subtract $1$ to obtain
				\begin{equation}
					\begin{aligned}
						&\frac{1}{2\pi \mathrm{j}} \oint_{\mathcal{C}_{\omega_{1}}} \frac{1}{\mu_{t}-\omega_{1}} \frac{1-\frac{1}{N} \sum_{m=1}^{\bar{M}} \frac{K_{m}\gamma_{m}^{2}}{(\gamma_{m}-\omega_{1})^{2}}}{1-\frac{1}{N}\sum_{m=1}^{\bar{M}} \frac{K_{m}\gamma_{m}^{2}}{(\gamma_{m}-\omega_{1})(\gamma_{m}-\mu_{t})}}\mathrm{d}\omega_{1}\\[-1pt]
						&+\frac{1}{2\pi \mathrm{j}} \oint_{\mathcal{C}_{\omega_{1}}} \frac{1}{\mu_{t}-\omega_{1}} \frac{\frac{1}{N} \sum_{m=1}^{\bar{M}} \frac{K_{m}\gamma_{m}^{2}}{(\gamma_{m}-\omega_{1})(\gamma_{m}-\mu_{t})}-1}{1-\frac{1}{N}\sum_{m=1}^{\bar{M}} \frac{K_{m}\gamma_{m}^{2}}{(\gamma_{m}-\omega_{1})(\gamma_{m}-\mu_{t})}} \mathrm{d}\omega_{1}\\[-1pt]
						=& \frac{1}{2\pi \mathrm{j}} \oint_{\mathcal{C}_{\omega_{1}}} \left \lbrack -\frac{1}{z_{1}} \left(1-\frac{1}{N} \sum_{m=1}^{\bar{M}} \frac{K_{m}\gamma_{m}^{2}}{(\gamma_{m}-\omega_{1})^{2}}\right)-\frac{1}{\mu_{t}-\omega_{1}} \right \rbrack \mathrm{d}\omega_{1}
					\end{aligned} 
					\label{eq: appendix_argument_principle_proof_2}
				\end{equation}
				where we have used 
				\begin{equation*}
					z_{1} = (\omega_{1}-\mu_{t}) \left \lbrack 1- \frac{1}{N}\sum_{m=1}^{\bar{M}} \frac{K_{m}\gamma_{m}^{2}}{(\gamma_{m}-\omega_{1})(\gamma_{m}-\mu_{t})}\right \rbrack
				\end{equation*}
				which follows by subtracting $0 = \mu_{t}\left(1-\frac{1}{N}\sum_{m=1}^{\bar{M}} \frac{K_{m}\gamma_{m}}{\gamma_{m}-\mu_{t}}\right)$ from the definition of $z_{1}$ in \eqref{eq: definition_omega_z}. Applying the change of variables $\mathrm{d}\omega_{1} = \frac{\partial \omega(z_{1})}{\partial z_{1}}\mathrm{d}z_{1}$ in \eqref{eq: appendix_argument_principle_proof_2} one obtains
				\begin{align*}
					&-\frac{1}{2\pi \mathrm{j}} \oint_{\mathcal{C}_{z_{1}}} \frac{1}{z_{1}} \mathrm{d}z_{1} - \frac{1}{2\pi \mathrm{j}} \oint_{\mathcal{C}_{\omega_{1}}} \frac{1}{\mu_{t}-\omega_{1}} \mathrm{d}\omega_{1}\\
					=&\res{\frac{1}{z_{1}}}{0} + \res{\frac{1}{\mu_{t}-\omega_{1}}}{\mu_{t}} = 1-1 =0
				\end{align*}	
				for \eqref{eq: cauchy_argument_principle}. Furthermore, it can be observed that in the oversampled case if $\mathcal{C}_{z_{1}}$ encloses zero, also $\mathcal{C}_{\omega_{1}}$ encloses zero. Correspondingly, $f(\omega_{1})$ exhibits a single zero at $\omega_{1}=0$ that is enclosed by the contour $\mathcal{C}_{\omega_{1}}$ such that $Z=1$ and a single pole at $\omega=\gamma_{1}$ such that $P=1$. Consequently, except of the zero at $\omega=0$ no other zero of $f(\omega_{1})$ is located inside the contour $\mathcal{C}_{\omega_{1}}$.
			\end{proof}			
			The fourth and last type of singularities of the integrand in \eqref{eq: second_order_asymptotic_behavior_appendix_proof_music} is located at zero. The corresponding residue with respect to $\omega_{1}$ evaluated at zero is given by
			\begin{align}
				& \res{\frac{1}{(\gamma_{r}-\omega_{1})(\gamma_{k}-\omega_{1})} \frac{\frac{\mu_{1}}{(\gamma_{r}-\mu_{1})(\gamma_{k}-\mu_{1})}}{1-\Omega(\omega_{1},\mu_{1})} \frac{\omega_{1}}{z(\omega_{1})}\frac{\partial z(\omega_{1})}{\partial \omega_{1}}}{0} \nonumber \\
				=& - \frac{1}{\gamma_{r}\gamma_{k}} \frac{\mu_{1}}{(\gamma_{r}-\mu_{1})(\gamma_{k}-\mu_{1})} \frac{1}{\frac{1}{N} \sum_{m=1}^{\bar{M}} \frac{K_{m}}{\gamma_{m}-\mu_{1}}}.
				\label{eq: music_xi_residue_zero}
			\end{align}
		
			It remains to compute the complex contour integral that belongs to the last part in \eqref{eq: second_order_asymptotic_behavior_appendix_proof_music}
			\begin{equation}
				 \frac{-1}{2\pi \mathrm{j}} \oint_{\mathcal{C}_{\omega_{1}}} \frac{\left(\varphi_{1}(\omega_{1})-\omega_{1}\right) \frac{\omega_{1}}{z(\omega_{1})}\frac{\partial z(\omega_{1})}{\partial \omega_{1}} \frac{1}{(\gamma_{r}-\omega_{1})(\gamma_{k}-\omega_{1})}}{(\gamma_{r}-\varphi_{1}(\omega_{1}))(\gamma_{k}-\varphi_{1}(\omega_{1}))\scriptstyle\left(1-\frac{1}{N}\sum_{m=1}^{\bar{M}} \frac{K_{m}\gamma_{m}}{\gamma_{m}-\varphi_{1}(\omega_{1})}\right)} \mathrm{d}\omega_{1}
				 \label{eq: second_order_asymptotic_behavior_appendix_proof_music_intermediate_num}
			\end{equation}
			which can be solved in part numerically and in part in closed-form. The numerical part is obtained by applying the change of variables $\mathrm{d}\omega_{1} = \frac{\partial \omega(z_{1})}{\partial z_{1}}\mathrm{d}z_{1}$ to \eqref{eq: second_order_asymptotic_behavior_appendix_proof_music_intermediate_num} and parameterizing the complex contour through $z \mapsto \omega(z)$ for $z\in \lbrack x_{1}^{-}, x_{1}^{+}\rbrack$ as proposed in \cite[Section IV]{a20} and given by
			\begin{equation}
				\begin{aligned}
					-&\frac{1}{\pi} \int_{x_{1}^{-}}^{x_{1}^{+}} \operatorname{Im}\Biggl \lbrack \frac{1}{\left(1-\frac{1}{N}\sum_{n=1}^{\bar{M}}\frac{K_{n}\gamma_{n}}{\gamma_{n}-\omega(x)}\right)\left(1-\frac{1}{N}\sum_{n=1}^{\bar{M}}\frac{K_{n}\gamma_{n}}{\gamma_{n}-\varphi_{1}(\omega(x))}\right)} \\
					& \times \frac{\varphi_{1}(\omega(x))-\omega(x)}{(\gamma_{r}-\omega(x))(\gamma_{r}-\varphi_{1}(\omega(x)))(\gamma_{k}-\omega(x))(\gamma_{k}-\varphi_{1}(\omega(x)))}\Biggr \rbrack \mathrm{d}x \\				
					=& \frac{2}{\pi} \int_{x_{1}^{-}}^{x_{1}^{+}} \frac{1}{\left | 1- \frac{1}{N}\sum_{m=1}^{\bar{M}} \frac{K_{m}\gamma_{m}}{\gamma_{m}-\omega(x)} \right |^{2}} \frac{\ima{\omega(x)}}{\left|\gamma_{r}-\omega(x)\right|^{2} \left|\gamma_{k}-\omega(x)\right|^{2}} \mathrm{d}x.
				\end{aligned}
				\label{eq: music_xi_residue_num_num}			
			\end{equation}
			The following lemma was used to simplify the real-valued integral.		
			\begin{lemma}
				\label{lemma: phi_values_converge_to_complex_conjugate_of_omega}
				By parameterizing the contour $\mathcal{C}_{\omega_{1}}$ as proposed in \cite[Section IV]{a20}
				\begin{equation}
					\frac{1}{N}\sum_{r=1}^{\bar{M}} \frac{K_{r}\gamma_{r}^{2}}{|\gamma_{r}-\omega(x)|^{2}}=1 
					\label{eq: absolute_value_omega_x_equals_one}
				\end{equation} 
				for $x\in \mathcal{S}$ and it can be shown that
				\begin{equation}
					\underset{\omega_{1} \rightarrow \omega(x)}{\lim} \varphi_{t}(\omega_{1}) = \omega(x)^{*}, \quad \text{for } x \in \lbrack x_{t}^{-}, x_{t}^{+}\rbrack. 
					\label{eq: convergence_of_varphi_omega_1}
				\end{equation}
			\end{lemma}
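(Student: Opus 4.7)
The plan has two parts: establishing the identity \eqref{eq: absolute_value_omega_x_equals_one} on the support, and then using it to identify the limit of $\varphi_t(\omega_1)$. For the first identity, I would evaluate the defining equation \eqref{eq: definition_omega_z} for $\omega(z)$ at $z \in \mathcal{S}$ approached from $\mathbb{C}^+$, where by standard RMT arguments $\ima{\omega(z)}>0$ (with continuous extension to the interior of $\mathcal{S}$). After rationalizing each term $K_r\gamma_r/(\gamma_r - \omega(z))$ by multiplying numerator and denominator by $\gamma_r - \omega^{\ast}(z)$, taking imaginary parts of both sides of \eqref{eq: definition_omega_z} and using $\ima{z}=0$ yields
\begin{equation*}
0 = \ima{\omega(x)}\Biggl(1 - \frac{1}{N}\sum_{r=1}^{\bar{M}} \frac{K_r \gamma_r^2}{|\gamma_r - \omega(x)|^2}\Biggr),
\end{equation*}
and dividing by $\ima{\omega(x)}\neq 0$ gives \eqref{eq: absolute_value_omega_x_equals_one}.

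For the limit claim, I would substitute the candidate $\varphi = \omega^{\ast}(x)$ into the defining equation \eqref{eq: varphi_values} at $\omega_1 = \omega(x)$. Since $(\gamma_l - \omega(x))(\gamma_l - \omega^{\ast}(x)) = |\gamma_l - \omega(x)|^2$, this reduces precisely to \eqref{eq: absolute_value_omega_x_equals_one}, so $\omega^{\ast}(x)$ is indeed one of the $\bar{M}$ roots of \eqref{eq: varphi_values} when $\omega_1 = \omega(x)$. To argue that for $x \in [x_t^-, x_t^+]$ this is precisely the root with index $t$ under the real-part ordering, the plan is a continuity/deformation argument: as $\omega_1$ traces the parametrized contour of \cite[Section IV]{a20}, the roots $\varphi_1(\omega_1), \ldots, \varphi_{\bar{M}}(\omega_1)$ vary continuously, and by the cluster-separation structure of the support (Assumption \ref{as: subspace_separation} together with the bijection between subintervals $[x_s^-, x_s^+]$ and eigenvalues $\gamma_s$ from \cite{a20,a43}), the $t$-th root is trapped inside the loop enclosing the $t$-th eigenvalue cluster while the remaining roots stay uniformly bounded away from $\omega^{\ast}(x)$ thanks to the strict bound $\tilde{\Psi}(\omega)<1$ off the support used in Lemma \ref{lemma: bound_Omega}.

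The main obstacle is this last index-matching step: the local calculation only shows that $\omega^{\ast}(x)$ is \emph{some} root of \eqref{eq: varphi_values}, and confirming it carries the index $t$ requires the global geometric picture of the parametrized contour together with a Rouch\'e-type counting argument analogous to the one used just before \eqref{eq: music_mathcal_i_residue_b}, applied to the function $1 - \Omega(\omega(x),\varphi)$ on each sub-loop of $\mathcal{C}_{\omega_2}$. Once this is in place, passing to the limit $\omega_1 \to \omega(x)$ in $\varphi_t(\cdot)$ and invoking continuity of the roots completes the proof of \eqref{eq: convergence_of_varphi_omega_1}.
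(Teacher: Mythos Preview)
Your argument for \eqref{eq: absolute_value_omega_x_equals_one} is exactly the paper's: take imaginary parts of \eqref{eq: definition_omega_z} on the support, use $\ima{\omega(x)}>0$ in the interior, and extend to the endpoints. For \eqref{eq: convergence_of_varphi_omega_1} the paper is considerably terser than you are: it simply states that the limit is a ``direct consequence'' of \eqref{eq: absolute_value_omega_x_equals_one} together with the definition \eqref{eq: varphi_values}, i.e., it records precisely your observation that $(\gamma_l-\omega(x))(\gamma_l-\omega^{\ast}(x))=|\gamma_l-\omega(x)|^{2}$ makes $\omega^{\ast}(x)$ a root, and does not spell out the index-matching step at all. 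So your proposal is correct and follows the paper's route; the additional continuity/Rouch\'e discussion you sketch is a legitimate way to fill in what the paper leaves implicit.
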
 
			\begin{proof}
				Let $u(x)$ and $v(x)$ denote the real and imaginary parts of $\omega(x)$ in \eqref{eq: definition_omega_z}, i.e., $\omega(x) = u(x)+\mathrm{j}v(x)$. It is shown in \cite[Proposition 2]{a43} that $v(x)>0$ for $x\in \overset{\circ}{\mathcal{S}} \equiv (x_{1}^{-},x_{1}^{+}) \cup \dots \cup (x_{S}^{-},x_{S}^{+})$. Taking imaginary parts on both sides of \eqref{eq: definition_omega_z} one obtains
				\begin{equation*}
					0 = v(x) \Biggl( 1 - \frac{1}{N}\sum_{r=1}^{\bar{M}} \frac{K_{r}\gamma_{r}^{2}}{(\gamma_{r}-u(x))^{2}+v(x)^{2}}\Biggr)
				\end{equation*}
				which implies that
				\begin{equation}
					1=\frac{1}{N}\sum_{r=1}^{\bar{M}} \frac{K_{r}\gamma_{r}^{2}}{(\gamma_{r}- u(x))^{2} + v(x)^{2}} = \frac{1}{N}\sum_{r=1}^{\bar{M}}\frac{K_{r}\gamma_{r}^{2}}{|\gamma_{r}-\omega(x)|^{2}}
					\label{eq: absolute_value_omega_x_equals_one_proof}
				\end{equation}
				for $x\in \overset{\circ}{\mathcal{S}}$. From \cite[Proposition 1]{a20} it follows that \eqref{eq: absolute_value_omega_x_equals_one_proof} also holds for the boundaries of the clusters $x \in \lbrace x_{1}^{-}, x_{1}^{+}, \dots, x_{S}^{-}, x_{S}^{+} \rbrace$. Furthermore, the convergence result in  \eqref{eq: convergence_of_varphi_omega_1} follows as a direct consequence of \eqref{eq: absolute_value_omega_x_equals_one} and the definition of $\varphi_{t}(\omega_{1})$ in \eqref{eq: varphi_values}.
			\end{proof}
			
			In the undersampled case ($M>N$) it can be observed that the contour obtained through the parameterization $z \mapsto \omega(z)$ for $z \in \lbrack x_{1}^{-}, x_{1}^{+}\rbrack$ encloses zero but not $\mu_{1}$ since $\omega(0) = \mu_{1} < \omega(x_{1}^{-}) < 0 < \omega(x_{1}^{+})$. However, since $\mu_{1}$ is enclosed by $\mathcal{C}_{\omega_{1}}$ we compute the residue of the integrand in \eqref{eq: second_order_asymptotic_behavior_appendix_proof_music_intermediate_num} with respect to $\omega_{1}$ and evaluate it at $\mu_{1}$
			\begin{equation}
				\begin{aligned}
					&\res{\frac{\left(\varphi_{1}(\omega_{1})-\omega_{1}\right) \frac{\omega_{1}}{z(\omega_{1})}\frac{\partial z(\omega_{1})}{\partial \omega_{1}} \frac{1}{(\gamma_{r}-\omega_{1})(\gamma_{k}-\omega_{1})}\delta_{M>N}}{(\gamma_{r}-\varphi_{1}(\omega_{1}))(\gamma_{k}-\varphi_{1}(\omega_{1}))\left(1-\frac{1}{N}\sum_{m=1}^{\bar{M}} \frac{K_{m}\gamma_{m}}{\gamma_{m}-\varphi_{1}(\omega_{1})}\right)}}{\mu_{1}} \\
					=& - \frac{1}{\gamma_{r}\gamma_{k}} \frac{\mu_{1}^{2}}{(\gamma_{r}-\mu_{1})(\gamma_{k}-\mu_{1})} \frac{N}{N-M} \delta_{M>N}.
				\end{aligned}
				\label{eq: music_xi_residue_num_mu}	
			\end{equation}
			Furthermore, in the oversampled case ($M < N$) it can be observed that the contour obtained through the parameterization $z \mapsto \omega(z)$ for $z \in \lbrack x_{1}^{-}, x_{1}^{+}\rbrack$ encloses $\mu_{1}$, however it does not enclose zero since $0<\omega(x_{1}^{-}) < \mu_{1} < \omega(x_{1}^{+})$ although zero is enclosed by the contour. Therefore, we compute the residue of the integrand in \eqref{eq: second_order_asymptotic_behavior_appendix_proof_music_intermediate_num} with respect to $\omega_{1}$ and evaluate it at zero
			\begin{equation}
				\begin{aligned}
					&\res{\frac{\left(\varphi_{1}(\omega_{1})-\omega_{1}\right) \frac{\omega_{1}}{z(\omega_{1})}\frac{\partial z(\omega_{1})}{\partial \omega_{1}} \frac{1}{(\gamma_{r}-\omega_{1})(\gamma_{k}-\omega_{1})}\delta_{M<N}}{(\gamma_{r}-\varphi_{1}(\omega_{1}))(\gamma_{k}-\varphi_{1}(\omega_{1}))\left(1-\frac{1}{N}\sum_{m=1}^{\bar{M}} \frac{K_{m}\gamma_{m}}{\gamma_{m}-\varphi_{1}(\omega_{1})}\right)}}{0} \\
					=& \frac{1}{\gamma_{r}\gamma_{k}} \frac{\mu_{1}}{(\gamma_{r}-\mu_{1})(\gamma_{k}-\mu_{1})} \frac{1}{\frac{1}{N}\sum_{m=1}^{\bar{M}} \frac{K_{m}}{\gamma_{m}-\mu_{1}}}\delta_{M<N}.
				\end{aligned}
				\label{eq: music_xi_residue_num_zero}	
			\end{equation}
			Consequently, the solution of the contour integral in \eqref{eq: second_order_asymptotic_behavior_appendix_proof_music_intermediate_num} is obtained by summing the intermediate results in \eqref{eq: music_xi_residue_num_num}, \eqref{eq: music_xi_residue_num_mu} and \eqref{eq: music_xi_residue_num_zero} and yields  
			\begin{align}
					& \frac{-1}{2\pi \mathrm{j}} \oint_{\mathcal{C}_{\omega_{1}}} \frac{\left(\varphi_{1}(\omega_{1})-\omega_{1}\right) \frac{\omega_{1}}{z(\omega_{1})}\frac{\partial z(\omega_{1})}{\partial \omega_{1}} \frac{1}{(\gamma_{r}-\omega_{1})(\gamma_{k}-\omega_{1})}}{(\gamma_{r}-\varphi_{1}(\omega_{1}))(\gamma_{k}-\varphi_{1}(\omega_{1}))\scriptstyle\left(1-\frac{1}{N}\sum_{m=1}^{\bar{M}} \frac{K_{m}\gamma_{m}}{\gamma_{m}-\varphi_{1}(\omega_{1})}\right)} \mathrm{d}\omega_{1} \nonumber \\[-3pt]
					=& \frac{2}{\pi} \int_{x_{1}^{-}}^{x_{1}^{+}} \frac{1}{\left | 1- \frac{1}{N}\sum_{m=1}^{\bar{M}} \frac{K_{m}\gamma_{m}}{\gamma_{m}-\omega(x)} \right |^{2}} \frac{\ima{\omega(x)}}{\left|\gamma_{r}-\omega(x)\right|^{2} \left|\gamma_{k}-\omega(x)\right|^{2}} \mathrm{d}x \nonumber \\
					&- \frac{1}{\gamma_{r}\gamma_{k}} \frac{\mu_{1}^{2}}{(\gamma_{r}-\mu_{1})(\gamma_{k}-\mu_{1})} \frac{N}{N-M} \delta_{M>N} \nonumber \\
					& + \frac{1}{\gamma_{r}\gamma_{k}} \frac{\mu_{1}}{(\gamma_{r}-\mu_{1})(\gamma_{k}-\mu_{1})} \frac{1}{\frac{1}{N}\sum_{m=1}^{\bar{M}} \frac{K_{m}}{\gamma_{m}-\mu_{1}}}\delta_{M<N}.
				\label{eq: second_order_asymptotic_behavior_appendix_proof_music_numerical_part}
			\end{align}
	
			Finally, the expression of the real-valued weights ${\xi}_{\operatorname{c}}(r,k)$ in \eqref{eq: music_second_order_xi} is obtained by taking the sum of the residues in \eqref{eq: music_xi_residue_gamma_a}, \eqref{eq: music_xi_residue_gamma_b}, \eqref{eq: music_xi_residue_gamma_c}, \eqref{eq: music_xi_residue_mu_a}, \eqref{eq: music_xi_residue_mu_b}, \eqref{eq: music_xi_residue_mu_c}, \eqref{eq: music_xi_residue_zero} evaluated at all singularities that lie inside the contour $\mathcal{C}_{\omega_{1}}$ namely $\lbrace 0, \gamma_{1}, \mu_{1} \rbrace$ and adding \eqref{eq: second_order_asymptotic_behavior_appendix_proof_music_numerical_part}.		
			
		\section{Determination of the asymptotic covariance of the g-MUSIC cost function}
			\label{app: second_order_g_music}
			In case of the g-\ac{MUSIC} cost function $\bar{h}(z(\omega_{2})) = \frac{z(\omega_{2})}{\omega_{2}}\frac{\partial \omega(z_{2})}{\partial z_{2}}$. Hence the complex contour integral in \eqref{eq: second_order_general_separated_omega_2} simplifies as follows 
			\begin{equation}
				\mathcal{I}_{r,k}(\omega_{1}) = \frac{1}{2\pi \mathrm{j}}\oint_{\mathcal{C}_{\omega_{2}}}\frac{1}{(\gamma_{r}-\omega_{2})(\gamma_{k}-\omega_{2})}\frac{1}{1-\Omega(\omega_{1},\omega_{2})}\mathrm{d}\omega_{2} 
				\label{eq: second_order_general_separated_omega_2_g_music}
			\end{equation}
			which can be solved in closed-form using conventional residue calculus \cite{a54}. It can be observed that the integrand in \eqref{eq: second_order_general_separated_omega_2_g_music} exhibits two different types of singularities. The first type of singularities belongs to the roots of $1-\Omega(\omega_{1},\omega_{2})$ where $\Omega(\omega_{1},\omega_{2})$ is given in \eqref{eq: omega_captial_theorem}. The complex-valued roots are denoted by $\varphi_{t}(\omega_{1})$ for $t=1,\dots,\bar{M}$ and defined in \eqref{eq: varphi_values}. The residue w.r.t. $\omega_{2}$ evaluated at $\varphi_{t}(\omega_{1})$ is given by 
			\begin{equation}
				\res{\frac{\frac{1}{(\gamma_{r}-\omega_{2})(\gamma_{k}-\omega_{2})}}{1-\Omega(\omega_{1},\omega_{2})}}{\varphi_{t}(\omega_{1})} =\frac{\frac{-1}{(\gamma_{r}-\varphi_{t}(\omega_{1}))(\gamma_{k}-\varphi_{t}(\omega_{1}))}}{\frac{1}{N}\sum_{r=1}^{\bar{M}}\frac{K_{r}\gamma_{r}^{2}}{(\gamma_{r}-\omega_{1})(\gamma_{r}-\varphi_{t}(\omega_{1}))^{2}}}.
				\label{eq: g_music_mathcal_i_residue_a}
			\end{equation}
			However, it can be seen that only the residue evaluated at $\varphi_{1}(\omega_{1})$ contributes to the solution of $\mathcal{I}_{r,k}(\omega_{1})$ in \eqref{eq: second_order_general_separated_omega_2_g_music} as it is the only singularity of the roots of $1-\Omega(\omega_{1},\omega_{2})$ that is enclosed by the contour $\mathcal{C}_{\omega_{2}}$.
			
			The second type of singularities belongs to the poles of the integrand in \eqref{eq: second_order_general_separated_omega_2_g_music} at $\gamma_{t}$ for $t=1,\dots,\bar{M}$. The corresponding residue w.r.t. $\omega_{2}$ evaluated at $\gamma_{t}$ is given by
			\begin{equation}
				\begin{aligned}
					&\res{\frac{\frac{1}{(\gamma_{r}-\omega_{2})(\gamma_{k}-\omega_{2})}}{1-\Omega(\omega_{1},\omega_{2})}}{\gamma_{t}}= \begin{cases} \frac{\gamma_{t}-\omega_{1}}{\frac{1}{N} K_{t}\gamma_{t}^{2}} & \text{for } r=k=t\\
					0 & \text{for } r\neq k.
	 				\end{cases}
				\end{aligned}
				\label{eq: g_music_mathcal_i_residue_b}
			\end{equation}
			It can be observed that only the first eigenvalue $\gamma_{1}$ is enclosed by the contour $\mathcal{C}_{\omega_{2}}$. Hence, only the residue evaluated at $\gamma_{1}$ contributes to the closed-form expression of $\mathcal{I}_{r,k}(\omega_{1})$ in \eqref{eq: second_order_general_separated_omega_2_g_music} which is obtained by taking the negative sum (negative due to the negatively orientated contour) of the residue in \eqref{eq: g_music_mathcal_i_residue_a} evaluated at $\varphi_{1}(\omega_{1})$ and the residue in \eqref{eq: g_music_mathcal_i_residue_b} evaluated at $\gamma_{1}$		
			\begin{equation}
				\mathcal{I}_{r,k}(\omega_{1}) =\frac{\frac{1}{(\gamma_{r}-\varphi_{1}(\omega_{1}))(\gamma_{k}-\varphi_{1}(\omega_{1}))}}{\frac{1}{N}\sum_{r=1}^{\bar{M}}\frac{K_{r}\gamma_{r}^{2}}{(\gamma_{r}-\omega_{1})(\gamma_{r}-\varphi_{1}(\omega_{1}))^{2}}} - \frac{\gamma_{1}-\omega_{1}}{\frac{1}{N} K_{1}\gamma_{1}^{2}}.
				\label{eq: closed_form_expression_mathcal_I_g_music}
			\end{equation}
			Substituting the closed-form expression for $\mathcal{I}_{r,k}(\omega_{1})$ in \eqref{eq: closed_form_expression_mathcal_I_g_music} into the general expression of $\xi_{\operatorname{g}}(r,k)$ in \eqref{eq: second_order_general_separated_omega_1} and using $\bar{h}(z(\omega_{1}))=\frac{z(\omega_{1})}{\omega_{1}}\frac{\partial \omega(z_{1})}{\partial z_{1}}$ yields 
			\begin{align}
				\xi_{\operatorname{g}}(r,k) =& \frac{1}{2 \pi \mathrm{j}} \oint_{\mathcal{C}_{\omega_{1}}} \frac{\frac{\gamma_{r} \gamma_{k} }{(\gamma_{r}-\omega_{1})(\gamma_{k}-\omega_{1})(\gamma_{r}-\varphi_{1}(\omega_{1}))(\gamma_{k}-\varphi_{1}(\omega_{1}))}}{\frac{1}{N}\sum_{m=1}^{\bar{M}}\frac{K_{m}\gamma_{m}^{2}}{(\gamma_{m}-\omega_{1})(\gamma_{m}-\varphi_{1}(\omega_{1}))^{2}}}  \mathrm{d}\omega_{1} \nonumber \\
					&-\frac{1}{2\pi \mathrm{j}}\oint_{\mathcal{C}_{\omega_{1}}} \frac{1}{(\gamma_{r}-\omega_{1})(\gamma_{k}-\omega_{1})} \frac{\gamma_{1}-\omega_{1}}{\frac{1}{N}K_{1}\gamma_{1}^{2}} \mathrm{d}\omega_{1}. \label{eq: second_order_asymptotic_behavior_appendix_proof_g_music}
			\end{align}
			It can be observed that the integrand of the second contour integral in \eqref{eq: second_order_asymptotic_behavior_appendix_proof_g_music} exhibits a single singularity at $\gamma_{1}$ that lies within the contour $\mathcal{C}_{\omega_{1}}$. The corresponding residue w.r.t. $\omega_{1}$ evaluated at $\gamma_{1}$ is given by
			\begin{equation}
				\begin{aligned}
					\res{\frac{(\gamma_{t}-\omega_{1})\frac{1}{\frac{1}{N}K_{t}\gamma_{t}^{2}}}{(\gamma_{r}-\omega_{1})(\gamma_{k}-\omega_{1})} }{\gamma_{t}} = \begin{cases}
					\frac{-1}{\frac{1}{N}K_{t}\gamma_{t}^{2}},& \text{for } r=k=t\\
					0,& r\neq k.
					\end{cases}
				\end{aligned}
				\label{eq: g_music_mathcal_gamma_residue_a}
			\end{equation}
			Hence the closed-form solution for the second integral in \eqref{eq: second_order_asymptotic_behavior_appendix_proof_g_music} is given by the residue in \eqref{eq: g_music_mathcal_gamma_residue_a} evaluated at $\gamma_{1}$ for $r=k=1$ or zero otherwise. The first contour integral in \eqref{eq: second_order_asymptotic_behavior_appendix_proof_g_music} can be solved numerically by applying the change of variables $\mathrm{d}\omega=\frac{\partial \omega(x)}{\partial x} \mathrm{d}x$ and parameterizing the contour $\mathcal{C}_{\omega_{1}}$ by concatenation of $\omega(x)$ and $\omega(x)^{*}$ as proposed in \cite[Section IV]{a20} such that		
			\begin{equation}
				\begin{aligned}
					& \frac{1}{2\pi \mathrm{j}}\oint_{\mathcal{C}_{\omega_{1}}} \frac{\frac{1}{(\gamma_{r}-\omega_{1})(\gamma_{k}-\omega_{1})(\gamma_{r}-\varphi_{1}(\omega_{1}))(\gamma_{k}-\varphi_{1}(\omega_{1}))}}{\frac{1}{N}\sum_{m=1}^{\bar{M}}\frac{K_{m}\gamma_{m}^{2}}{(\gamma_{m}-\omega_{1})(\gamma_{m}-\varphi_{1}(\omega_{1}))^{2}}}  \mathrm{d}\omega_{1}\\
					=& \frac{1}{\pi}\int_{x_{1}^{-}}^{x_{1}^{+}} \ima{\frac{\frac{1}{|\gamma_{r}-\omega(x)|^{2} |\gamma_{k}-\omega(x)|^{2}}}{\frac{1}{N}\sum_{r=1}^{\bar{M}}\frac{K_{r}\gamma_{r}^{2}}{|\gamma_{r}-\omega(x)|^{2}(\gamma_{r}-\omega(x)^{*})}}\frac{\partial \omega(x)}{\partial x} }\mathrm{d}x
				\end{aligned}
				\label{eq: g_music_second_order_asymptotic_behavior_numerical_part_appendix}
			\end{equation}
			where we have used Lemma \ref{lemma: phi_values_converge_to_complex_conjugate_of_omega} to simplify the expression. The following Lemma allows to further simplify the real-valued integral in \eqref{eq: g_music_second_order_asymptotic_behavior_numerical_part_appendix}. 
			
			\begin{lemma}
				By parameterizing the contour $\mathcal{C}_{\omega}$ through concatenation of $\omega(x)$ in \eqref{eq: definition_omega_z} and $\omega(x)^{*}$ as proposed in \cite[Section IV]{a20} the following equality holds
				\begin{equation}
					\frac{1}{N}\sum_{r=1}^{\bar{M}} \frac{K_{r}\gamma_{r}^{2}}{|\gamma_{r}-\omega(x)|^{2} (\gamma_{r}-\omega(x))^{*}} = \frac{1}{2\mathrm{j}\ima{\omega(x)}\omega^{\prime}(x)^{*}} 
					\label{eq: equality_lemma}
				\end{equation}				
				where $\omega^{\prime}(x)$ is defined in \eqref{eq: first_order_derivative_wrt_x_omega_x}.
			\end{lemma}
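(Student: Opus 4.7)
The plan is to reduce the identity to three algebraic manipulations: a partial fraction decomposition of the summand, and two applications of known identities for $\omega(x)$ established earlier, namely \eqref{eq: absolute_value_omega_x_equals_one} and (the complex-conjugate counterpart of) \eqref{eq: first_order_derivative_wrt_x_omega_x}. For brevity I will write $\omega = \omega(x)$ and $\bar{\omega} = \omega(x)^{*}$, and set $c = \omega - \bar{\omega} = 2\mathrm{j}\,\ima{\omega(x)}$, so that $(\gamma_{r}-\bar{\omega}) = (\gamma_{r}-\omega) + c$. Using $|\gamma_{r}-\omega|^{2} = (\gamma_{r}-\omega)(\gamma_{r}-\bar{\omega})$, the summand on the left-hand side becomes $K_{r}\gamma_{r}^{2}/\bigl[(\gamma_{r}-\omega)(\gamma_{r}-\bar{\omega})^{2}\bigr]$.

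The first step is partial fractions in the variable $\gamma_{r}$, which gives
\begin{equation*}
\frac{1}{(\gamma_{r}-\omega)(\gamma_{r}-\bar{\omega})^{2}} \;=\; \frac{1}{c^{2}(\gamma_{r}-\omega)} - \frac{1}{c^{2}(\gamma_{r}-\bar{\omega})} - \frac{1}{c(\gamma_{r}-\bar{\omega})^{2}}.
\end{equation*}
Multiplying by $K_{r}\gamma_{r}^{2}/N$ and summing over $r$, the first two groups can be combined using $(\gamma_{r}-\omega)^{-1} - (\gamma_{r}-\bar{\omega})^{-1} = c/|\gamma_{r}-\omega|^{2}$, which produces $c^{-2}\cdot c \cdot \frac{1}{N}\sum_{r} K_{r}\gamma_{r}^{2}/|\gamma_{r}-\omega|^{2}$. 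By \eqref{eq: absolute_value_omega_x_equals_one} this equals $1/c$. The third group is handled by taking complex conjugates in \eqref{eq: first_order_derivative_wrt_x_omega_x}, which yields
\begin{equation*}
\frac{1}{N}\sum_{r=1}^{\bar{M}} \frac{K_{r}\gamma_{r}^{2}}{(\gamma_{r}-\bar{\omega})^{2}} \;=\; 1 - \bigl(\omega^{\prime}(x)^{*}\bigr)^{-1}.
\end{equation*}
Adding the three contributions collapses the constant parts and leaves $\frac{1}{c}\cdot\bigl(\omega^{\prime}(x)^{*}\bigr)^{-1} = \frac{1}{2\mathrm{j}\,\ima{\omega(x)}\,\omega^{\prime}(x)^{*}}$, which is the desired right-hand side.

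There is no real obstacle beyond bookkeeping; the only point that deserves care is the interpretation of $\omega(x)$, $\omega^{\prime}(x)$ and the identity \eqref{eq: absolute_value_omega_x_equals_one} at a real point $x \in [x_{1}^{-},x_{1}^{+}]$ on the support. These quantities must be understood as the non-tangential boundary values from $\mathbb{C}^{+}$ (i.e.\ $\omega(x) = \lim_{\varepsilon \downarrow 0}\omega(x+\mathrm{j}\varepsilon)$), for which $\ima{\omega(x)}>0$ in the interior of the support, $\omega(x)^{*}$ is the corresponding boundary value from $\mathbb{C}^{-}$ (consistent with the definition of the contour $\mathcal{C}_{\omega}$ as the concatenation of the two), and the differentiation/conjugation commute with the limit. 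Once these boundary values are in place, the algebra above applies verbatim and yields the identity.
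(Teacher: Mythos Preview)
Your proof is correct. Both your argument and the paper's rely on the same two ingredients: the boundary identity \eqref{eq: absolute_value_omega_x_equals_one} and the expression \eqref{eq: first_order_derivative_wrt_x_omega_x} for $\omega'(x)$ (taken in conjugate form). The difference is purely in the algebra used to combine them. The paper starts from $1/\omega'(x)$, replaces the constant $1$ by the sum in \eqref{eq: absolute_value_omega_x_equals_one}, and then carries out the simplification by splitting into real and imaginary parts $u(x),v(x)$ before conjugating and dividing by $2\mathrm{j}v(x)$. You instead start from the left-hand side and perform a partial-fraction decomposition of $1/[(\gamma_r-\omega)(\gamma_r-\bar\omega)^2]$, which lets you apply the two identities directly without ever unpacking into real and imaginary components. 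Your route is a little more systematic and avoids the component-wise bookkeeping; the paper's route makes the role of $\ima{\omega(x)}$ explicit from the outset. Either way the computation is a two-line algebraic verification once the two identities are in hand.
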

			\begin{proof}
				Let $u(x)$ and $v(x)$ denote the real and imaginary parts of $\omega(x)$ in \eqref{eq: definition_omega_z}, i.e., $\omega(x) = u(x)+\mathrm{j}v(x)$. Using the expression in \eqref{eq: absolute_value_omega_x_equals_one} we can express the inverse of the first order derivative of $\omega(x)$ with respect to $x$ in \eqref{eq: first_order_derivative_wrt_x_omega_x} as 
				\begin{align}
					&\frac{1}{\omega^{\prime}(x)} = \frac{1}{N} \sum_{r=1}^{\bar{M}}\frac{ K_{r}\gamma_{r}^{2} \lbrack(\gamma_{r}-\omega(x))^{2}-|\gamma_{r}-\omega(x)|^{2}\rbrack}{(\gamma_{r}-\omega(x))^{2}|\gamma_{r}-\omega(x)|^{2}} \label{eq: inverse_omega_prime_x}\\[-4pt]
					&=  \frac{1}{N} \sum_{r=1}^{\bar{M}} \frac{-K_{r}\gamma_{r}^{2}\lbrack 2\mathrm{j}v(x)(\gamma_{r}-u(x))+2v(x)^{2}\rbrack}{|\gamma_{r}-\omega(x)|^{2}\lbrack (\gamma_{r}-u(x))^{2} -v(x)^{2}-2\mathrm{j}v(x)(\gamma_{r}-u(x)) \rbrack}. \nonumber 
				\end{align}
				Taking the complex conjugate of $\frac{1}{\omega^{\prime}(x)}$ in \eqref{eq: inverse_omega_prime_x} and multiplying by $\frac{1}{2\mathrm{j}v(x)}$ one obtains the following expression for $\frac{1}{2\mathrm{j}v(x)\omega^{\prime}(x)^{*}}$  
				\begin{equation*}
					\begin{aligned}
						\frac{1}{N} \sum_{r=1}^{\bar{M}} \frac{K_{r}\gamma_{r}^{2} \lbrack (\gamma_{r}-u(x)) + \mathrm{j} v(x)\rbrack}{|\gamma_{r}-\omega(x)|^{2}\lbrack (\gamma_{r}-u(x))^{2} +2 \mathrm{j} v(x)(\gamma_{r}-u(x))-v(x)^{2}\rbrack}
					\end{aligned}		
				\end{equation*}
				which is identical to the one in \eqref{eq: equality_lemma}.
			\end{proof}
			Using the equalities established in \eqref{eq: equality_lemma} the real-valued integral in \eqref{eq: g_music_second_order_asymptotic_behavior_numerical_part_appendix} can equivalently be expressed as
			\begin{equation}
				\begin{aligned}
					\frac{2}{\pi}\int_{x_{1}^{-}}^{x_{1}^{+}} \frac{|\omega^{\prime}(x)|^{2} \ima{\omega(x)} }{|\gamma_{r}-\omega(x)|^{2}|\gamma_{k}-\omega(x)|^{2}}  \mathrm{d}x.
				\end{aligned}
				\label{eq: g_music_second_order_asymptotic_behavior_numerical_part_appendix_simplified}
			\end{equation}
			Finally, the solution for $\xi_{\operatorname{g}}(r,k)$ in \eqref{eq: g_music_second_order_xi} is obtained by summing the residue in \eqref{eq: g_music_mathcal_gamma_residue_a} for $t=1$ and \eqref{eq: g_music_second_order_asymptotic_behavior_numerical_part_appendix_simplified}. 			

% use section* for acknowledgment
%\section*{Acknowledgment}
%
%
%The authors would like to thank...

%\newpage

\bibliographystyle{IEEEtran}
\bibliography{references.bib}

%\newpage
%\begin{IEEEbiography}[{\includegraphics[width=1in,height=1.25in,clip,keepaspectratio]{biography/david_schenck.jpg}}]{David Schenck}
%\input{biography/bio_david_schenck}
%\end{IEEEbiography}
%\begin{IEEEbiography}[{\includegraphics[width=1in,height=1.25in,clip,keepaspectratio]{biography/xavier_mestre.jpg}}]{Xavier Mestre}
%	\input{biography/bio_xavier_mestre}
%\end{IEEEbiography}
%\begin{IEEEbiography}[{\includegraphics[width=1in,height=1.25in,clip,keepaspectratio]{biography/marius_pesavento.jpg}}]{Marius Pesavento}
%	\input{biography/bio_marius_pesavento}
%\end{IEEEbiography}

\end{document}